\documentclass[UKenglish,a4paper,pdfa,cleveref]{lipics-v2021}

\usepackage{booktabs,csquotes,etoolbox,mathtools,tkz-euclide,xparse}
\usepackage[ruled]{algorithm}
\usepackage[noend]{algpseudocode}

\MakeRobust{\Call}
\makeatletter
\patchcmd{\ALG@doentity}{\item[]\nointerlineskip}{}{}{}
\makeatother

\title{Uncertain Curve Simplification}
\author{Kevin Buchin}{Department of Mathematics and Computer Science, TU Eindhoven, Netherlands\and\url{https://www.win.tue.nl/~kbuchin/}}{k.a.buchin@tue.nl}{https://orcid.org/0000-0002-3022-7877}{}
\author{Maarten L\"offler}{Department of Information and Computing Sciences, Utrecht University, Netherlands\and\url{https://webspace.science.uu.nl/~loffl001/}}{m.loffler@uu.nl}{}{Partially supported by the Dutch Research Council (NWO) under project no.\@ 614.001.504.}
\author{Aleksandr Popov}{Department of Mathematics and Computer Science, TU Eindhoven, Netherlands\and\url{https://www.win.tue.nl/~apopov/}}{a.popov@tue.nl}{https://orcid.org/0000-0002-0158-1746}{Supported by the Dutch Research Council (NWO) under project no.\@ 612.001.801.}
\author{Marcel Roeloffzen}{Department of Mathematics and Computer Science, TU Eindhoven, Netherlands\and\url{https://www.win.tue.nl/~mroeloff/}}{m.j.m.roeloffzen@tue.nl}{}{Supported by the Dutch Research Council (NWO) under project no.\@ 628.011.005.}
\authorrunning{K.~Buchin, M.~L\"offler, A.~Popov, and M.~Roeloffzen}
\Copyright{Kevin Buchin, Maarten L\"offler, Aleksandr Popov, and Marcel Roeloffzen}
\begin{CCSXML}
<ccs2012>
    <concept>
        <concept_id>10003752.10010061.10010063</concept_id>
        <concept_desc>Theory of computation~Computational geometry</concept_desc>
        <concept_significance>500</concept_significance>
    </concept>
</ccs2012>
\end{CCSXML}
\ccsdesc[500]{Theory of computation~Computational geometry}
\keywords{Curves, Uncertainty, Simplification, Fr\'echet Distance, Hausdorff Distance}

\nolinenumbers
\hideLIPIcs

\theoremstyle{definition}
\newtheorem{problem}[theorem]{Problem}

\newcommand*{\eqdef}{\mathrel{\overset{\makebox[0pt]{\(\mathrm{\scriptscriptstyle def}\)}}{=}}}
\newcommand*{\reason}[1]{\mathrel{\hphantom{=}}\text{\{#1\}}}
\newcommand*{\fr}{d_\mathrm{F}}
\newcommand*{\hs}{d_\mathrm{H}}
\newcommand*{\R}{\mathbb{R}}
\newcommand*{\N}{\mathbb{N}}
\newcommand*{\Rt}{\R^2}
\newcommand*{\U}{\mathcal{U}}
\newcommand*{\Pm}{\mathcal{P}}
\newcommand*{\bO}{\mathcal{O}}
\newcommand*{\True}{\textnormal{True}}
\newcommand*{\False}{\textnormal{False}}
\DeclareMathOperator*{\argmin}{argmin}
\DeclareMathOperator*{\argmax}{argmax}

\ExplSyntaxOn
\NewDocumentCommand{\multiadjustlimits}{m}{
    \group_begin:
    \multiadjustlimits_measure:n { #1 }
    \multiadjustlimits_print:n { #1 }
    \group_end:
}
\tl_new:N  \l__multiadjustlimits_operator_tl
\tl_new:N  \l__multiadjustlimits_limit_tl
\cs_new_protected:Nn \multiadjustlimits_measure:n{
    \clist_map_function:nN { #1 } \__multiadjustlimits_measure:n
}
\cs_new_protected:Nn \__multiadjustlimits_measure:n{
    \__multiadjustlimits_measure:NNn #1
}
\cs_new_protected:Nn \__multiadjustlimits_measure:NNn{
    \tl_put_right:Nn \l__multiadjustlimits_operator_tl { #1 }
    \tl_put_right:Nn \l__multiadjustlimits_limit_tl { #3 }
}
\cs_new_protected:Nn \multiadjustlimits_print:n{
    \clist_map_function:nN { #1 } \__multiadjustlimits_print:n
}
\cs_new_protected:Nn \__multiadjustlimits_print:n{
    \__multiadjustlimits_print:NNn #1
}
\cs_new_protected:Nn \__multiadjustlimits_print:NNn{
    \mathop { \vphantom{\l__multiadjustlimits_operator_tl} \mathopen{} #1 }
    \limits
    \sb{ \vphantom{\l__multiadjustlimits_limit_tl} #3 }
}
\ExplSyntaxOff

\begin{document}
\maketitle

\begin{abstract}
We study the problem of polygonal curve simplification under uncertainty, where instead of a
sequence of exact points, each uncertain point is represented by a region, which contains the
(unknown) true location of the vertex.
The regions we consider are disks, line segments, convex polygons, and discrete sets of points.
We are interested in finding the shortest subsequence of uncertain points such that no matter what
the true location of each uncertain point is, the resulting polygonal curve is a valid
simplification of the original polygonal curve under the Hausdorff or the Fr\'echet distance.
For both these distance measures, we present polynomial-time algorithms for this problem.
\end{abstract}

\section{Introduction}
In this paper, we investigate the topic of curve simplification under uncertainty.
There are many classical algorithms dealing with curve simplification with different distance
metrics; however, it is typically assumed that the locations of points making up the curves are
known precisely, which is often not ideal when modelling real-life data.
An obvious example highlighting the necessity of taking uncertainty into account comes with GPS
data, where each measured location comes with inherent uncertainty due to the physical
characteristics of the measurement.
Curve simplification is often used as a first step to reduce the noise-to-signal ratio in the
trajectory data before applying other algorithms or when storing large amounts of data.
In both cases modelling uncertainty could reduce the error introduced by simplifying imprecise
measurements while maintaining a short, efficient representation of the data.

There is a large volume of foundational previous work in the area of curve
simplification~\cite{agarwal:2000}, including work on vertex-constrained
simplification, such as the well-known algorithms by Ramer and by Douglas and
Peucker~\cite{douglas_peucker,ramer} using the Hausdorff distance, by Imai and Iri~\cite{imai_iri}
using either the Hausdorff or the Fr\'echet distance, by Agarwal et al.~\cite{agarwal} using the
Fr\'echet distance, and various improvements and related
approaches~\cite{barequet,bringmann_simpl,buchin_progressive,chan_chin,gudmundsson,guibas,melkman,kerkhof}.
In particular, the basic approach of the Imai--Iri algorithm involves computing the \emph{shortcut
graph,} which captures all the possible simplifications of a curve, and then finding a shortest path
through the graph in terms of the number of edges from the start node to the end node, thus finding
the simplification with fewest edges.
We adapt this approach in our work to the setting with uncertainty.

There have recently been some advances in the study of uncertainty in computational geometry,
including work on maximising and minimising various measures on uncertain
points~\cite{jorgensen:2011,knauer_hausdorff,loeffler,prob_loeffler,loeffler:2006},
triangulations~\cite{buchin:2011,loeffler:2010,kreveld:2010}, visibility in uncertain
polygons~\cite{visibility}, moving points~\cite{minimising_ply}, and other
problems~\cite{agarwal:2016,agarwal:2017,driemel:2013,gray:2012,loeffler:2014,pei:2007,suri:2013}.
There is also work by Ahn et al.~\cite{ahn_imprecise}, and, more recently, by Buchin et
al.~\cite{uncertaincurves,popov} on various minimisation and maximisation variants involving curve
similarity with the Fr\'echet distance under uncertainty, as well as other work in combinations of
curve analysis and uncertainty~\cite{bbmm_segment,bbmm_patterns,sijben_dfd}.
To our knowledge, there is no previous work tackling curve simplification under uncertainty.

\begin{figure}
\centering
\includegraphics{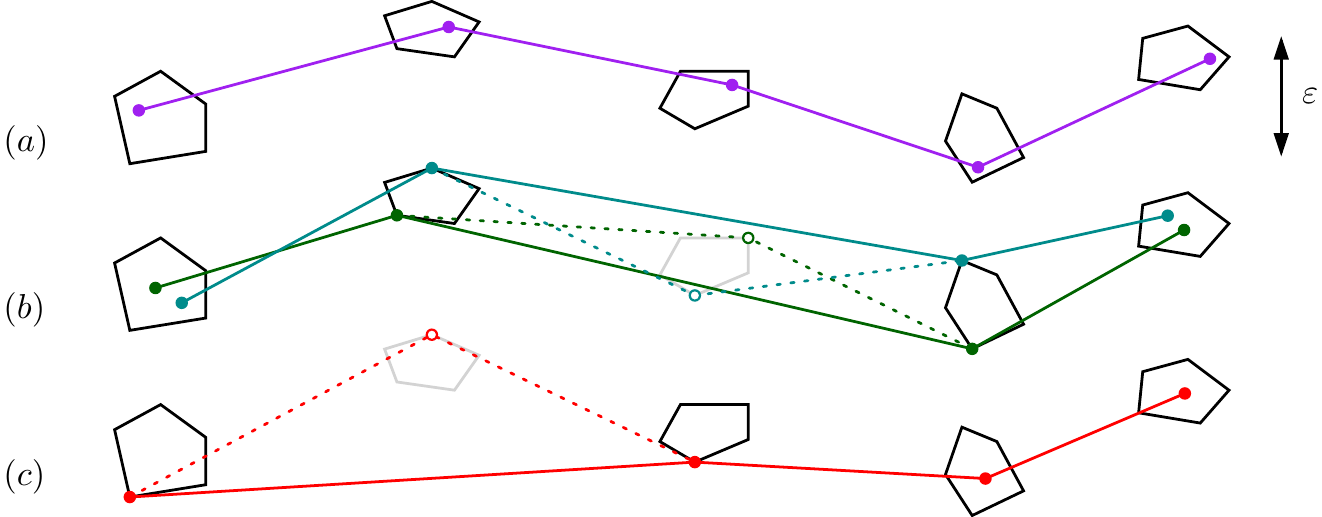}
\caption{(\textit{a}) An uncertain curve modelled with convex polygons and a potential realisation.
(\textit{b}) A valid simplification under the Hausdorff distance with the threshold \(\varepsilon\):
for every realisation, the subsequence is within Hausdorff distance \(\varepsilon\) from the full
sequence.
(\textit{c}) An invalid simplification under the Hausdorff distance with the threshold
\(\varepsilon\): there is a realisation for which the subsequence is not within Hausdorff distance
\(\varepsilon\) from the full sequence.}
\label{fig:example}
\end{figure}

In this paper, we adopt the locational model for the uncertain points: we know that each point
exists, but we do not know its precise location.
It can be represented as a discrete set of points, of which only one is the true location; we say
that this model uses \emph{indecisive} points.
We also use \emph{imprecise} points, modelled as a compact continuous set, such as disks, line
segments, or convex polygons; again, the true location is one (unknown) point from the set.
An \emph{uncertain curve} is a sequence of uncertain points of the same kind.
A \emph{realisation} of an uncertain curve is a precise polygonal curve obtained by taking one point
from each uncertain point.

In this paper, we solve the following problem, illustrated in \cref{fig:example}: \emph{given an
uncertain curve as a sequence of \(n\) uncertain points, find the shortest subsequence of the
uncertain points of the curve such that for any realisation of the curve, the corresponding
realisation of the subsequence is a valid simplification of that realisation.}
We present a family of efficient algorithms for this problem under both the Hausdorff and the
Fr\'echet distance, with the uncertain points modelled as indecisive points, as disks, as line
segments, and as convex polygons, shown in \cref{tab:runningtime}.

\begin{table}
\centering
\caption{Running time of our approach in each setting.
For indecisive points, \(k\) is the number of options per point.
For convex polygons, \(k\) is the number of vertices.}
\begin{tabular}{l c c c c}
\toprule
                   & Indecisive      & Disks        & Line segments & Convex polygons\\
\midrule
Hausdorff distance & \(\bO(n^3k^3)\) & \(\bO(n^3)\) & \(\bO(n^3)\)  & \(\bO(n^3k^3)\)\\
Fr\'echet distance & \(\bO(n^3k^3)\) & \(\bO(n^3)\) & \(\bO(n^3)\)  & \(\bO(n^3k^3)\)\\
\bottomrule
\end{tabular}
\label{tab:runningtime}
\end{table}

\section{Preliminaries}\label{sec:prelims}
Denote\footnote{We use \(\coloneqq\) and \(\eqqcolon\) to denote assignment, \(\eqdef\) for
equivalent quantities in definitions or to point out equality by earlier definition, and \(=\) in
other contexts.
We also use \(\equiv\), but its usage is always explained.}
\([n] \eqdef \{1, 2, \dots, n\}\) for any \(n \in \N^{> 0}\).
Given two points \(p, q \in \Rt\), denote their Euclidean distance with \(\lVert p - q\rVert\).

Denote a \emph{sequence} of points in \(\Rt\) with \(\pi = \langle p_1, \dots, p_n\rangle\).
For only two points \(p, q \in \Rt\), we also use \(pq\) instead of \(\langle p, q\rangle\).
Denote a subsequence of a sequence \(\pi\) from index \(i\) to \(j\) with
\(\pi[i: j] = \langle p_i, p_{i + 1}, \dots, p_j\rangle\).
This notation can also be applied if we interpret \(\pi\) as a \emph{polygonal curve} on \(n\)
vertices (of \emph{length} \(n\)).
It is defined by linearly interpolating between the successive points in the sequence and can be
seen as a continuous function, for \(i \in [n - 1]\) and \(\alpha \in [0, 1]\):
\[\pi(i + \alpha) = (1 - \alpha) p_i + \alpha p_{i + 1}\,.\]

We also introduce the notation for the order of points along a curve.
Let \(p \coloneqq \pi(a)\) and \(q \coloneqq \pi(b)\) for some \(a, b \in [1, n]\).
Then \(p \prec q\) iff \(a < b\), \(p \preccurlyeq q\) iff \(a \leq b\), and \(p \equiv q\) iff
\(a = b\).
Note that we can have \(p = q\) for \(a \neq b\) if the curve intersects itself.

Finally, given points \(p, q, r \in \Rt\), define the distance from \(p\) to the segment \(qr\) as
\[d(p, qr) \eqdef \min_{t \in qr} \lVert p - t\rVert\,.\]

An \emph{uncertainty region} \(U \subset \Rt\) describes a possible location of a true point: it has
to be inside the region, but there is no information as to where exactly.
We use several uncertainty models, so the regions \(U\) are of different shape.
An \emph{indecisive point} is a form of an uncertain point where the uncertainty region is
represented as a discrete set of points, and the true point is one of them:
\(U = \{p^1, \dots, p^k\}\), with \(k \in \N^{> 0}\) and \(p^i \in \Rt\) for all \(i \in [k]\).
\emph{Imprecise points} are modelled with uncertainty regions that are compact continuous sets.
In particular, we consider \emph{disks} and \emph{polygonal closed convex sets.}
We denote a disk with the centre \(c \in \Rt\) and the radius \(r \in \R^{\geq 0}\) as \(D(c, r)\).
Formally, \(D(c, r) \eqdef \{p \in \Rt \mid \lVert p - c \rVert \leq r\}\).
Define a \emph{polygonal closed convex set (PCCS)} as a closed convex set with bounded area that can
be described as the intersection of a \emph{finite} number of closed half-spaces.
Note that this definition includes both convex polygons and line segments (in~2D).
Given a PCCS \(U\), let \(V(U)\) denote the set of vertices of \(U\), i.e.\@ vertices of a convex
polygon or endpoints of a line segment.

We call a sequence of uncertainty regions an \emph{uncertain curve:}
\(\U = \langle U_1, \dots, U_n\rangle\).
If we pick a point from each uncertainty region of \(\U\), we get a polygonal curve \(\pi\) that we
call a \emph{realisation} of \(\U\) and denote it with \(\pi \Subset \U\).
That is, if for some \(n \in \N^{> 0}\) we have \(\pi = \langle p_1, \dots, p_n\rangle\) and
\(\U = \langle U_1, \dots, U_n\rangle\), then \(\pi \Subset \U\) if and only if \(p_i \in U_i\) for
all \(i \in [n]\).

Suppose we are given a polygonal curve \(\pi = \langle p_1, \dots, p_n\rangle\), a threshold
\(\varepsilon \in \R^{\geq 0}\), and a curve built on the subsequence of vertices of \(\pi\) for
some set \(I = \{i_1, \dots, i_\ell\} \subseteq [n]\), i.e.\@
\(\sigma = \langle p_{i_1}, \dots, p_{i_\ell}\rangle\) with \(i_j < i_{j + 1}\) for all
\(j \in [\ell - 1]\) and \(\ell \leq n\).
We call \(\sigma\) an \emph{\(\varepsilon\)-simplification} of \(\pi\) if for each segment
\(\langle p_{i_j}, p_{i_{j + 1}}\rangle\), we have
\(\delta(\langle p_{i_j}, p_{i_{j + 1}}\rangle, \pi[i_j: i_{j + 1}]) \leq \varepsilon\),
where \(\delta\) denotes some distance measure, e.g.\@ the Hausdorff or the Fr\'echet distance.

The \emph{Hausdorff distance} between two sets \(P, Q \subset \Rt\) is defined as
\[\hs(P, Q) \eqdef \max\big\{\adjustlimits\sup_{p \in P} \inf_{q \in Q} \lVert p - q\rVert,
\adjustlimits\sup_{q \in Q} \inf_{p \in P} \lVert p - q\rVert\big\}\,.\]
For two polygonal curves \(\pi\) and \(\sigma\) in \(\Rt\), since \(\pi\) and \(\sigma\) are closed
and bounded, we get
\[\hs(\pi, \sigma) = \max\big\{\adjustlimits\max_{p \in \pi} \min_{q \in \sigma} \lVert p - q\rVert,
\adjustlimits\max_{q \in \sigma} \min_{p \in \pi} \lVert p - q\rVert\big\}\,.\]

\begin{figure}[tb]
\begin{minipage}{.48\linewidth}
\begin{tikzpicture}[scale=.76]
\pgfmathsetmacro{\thr}{2}
\tkzDefPoint(0,0){p_1}
\tkzDefPoint(2,2){p_2}
\tkzDefPoint(3,-2){p_3}
\tkzDefPoint(7,1){p_4}
\tkzDefPoint(5,-1){p_5}
\tkzDefPoint(8,0){p_6}

\tkzDefPoint(2,0){s_2}
\tkzDefPoint(3,0){s_3}
\tkzDefPoint(7,0){s_4}
\tkzDefPoint(5,0){s_5}

\tkzDefPoint(3.5,1.5){a}
\tkzDefPoint(5.5,1.5){b}
\tkzDefPoint(4.5,1.5){e}

\tkzDrawSegments(p_1,p_2 p_2,p_3 p_3,p_4 p_4,p_5 p_5,p_6)
\tkzDrawSegments[thick,gray](p_1,p_6)
\tkzDrawSegments[thick,dotted](p_2,s_2 p_3,s_3 p_4,s_4 p_5,s_5 a,b)
\tkzDrawPoints(p_1,p_2,p_3,p_4,p_5,p_6,s_2,s_3,s_4,s_5)
\tkzLabelPoint[above](e){\(\varepsilon\)}
\tkzLabelPoints(p_1,p_3,p_5,p_6,s_3)
\tkzLabelPoints[above right](p_4,s_4)
\tkzLabelPoints[right](p_2)
\tkzLabelPoints[above](s_5)
\tkzLabelPoints[below](s_2)
\end{tikzpicture}
\end{minipage}\hfill%
\begin{minipage}{.48\linewidth}
\begin{tikzpicture}[scale=.76]
\pgfmathsetmacro{\thr}{2}
\tkzDefPoint(0,0){p_1}
\tkzDefPoint(2,2){p_2}
\tkzDefPoint(3,-2){p_3}
\tkzDefPoint(7,1){p_4}
\tkzDefPoint(5,-1){p_5}
\tkzDefPoint(8,0){p_6}

\tkzDefPoint(2,0){s_2}
\tkzDefPoint(3,0){s_3}
\tkzInterLC[R](p_1,p_6)(p_4,\thr cm)
\tkzGetSecondPoint{s_4}
\tkzDefShiftPoint[s_4](0,0){s_5}

\tkzDefPoint(3.5,1.5){a}
\tkzDefPoint(5.5,1.5){b}
\tkzDefPoint(4.5,1.5){e}

\tkzDrawSegments(p_1,p_2 p_2,p_3 p_3,p_4 p_4,p_5 p_5,p_6)
\tkzDrawSegments[thick,gray](p_1,p_6)
\tkzDrawSegments[thick,dotted](p_2,s_2 p_3,s_3 p_4,s_4 p_5,s_5 a,b)
\tkzDrawPoints(p_1,p_2,p_3,p_4,p_5,p_6,s_2,s_3,s_4,s_5)
\tkzLabelPoint[above](e){\(\varepsilon\)}
\tkzLabelPoints(p_1,p_3,p_5,p_6,s_3)
\tkzLabelPoints[right](p_2,p_4)
\tkzLabelPoints[above left](s_4)
\tkzLabelPoints[below left](s_5)
\tkzLabelPoints[below](s_2)
\end{tikzpicture}
\end{minipage}
\caption{Left: Alignment for the Hausdorff distance.
Right: Alignment for the Fr\'echet distance.
In both cases, the alignment is described as the sequence
\(\langle s_1 \coloneqq p_1, s_2, s_3, s_4, s_5, s_6 \coloneqq p_6\rangle\).}
\label{fig:alignments}
\end{figure}
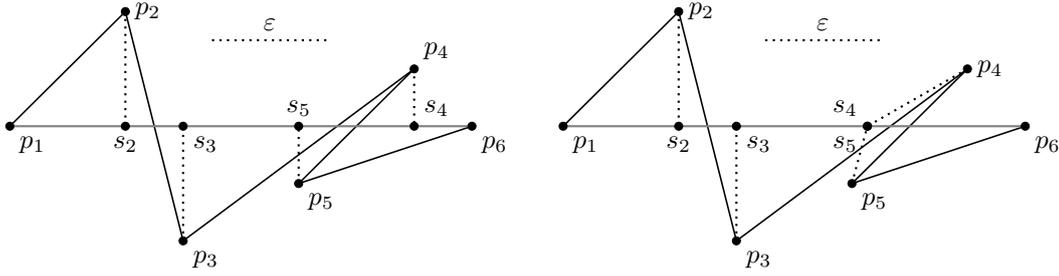

The \emph{Fr\'echet distance} is often described through an analogy with a person and a dog walking
along their respective curves without backtracking, where the Fr\'echet distance is the shortest
leash needed for such a walk.
Formally, consider a set of \emph{reparametrisations} \(\Phi_\ell\) of length \(\ell\), defined as
continuous non-decreasing surjective functions \(\phi: [0, 1] \to [1, \ell]\).
Given two polygonal curves \(\pi\) and \(\sigma\) of lengths \(m\) and \(n\), respectively,
we can define the Fr\'echet distance as
\[\fr(\pi, \sigma) \eqdef \adjustlimits\inf_{\alpha \in \Phi_m, \beta \in \Phi_n} \max_{t \in [0, 1]}
\lVert\pi(\alpha(t)) - \sigma(\beta(t))\rVert\,.\]
We refer to the pair of reparametrisations as an \emph{alignment.}
We often consider the Fr\'echet distance between a curve \(\pi = \langle p_1, \dots, p_n\rangle\)
and a line segment \(p_1p_n\), for some \(n \in \N\), \(n \geq 3\).
In this setting the alignment can be described in a more intuitive way; see also
\cref{fig:alignments}.
It can be described as a sequence of locations on the line segment to which
the vertices of the curves are matched, \(\langle s_2, \dots, s_{n - 1}\rangle\), where
\(s_i \in [1, 2]\) for all \(i \in \{2, \dots, n - 1\}\) and \(s_i \leq s_{i + 1}\) for all
\(i \in \{2, \dots, n - 2\}\).
To see that, assign \(s_1 \coloneqq 1\) and \(s_n \coloneqq 2\) and construct a helper
reparametrisation \(\phi: [0, 1] \to [1, n]\), defined as
\(\phi(t) = (n - 1) \cdot t + 1\) for any \(t \in [0, 1]\).
Construct another reparametrisation \(\psi: [1, n] \to [1, 2]\), defined as
\[\psi(t) = \begin{cases}
s_{\lfloor t\rfloor} \cdot (1 - t + \lfloor t\rfloor)
+ s_{\lfloor t\rfloor + 1} \cdot (t - \lfloor t\rfloor) & \text{if } t \in [1, n),\\
s_n & \text{if } t = n.
\end{cases}\]
Note that \(\phi\) and \(\psi \circ \phi\) satisfy the definition of reparametrisations for \(\pi\)
and \(p_1p_n\), respectively.

We also define an \emph{alignment} between a curve and a line segment for the Hausdorff distance
(see \cref{fig:alignments}).
It represents the map from the curve to the line segment, where each point on the curve is mapped to
the closest point on the line segment.
It is given by a sequence \(\langle s_1, \dots, s_n\rangle\), where \(s_i \in [1, 2]\) for all
\(i \in [n]\), such that \(p_1p_n(s_i) = \argmin_{p' \in p_1p_n} \lVert p' - p_i\rVert\).
In other words, \(p_1p_n(s_i)\) is the closest point to \(p_i\) for all \(i \in [n]\); as we show in
\cref{sec:intermediate_hausdorff}, the Hausdorff distance is realised as the distance between
\(p_i\) and \(p_1p_n(s_i)\) for some \(i \in [n]\).
Therefore, establishing such an alignment and checking that
\(\lVert p_1p_n(s_i) - p_i\rVert \leq \varepsilon\) for all \(i \in [n]\) allows us to check that
\(\hs(\pi, p_1p_n) \leq \varepsilon\) for some \(\varepsilon \in \R^{\geq 0}\).

We are discussing the following problem: given an uncertain curve
\(\U = \langle U_1, \dots, U_n\rangle\) with \(n \in \N\), \(n \geq 3\), and \(U_i \subset \Rt\) for
all \(i \in [n]\), and the threshold \(\varepsilon \in \R^{> 0}\), find a minimal-length
subsequence \(\U' = \langle U_{i_1}, \dots, U_{i_\ell}\rangle\) of \(\U\) with \(\ell \leq n\), such
that for any realisation \(\pi \Subset \U\), the corresponding realisation \(\pi' \Subset \U'\)
forms an \(\varepsilon\)-simplification of \(\pi\) under some distance measure \(\delta\).
We solve this problem both for the Hausdorff and the Fr\'echet distance for uncertainty
modelled with indecisive points, line segments, disks,
and convex polygons.

\section{Overview of the Approach}\label{sec:overview}
In this \lcnamecref{sec:overview}, we present the short description of our approach in different
settings.
We work out the details and show correctness in \cref{sec:intermediate,sec:shortcut,sec:graph}.

On the highest level, we use the \emph{shortcut graph.}
Each uncertain point of an uncertain curve \(\U = \langle U_1, \dots, U_n\rangle\) corresponds to a
vertex.
An edge connects two vertices \(i\) and \(j\) if and only if the distance between any realisation of
\(\U[i: j]\) and the corresponding line segment from \(U_i\) to \(U_j\) is below the threshold.
The path with the least edges from vertex \(1\) to vertex \(n\) then corresponds to the
simplification using least uncertain points.
So, we construct the shortcut graph and find the shortest path between two vertices.
The key idea is that we find shortcuts that are valid for \emph{all} realisations, so any sequence
of shortcuts can be chosen.
We discuss this in \cref{sec:graph}.

In order to construct the shortcut graph, we need to check whether an edge should be added to the
graph, i.e.\@ whether a shortcut is \emph{valid.}
The approach is different for the Hausdorff and the Fr\'echet distance and for each uncertainty
model.
For the first and the last uncertain point of the shortcut, we state in \cref{sec:shortcut} that
there are several critical pairs of realisations that need to be tested explicitly, and then for any
other pair of realisations, we know that the distance is also below the threshold.
Testing each pair corresponds to finding the distance between a precise line segment and any
realisation of an uncertain curve; we show the simple procedures to do this in
\cref{sec:intermediate}.

\section{Shortcut Testing: Intermediate Points}\label{sec:intermediate}
In this \lcnamecref{sec:intermediate}, we discuss testing a single shortcut where we fix the
realisations of the first and the last uncertain point.
We start by showing some basic facts about the Hausdorff and the Fr\'echet distance in the precise
setting, and then we use them to design simple algorithms for testing shortcuts in the uncertain
settings.
We answer the following problem.
\begin{problem}\label{prob:intermediate}
Given an uncertain curve \(\U = \langle U_1, \dots, U_n\rangle\) on \(n \in \N\), \(n \geq 3\)
uncertain points in \(\Rt\), as well as realisations \(p_1 \in U_1\), \(p_n \in U_n\), check if the
largest Hausdorff or Fr\'echet distance between \(\U\) and its one-segment simplification is below a
threshold \(\varepsilon \in \R^{> 0}\) for any realisation with the fixed start and end points,
i.e.\@ for \(\delta \coloneqq \hs\) or \(\delta \coloneqq \fr\), verify
\[\max_{\pi \Subset \U, \pi(1) \equiv p_1, \pi(n) \equiv p_n} \delta(\pi, p_1p_n)
\leq \varepsilon\,.\]
\end{problem}

\subsection{Hausdorff Distance}\label{sec:intermediate_hausdorff}
We start by showing some useful facts about the Hausdorff distance in the precise setting.
We then solve \cref{prob:intermediate} for \(\delta \coloneqq \hs\).
\begin{lemma}\label{lem:hausdorff_sub}
Given \(n \in \N^{> 0}\) and a precise curve \(\pi = \langle p_1, \dots, p_n\rangle\)
with \(p_i \in \Rt\) for all \(i \in [n]\), we have that for any \(q \in p_1p_n\) with
\(q \prec p_n\), there is some \(i \in [n - 1]\) such that
\(s \coloneqq \argmin_{r \in p_1p_n} \lVert p_i - r\rVert\),
\(t \coloneqq \argmin_{r \in p_1p_n} \lVert p_{i + 1} - r\rVert\), and \(s \preccurlyeq q \prec t\).
\end{lemma}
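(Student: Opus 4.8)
The plan is to reduce the statement to a straightforward \emph{first-crossing} argument along the segment $p_1p_n$. First I would observe that for each $i \in [n]$ the nearest point $s_i \coloneqq \argmin_{r \in p_1p_n}\lVert p_i - r\rVert$ is well defined and unique, because the squared distance from the fixed point $p_i$ to a point of $p_1p_n$, written in the segment's parameter, is a strictly convex quadratic (assuming $p_1 \ne p_n$, so that the segment is non-degenerate; the case $p_1 = p_n$ is trivial). Consequently each $s_i$ has a unique parameter $\sigma_i \in [1,2]$ with $p_1p_n(\sigma_i) = s_i$, and on points of $p_1p_n$ the order $\preccurlyeq$ is precisely the order of these parameters.

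The key observation is that the two extreme vertices are their own nearest points on the segment: $s_1 = p_1 = p_1p_n(1)$ and $s_n = p_n = p_1p_n(2)$, since $p_1$ and $p_n$ already lie on $p_1p_n$; thus $\sigma_1 = 1$ and $\sigma_n = 2$. Now fix $q \in p_1p_n$ with $q \prec p_n$, say $q = p_1p_n(b)$ with $1 \le b < 2$. Then $s_1 \preccurlyeq q$ (as $\sigma_1 = 1 \le b$) and $s_n \succ q$ (as $\sigma_n = 2 > b$).

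Next I would define $j \coloneqq \min\{k \in [n] : s_k \succ q\}$. This set contains $n$, so it is nonempty and $j$ is well defined; moreover $j \ge 2$, since $s_1 \preccurlyeq q$ excludes $k = 1$. Setting $i \coloneqq j - 1 \in [n-1]$ then yields the conclusion at once: $t = s_{i+1} = s_j \succ q$ by the definition of $j$, while $s = s_i = s_{j-1}$ satisfies $s_{j-1} \not\succ q$, i.e.\ $s \preccurlyeq q$, by the minimality of $j$. Hence $s \preccurlyeq q \prec t$.

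I do not expect a genuine obstacle. The two points deserving a little care are that the nearest-point map must be single-valued --- hence the appeal to strict convexity on a non-degenerate segment --- and that the sequence $s_1, \dots, s_n$ need not be monotone along $p_1p_n$, because $\pi$ may backtrack; this is exactly why one should take the \emph{first} index whose nearest point overshoots $q$, rather than hoping $q$ falls between two consecutive $s_i$. Finally, the hypothesis $q \prec p_n$ (rather than $q \preccurlyeq p_n$) is what guarantees $s_n = p_n \succ q$ and thus that the set defining $j$ is nonempty.
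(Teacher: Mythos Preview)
Your argument is correct and is essentially the same as the paper's: both use that the projections satisfy \(s_1 = p_1 \preccurlyeq q\) and \(s_n = p_n \succ q\), and then locate the first index where the projection passes \(q\). The paper phrases this as a contradiction-plus-induction argument, whereas you take the minimum of the nonempty set \(\{k : s_k \succ q\}\) directly; your version is a bit cleaner and also makes explicit the well-definedness of the \(\argmin\), which the paper glosses over.
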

\begin{proof}
Assume this is not the case and pick a point \(q \in p_1p_n \setminus p_n\) that forms a
counterexample.
We now have for all \(i \in [n - 1]\) and the definitions of \(s\) and \(t\) given above, that
\(s \preccurlyeq q \implies t \preccurlyeq q\).
Clearly, for \(i = 1\) we have \(s \equiv p_1\) and so \(s \preccurlyeq q\).
By induction on \(i\), we can conclude that for all \(i \in [n]\),
\(\argmin_{r \in p_1p_n} \lVert p_i - r\rVert \preccurlyeq q\).
In particular, as \(\argmin_{r \in p_1p_n} \lVert p_n - r\rVert = p_n\), this means that
\(p_n \preccurlyeq q\).
However, we picked \(q \prec p_n\).
This is a contradiction, so the lemma holds.
\end{proof}

\begin{lemma}\label{lem:distance_segm}
Given four points \(a, b, c, d \in \Rt\) forming segments \(ab\) and \(cd\), the highest distance
from one segment to the other is achieved at an endpoint:
\[\max_{p \in ab} d(p, cd) = \max\big\{d(a, cd), d(b, cd)\big\}\,.\]
\end{lemma}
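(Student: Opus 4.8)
The plan is to exploit convexity: the function $f(p) \eqdef d(p, cd)$ is convex in $p$, because $d(\cdot, cd) = \min_{t \in cd} \lVert \cdot - t \rVert$ is the distance to a convex set, and the distance to a convex set is a convex function. Restricting a convex function to the segment $ab$ gives a convex function of the single parameter $\alpha \in [0,1]$ via $p(\alpha) = (1-\alpha)a + \alpha b$. A convex function on a compact interval attains its maximum at an endpoint of the interval, so $\max_{p \in ab} f(p) = \max\{f(a), f(b)\} = \max\{d(a, cd), d(b, cd)\}$, which is exactly the claim.

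First I would state and justify the convexity of $p \mapsto d(p, cd)$. One clean way: for points $p, p' \in \Rt$ and $\lambda \in [0,1]$, let $t \in cd$ realise $d(p, cd)$ and $t' \in cd$ realise $d(p', cd)$; then $\lambda t + (1-\lambda) t' \in cd$ since $cd$ is convex, and by the triangle inequality $d(\lambda p + (1-\lambda)p', cd) \leq \lVert (\lambda p + (1-\lambda)p') - (\lambda t + (1-\lambda)t')\rVert \leq \lambda \lVert p - t\rVert + (1-\lambda)\lVert p' - t'\rVert = \lambda\, d(p, cd) + (1-\lambda)\, d(p', cd)$. Second, I would note that the composition of this convex function with the affine map $\alpha \mapsto (1-\alpha)a + \alpha b$ is convex on $[0,1]$, and invoke the standard fact that a convex function on $[0,1]$ satisfies $g(\alpha) \leq \max\{g(0), g(1)\}$ for all $\alpha$, which follows directly from $g((1-\alpha)\cdot 0 + \alpha \cdot 1) \leq (1-\alpha) g(0) + \alpha g(1) \leq \max\{g(0), g(1)\}$. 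Combining these gives the inequality $\max_{p \in ab} d(p, cd) \leq \max\{d(a, cd), d(b, cd)\}$, and the reverse inequality is trivial since $a, b \in ab$.

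There is essentially no obstacle here; the only thing to be slightly careful about is that the argmin point $t \in cd$ realising the distance exists, which holds because $cd$ is compact and the norm is continuous. An alternative, more elementary route avoiding the word "convex" would be to reason directly with the triangle inequality along the segment $ab$: for any $p = (1-\alpha)a + \alpha b$, pick the point on $cd$ closest to the corresponding convex combination of the closest points to $a$ and $b$, and bound as above — but the convexity phrasing is cleaner. I would present the convexity argument as the main line, since it is short and transparent.
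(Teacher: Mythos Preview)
Your proof is correct and takes a different, though closely related, route from the paper. The paper fixes \(\varepsilon \coloneqq \max\{d(a,cd),d(b,cd)\}\) and shows that the sublevel set \(S=\{r\mid d(r,cd)\le\varepsilon\}\) is convex by recognising it as the Minkowski sum of the disk \(D(0,\varepsilon)\) with the segment \(cd\); since \(a,b\in S\) and \(S\) is convex, the whole segment \(ab\) lies in \(S\), giving the desired bound. You instead prove directly that the function \(p\mapsto d(p,cd)\) is convex (using convexity of \(cd\) and the triangle inequality) and then invoke the elementary fact that a convex function on an interval is bounded by the maximum of its endpoint values. Your argument is slightly more general in that it establishes convexity of the function itself rather than convexity of a single sublevel set, and it avoids the Minkowski-sum machinery; the paper's argument is more overtly geometric. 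Both are short and standard, and neither has any real advantage over the other for this lemma.
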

\begin{proof}
As \(a, b \in ab\), trivially we get
\(\max_{p \in ab} d(p, cd) \geq \max\big\{d(a, cd), d(b, cd)\big\}\),
so it remains to show that \(\max_{p \in ab} d(p, cd) \leq \max\big\{d(a, cd), d(b, cd)\big\}\).
Consider two sets \(S_1 \coloneqq \{p \mid \lVert p\rVert \leq \varepsilon\}\) and
\(S_2 \coloneqq cd\), with \(\varepsilon \coloneqq \max\big\{d(a, cd), d(b, cd)\big\}\).
Take their Minkowski sum:
\begin{align*}
S &\coloneqq \{p + q \mid p \in S_1, q \in S_2\}\\
&\eqdef \{p + q \mid \lVert p\rVert \leq \varepsilon, q \in cd\}\\
&= \{r \mid \lVert r - q\rVert \leq \varepsilon, q \in cd\}\\
&= \{r \mid \min_{q \in cd} \lVert r - q\rVert \leq \varepsilon\}\\
&\eqdef \{r \mid d(r, cd) \leq \varepsilon\}\,.
\end{align*}
Note that both sets are convex: \(S_1\) is a disk and \(S_2\) is a line segment.
Then their Minkowski sum \(S\) is also convex.
By definition of \(S\) and \(\varepsilon\), we have \(a, b \in S\).
By definition of a convex set, we conclude that \(ab \in S\), so
\(\max_{p \in ab} d(p, cd) \leq \max\big\{d(a, cd), d(b, cd)\big\}\), and the statement of the
lemma holds.
\end{proof}

\begin{lemma}\label{lem:hausdorff_precise}
Given \(n \in \N^{> 0}\), for any precise curve \(\pi = \langle p_1, \dots, p_n\rangle\)
with \(p_i \in \Rt\) for all \(i \in [n]\), we have
\[\hs(\pi, p_1p_n) = \max_{i \in [n]} d(p_i, p_1p_n)\,.\]
\end{lemma}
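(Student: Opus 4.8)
The plan is to split the Hausdorff distance into its two directed parts. Writing $h_1 \eqdef \max_{p \in \pi} \min_{q \in p_1p_n} \lVert p - q\rVert = \max_{p \in \pi} d(p, p_1p_n)$ and $h_2 \eqdef \max_{q \in p_1p_n} \min_{p \in \pi} \lVert p - q\rVert$, the closed–bounded form of $\hs$ from the preliminaries gives $\hs(\pi, p_1p_n) = \max\{h_1, h_2\}$. Set $\varepsilon \eqdef \max_{i \in [n]} d(p_i, p_1p_n)$. I will show that (i) $h_1 = \varepsilon$ and (ii) $h_2 \leq \varepsilon$; since moreover $h_1 \geq \varepsilon$ trivially (every vertex $p_i$ lies on $\pi$), this yields $\hs(\pi, p_1p_n) = \varepsilon$. (For $n \leq 2$ the claim is immediate, so the interesting case is $n \geq 3$.)

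For (i): decompose $\pi$ into its edges, so $h_1 = \max_{i \in [n-1]} \max_{p \in p_ip_{i+1}} d(p, p_1p_n)$. Applying \cref{lem:distance_segm} with $ab = p_ip_{i+1}$ and $cd = p_1p_n$, each inner maximum equals $\max\{d(p_i, p_1p_n), d(p_{i+1}, p_1p_n)\}$; taking the maximum over $i \in [n-1]$ gives exactly $\varepsilon$.

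For (ii): fix $q \in p_1p_n$; I must show $\min_{p \in \pi} \lVert p - q\rVert \leq \varepsilon$. If $q \equiv p_n$ then $q = p_n \in \pi$ and the quantity is $0$, so assume $q \prec p_n$. Apply \cref{lem:hausdorff_sub}: there is an index $i \in [n-1]$ such that, with $s \eqdef \argmin_{r \in p_1p_n} \lVert p_i - r\rVert$ and $t \eqdef \argmin_{r \in p_1p_n} \lVert p_{i+1} - r\rVert$, we have $s \preccurlyeq q \prec t$; in particular $q$ lies on the sub-segment of $p_1p_n$ between $s$ and $t$, so $q = (1-\lambda)s + \lambda t$ for some $\lambda \in [0,1]$. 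By the definitions of $s$ and $t$, $\lVert p_i - s\rVert = d(p_i, p_1p_n) \leq \varepsilon$ and $\lVert p_{i+1} - t\rVert = d(p_{i+1}, p_1p_n) \leq \varepsilon$. Put $u \eqdef (1-\lambda)p_i + \lambda p_{i+1}$, a point of the edge $p_ip_{i+1} \subseteq \pi$. Then $u - q = (1-\lambda)(p_i - s) + \lambda(p_{i+1} - t)$, so by the triangle inequality $\lVert u - q\rVert \leq (1-\lambda)\lVert p_i - s\rVert + \lambda\lVert p_{i+1} - t\rVert \leq \varepsilon$, and hence $\min_{p \in \pi}\lVert p - q\rVert \leq \lVert u - q\rVert \leq \varepsilon$.

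I expect the bound (ii) on the reverse directed distance to be the only real obstacle: the forward direction (i) is a direct consequence of the convexity argument already packaged in \cref{lem:distance_segm}, whereas the reverse direction genuinely relies on the structural monotonicity of closest-point projections from \cref{lem:hausdorff_sub}, which is exactly what guarantees that every point of $p_1p_n$ is sandwiched between the projections of two consecutive vertices and can therefore be charged, via the linear pairing above, to the corresponding edge of $\pi$. One small point to watch is coverage of the ends of $p_1p_n$: the case $q \equiv p_n$ is dispatched separately above, while $q \equiv p_1$ needs no special treatment since for $i = 1$ the projection $s$ equals $p_1$, so \cref{lem:hausdorff_sub} still applies.
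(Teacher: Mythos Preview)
Your proof is correct and follows essentially the same route as the paper: split the Hausdorff distance into its two directed parts, handle $h_1$ via \cref{lem:distance_segm} edge by edge, and handle $h_2$ via \cref{lem:hausdorff_sub} with the case $q\equiv p_n$ separated. The only cosmetic difference is in the bound for $h_2$: the paper observes $d(s,p_ip_{i+1})\leq\varepsilon$ and $d(t,p_ip_{i+1})\leq\varepsilon$ and then invokes \cref{lem:distance_segm} a second time (with the roles of the two segments swapped) to conclude $d(q,p_ip_{i+1})\leq\varepsilon$, whereas you carry out the underlying convex-combination estimate by hand with the explicit witness $u=(1-\lambda)p_i+\lambda p_{i+1}$.
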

\begin{proof}
Recall the definition of Hausdorff distance in this setting:
\[\hs(\pi, p_1p_n) = \max\Big\{\max_{p \in \pi} \min_{q \in p_1p_n} \lVert p - q\rVert,
\max_{q \in p_1p_n} \min_{p \in \pi} \lVert p - q\rVert\Big\}\,.\]

We first show that \(\max_{q \in p_1p_n} \min_{p \in \pi} \lVert p - q\rVert \leq
\max_{p \in \pi} \min_{q \in p_1p_n} \lVert p - q\rVert \eqqcolon \varepsilon\).
We do a case distinction on \(q \in p_1p_n\) and show that for all \(q\), we have
\(\min_{p \in \pi} \lVert p - q\rVert \leq \varepsilon\).
\begin{itemize}
    \item \(q \equiv p_n\).
    Note \(p_n \in \pi\), so \(\min_{p \in \pi} \lVert p - q\rVert = 0 \leq \varepsilon\).
    \item \(q \prec p_n\).
    Using \cref{lem:hausdorff_sub}, we can find \(i \in [n - 1]\) and the corresponding \(s\) and
    \(t\) such that \(s \preccurlyeq q \prec t\).
    As \(\max_{p \in \pi} d(p, p_1p_n) \eqdef \varepsilon\),
    \(d(p_i, p_1p_n) = \lVert p_i - s\rVert \leq \varepsilon\) and
    \(d(p_{i + 1}, p_1p_n) = \lVert p_{i + 1} - t\rVert \leq \varepsilon\).
    But then also \(d(s, p_ip_{i + 1}) \leq \lVert s - p_i\rVert \leq \varepsilon\) and
    \(d(t, p_ip_{i + 1}) \leq \lVert t - p_{i + 1}\rVert \leq \varepsilon\).
    By \cref{lem:distance_segm}, we conclude that
    \(\max_{r \in st} d(r, p_ip_{i + 1}) \leq \varepsilon\).
    As \(s \preccurlyeq q \prec t\), we have \(q \in st\), so
    \(d(q, p_ip_{i + 1}) \leq \varepsilon\).
\end{itemize}
This covers all cases, so indeed for all \(q \in p_1p_n\),
\(\min_{p \in \pi} \lVert p - q\rVert \leq \varepsilon\), and hence we conclude
\(\max_{q \in p_1p_n} \min_{p \in \pi} \lVert p - q\rVert \leq
\max_{p \in \pi} \min_{q \in p_1p_n} \lVert p - q\rVert\).
We can derive
\begin{align*}
\hs(\pi, p_1p_n) &= \max\Big\{\max_{p \in \pi} \min_{q \in p_1p_n} \lVert p - q\rVert,
\max_{q \in p_1p_n} \min_{p \in \pi} \lVert p - q\rVert\Big\}\\
&= \max_{p \in \pi} \min_{q \in p_1p_n} \lVert p - q\rVert\\
&\eqdef \max_{p \in \pi} d(p, p_1p_n)\\
&\eqdef \max_{i \in [n - 1]} \max_{p \in p_ip_{i + 1}} d(p, p_1p_n)\\
&\reason{\cref{lem:distance_segm}}\\
&= \max_{i \in [n - 1]} \max\big\{d(p_i, p_1p_n), d(p_{i + 1}, p_1p_n)\big\}\\
&\eqdef \max_{i \in [n]} d(p_i, p_1p_n)\,,
\end{align*}
as was to be shown.
\end{proof}

\paragraph*{Indecisive points.}
We are now ready to generalise the setting to include imprecision.
We first show that the straightforward setting with indecisive points permits an easy solution using
\cref{lem:hausdorff_precise}.
\begin{lemma}\label{lem:hausdorff_ind}
Given \(n, k \in \N^{> 0}\), \(n \geq 3\), for any indecisive curve
\(\U = \langle U_1, \dots, U_n\rangle\) with \(U_i = \{p_i^1, \dots, p_i^k\}\) for all \(i \in [n]\)
and \(p_i^j \in \Rt\) for all \(i \in [n]\), \(j \in [k]\), and given some \(p_1 \in U_1\) and
\(p_n \in U_n\), we have
\[\max_{\pi \Subset \U, \pi(1) \equiv p_1, \pi(n) \equiv p_n} \hs(\pi, p_1p_n) =
\max_{i \in \{2, \dots, n - 1\}} \max_{j \in [k]} d(p_i^j, p_1p_n)\,.\]
\end{lemma}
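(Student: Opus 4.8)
The plan is to reduce the uncertain statement to the precise one established in \cref{lem:hausdorff_precise} by interchanging the order of two maximisations. Concretely, using \cref{lem:hausdorff_precise} on each realisation, we have
\[\max_{\pi \Subset \U, \pi(1) \equiv p_1, \pi(n) \equiv p_n} \hs(\pi, p_1p_n)
= \max_{\pi \Subset \U, \pi(1) \equiv p_1, \pi(n) \equiv p_n} \max_{i \in [n]} d(\pi(i), p_1p_n)\,.\]
Since the maximum over realisations is really a maximum over independent choices \(p_i \in U_i\) for \(i \in \{2, \dots, n-1\}\) (with \(p_1, p_n\) fixed), and the quantity \(\max_{i \in [n]} d(\pi(i), p_1p_n)\) is a maximum of terms each depending on only one \(p_i\), I would swap the two maxima: the maximum over the product set \(U_2 \times \dots \times U_{n-1}\) of a pointwise maximum equals the pointwise maximum of the per-coordinate maxima. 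Formally, for each \(i\), \(\max_{\pi} d(\pi(i), p_1p_n) = \max_{p_i \in U_i} d(p_i, p_1p_n) = \max_{j \in [k]} d(p_i^j, p_1p_n)\), because the other coordinates do not influence this term and can be chosen arbitrarily. The terms \(i = 1\) and \(i = n\) contribute \(d(p_1, p_1p_n) = d(p_n, p_1p_n) = 0\), so they drop out of the outer maximum (assuming the right-hand side is attained at some interior index, which it is as long as \(n \geq 3\); if all interior distances were \(0\) the equality still holds trivially with both sides \(0\)).

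The key steps in order are: (1) apply \cref{lem:hausdorff_precise} to rewrite the left-hand side as a nested maximum over realisations and over vertex indices; (2) argue that the maximum over realisations with fixed endpoints decomposes as independent maxima over \(U_2, \dots, U_{n-1}\); (3) interchange the finite maximum over indices \(i\) with the maximum over the realisation, justified because the objective is a finite maximum of single-coordinate functions over a product domain; (4) evaluate the inner maximum for each \(i\) as \(\max_{j \in [k]} d(p_i^j, p_1p_n)\) and note the boundary indices contribute zero; (5) conclude the claimed identity. Each of these is a short argument; the only thing requiring a little care is making the max-interchange rigorous, i.e.\ the elementary fact that \(\max_{(x_2,\dots,x_{n-1})} \max_i f_i(x_i) = \max_i \max_{x_i} f_i(x_i)\) over a product set, which one direction follows by choosing the maximiser coordinate-wise and the other by monotonicity.

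I do not expect a serious obstacle here: the statement is essentially a bookkeeping lemma that lifts \cref{lem:hausdorff_precise} to the indecisive setting, and the nontrivial geometric content has already been absorbed into \cref{lem:distance_segm} and \cref{lem:hausdorff_precise}. The mild subtlety is purely notational: one must be careful that ``\(\pi \Subset \U\) with \(\pi(1) \equiv p_1\), \(\pi(n) \equiv p_n\)'' ranges exactly over the choices of interior points, and that the maximum is well-defined (each \(U_i\) is a nonempty finite set, so all maxima are attained). I would state the max-interchange as an inline observation rather than a separate lemma, and otherwise the proof is a two or three line chain of equalities mirroring the displayed derivation at the end of the proof of \cref{lem:hausdorff_precise}.
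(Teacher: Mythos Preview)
Your proposal is correct and follows essentially the same approach as the paper: apply \cref{lem:hausdorff_precise} to each realisation, swap the two maxima using the independence of the coordinate choices, drop the boundary indices since they contribute zero, and rewrite the inner maximum over \(U_i\) as a maximum over \(j \in [k]\). The paper presents this as a terse three-line chain of equalities, whereas you spell out the max-interchange justification more explicitly, but there is no substantive difference.
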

\begin{proof}
Assume the setting of the lemma statement.
Derive
\begin{align*}
&\max_{\pi \Subset \U, \pi(1) \equiv p_1, \pi(n) \equiv p_n} \hs(\pi, p_1p_n)\\
&\reason{\cref{lem:hausdorff_precise}}\\
&= \max_{\pi \Subset \U, \pi(1) \equiv p_1, \pi(n) \equiv p_n} \max_{i \in [n]} d(\pi(i), p_1p_n)\\
&\reason{Def.~\(\Subset\), \(d(p_1, p_1p_n) = d(p_n, p_1p_n) = 0\)}\\
&= \max_{i \in \{2, \dots, n - 1\}} \max_{p \in U_i} d(p, p_1p_n)\\
&\reason{Def.~indecisive point}\\
&= \max_{i \in \{2, \dots, n - 1\}} \max_{j \in [k]} d(p_i^j, p_1p_n)\,,
\end{align*}
as was to be shown.
\end{proof}
Note that this means that when the start and end realisations are fixed, we can test that a
shortcut is valid using the lemma above in time \(\bO(nk)\) for a shortcut of length \(n\).

\paragraph*{Disks.}
We proceed to present the way to test shortcuts for fixed realisations of the first and the last
points when the imprecision is modelled using disks.
In the next arguments the following well-known form of a triangle inequality is useful.
\begin{lemma}\label{lem:triangle_sets}
Given a metric space \((X, d)\) and a non-empty subset \(S \subset X\), \(S \neq \emptyset\),
for any \(x, y \in X\),
\[d(x, S) \leq d(x, y) + d(y, S)\,.\]
\end{lemma}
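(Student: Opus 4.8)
The statement is the generalized triangle inequality $d(x,S) \le d(x,y) + d(y,S)$ for a point $x$, a point $y$, and a non-empty set $S$ in a metric space $(X,d)$, where $d(x,S) \eqdef \inf_{s \in S} d(x,s)$. The plan is to reduce the set-valued quantity to its defining infimum and then apply the ordinary triangle inequality pointwise, finishing with a standard infimum manipulation. Concretely, I would start from the pointwise triangle inequality: for every $s \in S$ we have $d(x,s) \le d(x,y) + d(y,s)$. Taking the infimum over $s \in S$ on both sides — which is legitimate since $S \neq \emptyset$, so the infima are over non-empty sets and behave monotonically — yields $d(x,S) = \inf_{s \in S} d(x,s) \le \inf_{s \in S}\big(d(x,y) + d(y,s)\big) = d(x,y) + \inf_{s \in S} d(y,s) = d(x,y) + d(y,S)$, where the constant term $d(x,y)$ pulls out of the infimum.

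The only subtlety worth spelling out is the step $\inf_{s \in S}\big(d(x,y) + d(y,s)\big) = d(x,y) + \inf_{s \in S} d(y,s)$: this holds because adding a fixed constant to every element of a set shifts its infimum by that same constant, and again this requires $S \neq \emptyset$ (otherwise the infimum over the empty set is $+\infty$ and the identity is vacuous rather than meaningful). I would state this explicitly but not belabor it. One could alternatively phrase the whole argument with an $\varepsilon$-witness — pick $s_\varepsilon \in S$ with $d(y,s_\varepsilon) \le d(y,S) + \varepsilon$, then $d(x,S) \le d(x,s_\varepsilon) \le d(x,y) + d(y,s_\varepsilon) \le d(x,y) + d(y,S) + \varepsilon$, and let $\varepsilon \to 0$ — but the direct infimum computation is cleaner and avoids any appeal to limits.

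There is essentially no main obstacle here; the result is a routine lemma, and the proof is two or three lines. The care points are purely bookkeeping: ensuring $S$ is non-empty is used where needed, and keeping the direction of the inequality straight when taking infima (infimum is monotone, so $f \le g$ pointwise gives $\inf f \le \inf g$). I would present the infimum-based version for brevity.
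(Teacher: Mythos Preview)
Your proposal is correct and essentially identical to the paper's proof: both apply the pointwise triangle inequality $d(x,s)\le d(x,y)+d(y,s)$ and then pass to the infimum over $s\in S$, using that $S\neq\emptyset$. The only cosmetic difference is that the paper first bounds $d(x,S)\le d(x,z')$ for an arbitrary $z'\in S$ and then takes the infimum on the right-hand side, whereas you take the infimum on both sides simultaneously; these are the same argument.
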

\begin{proof}
Pick some \(z' \in S\) and \(x, y \in X\).
By the triangle inequality, \(d(x, z') \leq d(x, y) + d(y, z')\), so
\[d(x, S) \eqdef \inf_{z \in S} d(x, z) \leq d(x, z') \leq d(x, y) + d(y, z')\,,\]
and this holds for \emph{any} choice of \(z'\).
Therefore, we conclude
\[d(x, S) \leq d(x, y) + \inf_{z \in S} d(y, z) \eqdef d(x, y) + d(y, S)\,.\qedhere\]
\end{proof}
\begin{corollary}\label{cor:triangle_segm}
For any \(p, q \in \Rt\) and a line segment \(ab\) on \(a, b \in \Rt\),
\[d(p, ab) \leq \lVert p - q\rVert + d(q, ab)\,.\]
\end{corollary}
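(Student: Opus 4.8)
The plan is to derive \cref{cor:triangle_segm} as a direct instance of \cref{lem:triangle_sets}. The key observation is that \(\Rt\) equipped with the Euclidean distance \(\lVert\cdot - \cdot\rVert\) is a metric space, and a line segment \(ab\) with \(a, b \in \Rt\) is a non-empty subset of it (it contains at least the point \(a\)). Moreover, by the definition \(d(p, ab) \eqdef \min_{t \in ab} \lVert p - t\rVert\) given in \cref{sec:prelims}, the quantity \(d(p, ab)\) coincides with the set-distance \(d(p, S)\) from \cref{lem:triangle_sets} when we take \(S \coloneqq ab\); the minimum is attained rather than merely an infimum because \(ab\) is compact, but this does not affect the inequality.

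Concretely, I would instantiate \cref{lem:triangle_sets} with \(X \coloneqq \Rt\), the metric \(d(\cdot, \cdot) \coloneqq \lVert \cdot - \cdot\rVert\), the subset \(S \coloneqq ab \neq \emptyset\), and the two points \(x \coloneqq p\) and \(y \coloneqq q\). The lemma then yields \(d(p, ab) \leq \lVert p - q\rVert + d(q, ab)\) immediately, which is exactly the claimed statement. One should briefly remark that the notation \(d(p, ab)\) on the left of \cref{cor:triangle_segm} and the notation \(d(x, S)\) in \cref{lem:triangle_sets} refer to the same thing under this identification, so that the substitution is legitimate.

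There is essentially no obstacle here: the corollary is a specialisation of the preceding lemma to a concrete metric space and a concrete type of subset, and the only thing to be careful about is matching up the overloaded \(d(\cdot, \cdot)\) notation (point-to-point distance versus point-to-set distance) so that the reader sees that the hypotheses of \cref{lem:triangle_sets} are met. Thus the proof is a single line invoking \cref{lem:triangle_sets}.
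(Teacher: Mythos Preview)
Your proposal is correct and matches the paper's approach exactly: the corollary is stated immediately after \cref{lem:triangle_sets} with no separate proof, precisely because it is the direct instantiation you describe. There is nothing to add.
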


We now state the result for disks.
\begin{lemma}\label{lem:hausdorff_disks}
Given \(n \in \N^{> 0}\), \(n \geq 3\), for any imprecise curve modelled with disks
\(\U = \langle U_1, \dots, U_n\rangle\) with \(U_i = D(c_i, r_i)\) for all \(i \in [n]\)
and \(c_i \in \Rt\), \(r_i \in \R^{\geq 0}\) for all \(i \in [n]\), and given some \(p_1 \in U_1\)
and \(p_n \in U_n\), we have
\[\max_{\pi \Subset \U, \pi(1) \equiv p_1, \pi(n) \equiv p_n} \hs(\pi, p_1p_n) =
\max_{i \in \{2, \dots, n - 1\}} \big(d(c_i, p_1p_n) + r_i\big)\,.\]
\end{lemma}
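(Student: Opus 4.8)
The plan is to reduce the disk case to the precise case handled by \cref{lem:hausdorff_precise}, and then to argue, via the triangle inequality of \cref{cor:triangle_segm}, that among all realisations the worst-case Hausdorff distance at each intermediate point is attained by pushing the realised point as far from the segment \(p_1p_n\) as possible, which for a disk \(D(c_i, r_i)\) means a point at distance exactly \(d(c_i, p_1p_n) + r_i\). First I would apply \cref{lem:hausdorff_precise} to rewrite, for each fixed realisation \(\pi\) with \(\pi(1) \equiv p_1\) and \(\pi(n) \equiv p_n\), the quantity \(\hs(\pi, p_1p_n)\) as \(\max_{i \in [n]} d(\pi(i), p_1p_n)\); since \(\pi(1) = p_1\) and \(\pi(n) = p_n\) lie on the segment, their contribution is \(0\), so the maximum is over \(i \in \{2, \dots, n-1\}\). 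Taking the maximum over realisations then commutes with the maximum over \(i\), and since the realisations of the intermediate points are independent, this becomes \(\max_{i \in \{2,\dots,n-1\}} \max_{p \in D(c_i, r_i)} d(p, p_1p_n)\).

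It then remains to show that for a single disk \(D(c, r)\) and a segment \(ab\), we have \(\max_{p \in D(c,r)} d(p, ab) = d(c, ab) + r\). For the upper bound, I would invoke \cref{cor:triangle_segm}: for any \(p \in D(c,r)\), \(d(p, ab) \leq \lVert p - c\rVert + d(c, ab) \leq r + d(c, ab)\). For the matching lower bound, I would exhibit a concrete point achieving equality: let \(s \coloneqq \argmin_{t \in ab} \lVert c - t\rVert\) be the closest point on the segment to \(c\), and consider the point \(p^* \coloneqq c + r \cdot \frac{c - s}{\lVert c - s\rVert}\) (when \(c \neq s\); if \(c \in ab\), take any unit direction, or handle this degenerate case separately by noting that a point at distance \(r\) from the segment still exists as long as we move perpendicular to it — or simply observe that then \(d(c, ab) = 0\) and \(p^* = c + r u\) for a direction \(u\) perpendicular to \(ab\) works). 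I would then verify that \(p^*\) lies in \(D(c,r)\) and that \(s\) remains the closest point of \(ab\) to \(p^*\), so \(d(p^*, ab) = \lVert p^* - s\rVert = \lVert c - s\rVert + r = d(c, ab) + r\).

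The main obstacle I anticipate is the claim that \(s\) is still the nearest point of the segment \(ab\) to the shifted point \(p^*\): moving from \(c\) directly away from \(s\) keeps us on the ray through \(s\) and \(c\), which is the line realising \(d(c, ab)\); I would justify this cleanly by noting that \(p^* - s = (1 + r/\lVert c-s\rVert)(c - s)\) is a positive multiple of \(c - s\), and the nearest-point condition for a convex set (here the segment \(ab\)) is that \(\langle p^* - s, x - s\rangle \leq 0\) for all \(x \in ab\); since this holds for \(c\) in place of \(p^*\) by optimality of \(s\), and \(p^* - s\) is a positive scalar multiple of \(c - s\), it holds for \(p^*\) as well. (In the degenerate case \(c \in ab\), the bound \(d(p^*, ab) \geq r\) follows directly from \(d(p^*, ab) \geq d(p^*, L) = r\) where \(L\) is the line through \(ab\), since \(p^* - c\) is perpendicular to \(L\).) Chaining the upper and lower bounds gives \(\max_{p \in D(c,r)} d(p, ab) = d(c, ab) + r\), and substituting back into the expression from the first paragraph yields the statement of the lemma.
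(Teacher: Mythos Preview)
Your proposal is correct and follows essentially the same approach as the paper: reduce to \cref{lem:hausdorff_precise}, drop the endpoint contributions, decouple the maximisation over realisations into independent per-disk problems, bound \(\max_{p \in D(c,r)} d(p, ab)\) above via \cref{cor:triangle_segm}, and exhibit the antipodal boundary point along the ray from the nearest segment point through the centre for the lower bound. If anything, your treatment is slightly more careful than the paper's, since you explicitly justify via the variational characterisation of the nearest point on a convex set why \(s\) remains the closest point to the shifted \(p^*\), and you handle the degenerate case \(c \in ab\), whereas the paper simply asserts these facts.
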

\begin{proof}
Assume the setting of the lemma.
As before, we derive
\begin{align*}
&\max_{\pi \Subset \U, \pi(1) \equiv p_1, \pi(n) \equiv p_n} \hs(\pi, p_1p_n)\\
&\reason{\cref{lem:hausdorff_precise}}\\
&= \max_{\pi \Subset \U, \pi(1) \equiv p_1, \pi(n) \equiv p_n} \max_{i \in [n]} d(\pi(i), p_1p_n)\\
&\reason{Def.~\(\Subset\), \(d(p_1, p_1p_n) = d(p_n, p_1p_n) = 0\)}\\
&= \max_{i \in \{2, \dots, n - 1\}} \max_{p \in U_i} d(p, p_1p_n)\,.
\end{align*}
It remains to show that \(\max_{p \in U_i} d(p, p_1p_n) = d(c_i, p_1p_n) + r_i\) for any
\(i \in \{2, \dots, n - 1\}\).

Firstly, pick \(p' \coloneqq \argmax_{p \in U_i} d(p, p_1p_n)\).
Note that by \cref{cor:triangle_segm},
\(d(p', p_1p_n) \leq \lVert p' - c_i\rVert + d(c_i, p_1p_n)\).
Furthermore, as \(p' \in U_i\), by definition of \(U_i\) we have \(\lVert p' - c_i\rVert \leq r_i\).
Thus, \(\max_{p \in U_i} d(p, p_1p_n) \leq d(c_i, p_1p_n) + r_i\), and it remains to show the
inequality in the other direction.

Now pick a point \(q' \coloneqq \argmin_{q \in p_1p_n} \lVert q - c_i\rVert\), so that
\(d(c_i, p_1p_n) = \lVert q' - c_i\rVert\).
Draw the line through \(c_i\) and \(q'\) and pick the point \(p'\) on that line on the boundary of
\(U_i\) on the opposite side of \(q\) w.r.t.\@ \(c_i\).
Clearly, \(\lVert p' - c_i\rVert = r_i\) and \(q' = \argmin_{q \in p_1p_n} \lVert q - p'\rVert\).
Thus,
\[d(p', p_1p_n) = \lVert p' - q'\rVert = \lVert q' - c_i\rVert + \lVert p' - c_i\rVert
= d(c_i, p_1p_n) + r_i\,.\]
Note that \(p' \in U_i\), so we conclude
\(\max_{p \in U_i} d(p, p_1p_n) \geq d(c_i, p_1p_n) + r_i\).
Hence, the statement of the lemma holds.
\end{proof}
Once again, note that this lemma allows us to test a shortcut in a straightforward manner, in time
\(\bO(n)\) for a shortcut of length \(n\).

\paragraph*{Polygonal closed convex sets.}
\begin{lemma}\label{lem:hausdorff_pccs}
Given \(n, k \in \N^{> 0}\), \(n \geq 3\), for any imprecise curve modelled with PCCSs
\(\U = \langle U_1, \dots, U_n\rangle\) with \(U_i \subset \Rt\)
and \(V(U_i) = \{p_i^1, \dots, p_i^k\}\) for all \(i \in [n]\), and given some
\(p_1 \in U_1\) and \(p_n \in U_n\), we have
\[\max_{\pi \Subset \U, \pi(1) \equiv p_1, \pi(n) \equiv p_n} \hs(\pi, p_1p_n) =
\max_{i \in \{2, \dots, n - 1\}} \max_{v \in V(U_i)} d(v, p_1p_n)\,.\]
\end{lemma}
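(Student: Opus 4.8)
The plan is to follow the same template as the proofs of \cref{lem:hausdorff_ind} and \cref{lem:hausdorff_disks}. First I would invoke \cref{lem:hausdorff_precise} together with the definition of \(\Subset\) and the fact that \(d(p_1,p_1p_n)=d(p_n,p_1p_n)=0\) to rewrite the quantity of interest as
\[\max_{\pi\Subset\U,\,\pi(1)\equiv p_1,\,\pi(n)\equiv p_n}\hs(\pi,p_1p_n)
= \max_{i\in\{2,\dots,n-1\}}\max_{p\in U_i} d(p,p_1p_n)\,.\]
After this step the whole lemma reduces to the single claim that for a PCCS \(U\) we have \(\max_{p\in U} d(p,p_1p_n) = \max_{v\in V(U)} d(v,p_1p_n)\).

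The inequality \(\geq\) is immediate, since \(V(U)\subseteq U\). For \(\leq\), I would show that the maximum of \(d(\cdot,p_1p_n)\) over the convex polygon \(U\) is attained at one of its vertices. The cleanest self-contained route reuses \cref{lem:distance_segm}: fan-triangulate \(U\) from one of its vertices into triangles all of whose vertices lie in \(V(U)\), so it suffices to bound \(d(p,p_1p_n)\) for \(p\) ranging over a single such triangle \(abc\) with \(a,b,c\in V(U)\). Any \(p\) in triangle \(abc\) lies on a segment \(aq\) for some \(q\in bc\); applying \cref{lem:distance_segm} to the segment \(bc\) gives \(d(q,p_1p_n)\leq\max\{d(b,p_1p_n),d(c,p_1p_n)\}\), and applying it again to the segment \(aq\) gives \(d(p,p_1p_n)\leq\max\{d(a,p_1p_n),d(q,p_1p_n)\}\), so \(d(p,p_1p_n)\leq\max\{d(a,p_1p_n),d(b,p_1p_n),d(c,p_1p_n)\}\leq\max_{v\in V(U)}d(v,p_1p_n)\). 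Taking the maximum over all triangles of the triangulation and over \(p\) yields the claim. (Alternatively one could observe that \(d(\cdot,p_1p_n)\) is a convex function — this is exactly the content of the Minkowski-sum computation in \cref{lem:distance_segm}, whose sublevel sets were shown convex — and that a convex function on a polytope attains its maximum at an extreme point, the extreme points of \(U\) being precisely \(V(U)\); the degenerate case where \(U\) is a line segment is then handled directly by \cref{lem:distance_segm}.) Chaining the two displays closes the argument.

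I do not expect a genuine obstacle here; the only point that needs a little care is justifying the "maximum at a vertex" step uniformly for arbitrary convex polygons and for the degenerate line-segment case, and the triangulation argument above covers both (a line segment is its own trivial triangulation and \cref{lem:distance_segm} applies verbatim). Finally I would add the running-time remark as in the previous cases: with the start and end realisations fixed, evaluating the right-hand side over all intermediate regions takes \(\bO(nk)\) time for a shortcut of length \(n\).
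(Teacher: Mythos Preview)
Your proposal is correct and follows the same overall template as the paper: invoke \cref{lem:hausdorff_precise} to reduce to \(\max_{p\in U_i}d(p,p_1p_n)\), note \(\geq\) is trivial since \(V(U_i)\subseteq U_i\), and then argue that the maximum over a PCCS is attained at a vertex. The only difference is in that last step: the paper does a three-way case split (vertex, boundary edge, interior), handling the boundary case with \cref{lem:distance_segm} and the interior case by pushing along the ray from the nearest point of \(p_1p_n\) through \(p\) out to the boundary; you instead fan-triangulate and apply \cref{lem:distance_segm} twice per triangle. Both arguments are short and rest on the same key lemma, so this is a cosmetic variation rather than a different route; your version has the mild advantage of treating interior and boundary points uniformly and of covering the degenerate line-segment case without a separate remark. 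The running-time comment matches the paper verbatim.
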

\begin{proof}
Assume the setting of the lemma.
As before, derive
\begin{align*}
&\max_{\pi \Subset \U, \pi(1) \equiv p_1, \pi(n) \equiv p_n} \hs(\pi, p_1p_n)\\
&\reason{\cref{lem:hausdorff_precise}}\\
&= \max_{\pi \Subset \U, \pi(1) \equiv p_1, \pi(n) \equiv p_n} \max_{i \in [n]} d(\pi(i), p_1p_n)\\
&\reason{Def.~\(\Subset\), \(d(p_1, p_1p_n) = d(p_n, p_1p_n) = 0\)}\\
&= \max_{i \in \{2, \dots, n - 1\}} \max_{p \in U_i} d(p, p_1p_n)\,.
\end{align*}
To show that the claim holds, it remains to show that for any PCCS \(U\) and a line segment \(ab\)
it holds that \(\max_{p \in U} d(p, ab) = \max_{v \in V(U)} d(v, ab)\).
Firstly, as \(V(U) \subset U\), we immediately have
\(\max_{p \in U} d(p, ab) \geq \max_{v \in V(U)} d(v, ab)\).
Consider any \(p \in U\).
We will show that there is some \(v \in V(U)\) such that \(d(v, ab) \geq d(p, ab)\), thus completing
the proof.
We do a case distinction on \(p\).
\begin{itemize}
    \item \(p \in V(U)\). Then pick \(v \coloneqq p\), and we are done.
    \item \(p \notin V(U)\), but \(p\) is on the boundary of \(U\).
    Consider the vertices \(v, w \in V(U)\) with \(p \in vw\).
    Using \cref{lem:distance_segm}, we note
    \[\max_{q \in vw} d(q, ab) = \max\big\{d(v, ab), d(w, ab)\big\}\,.\]
    W.l.o.g.\@ suppose \(d(v, ab) \geq d(w, ab)\).
    Then for \(v\) indeed we have \(d(v, ab) \geq d(p, ab)\).
    \item \(p\) is in the interior of \(U\) (cannot occur for line segments).
    Find the point \(q' \coloneqq \argmin_{q \in ab} \lVert p - q\rVert\), so
    \(d(p, ab) = \lVert p - q'\rVert\).
    Draw the line through \(p\) and \(q'\); let \(p'\) be the point on that line on the boundary of
    \(U\) on the opposite side of \(q'\) w.r.t.\@ \(p\).
    Clearly, \(q' = \argmin_{q \in ab} \lVert p' - q\rVert\), so \(d(p', ab) > d(p, ab)\).
    Then we can find a vertex \(v \in V(U)\) as in the previous cases, yielding
    \(d(v, ab) \geq d(p', ab) > d(p, ab)\).
\end{itemize}
This covers all the cases, so the statement holds.
\end{proof}
As before, this lemma gives us a simple way to test the shortcut with fixed realisations of the
first and the last points in time \(\bO(nk)\) for a shortcut of length \(n\) and PCCSs with \(k\)
vertices.

\subsection{Fr\'echet Distance}\label{sec:intermediate_frechet}
We now turn our attention to the Fr\'echet distance.
In this \lcnamecref{sec:intermediate_frechet}, we do not show results for the Fr\'echet distance in
the precise setting.
For extra intuition, we show \cref{alg:precise_frechet}, which follows from a well-known fact shown
e.g.\@ by Guibas et al.~\cite[Lemma~8]{guibas}; it can also be seen as specialisation of the
indecisive point case to \(k = 1\) or of the disk case to \(r = 0\).

\begin{algorithm}[tbp]
\caption{Testing a shortcut on a precise curve with the Fr\'echet distance.}
\label{alg:precise_frechet}
\begin{algorithmic}[1]
\Require{$\pi = \langle p_1, \dots, p_n\rangle$, $n \in \N^{> 0}$, $\forall i \in [n]: p_i \in \Rt$,
$\varepsilon \in \R^{> 0}$}
\Function{CheckFr\'echetPrecise}{$\pi, n, \varepsilon$}
    \State $s_1 \coloneqq 1$
    \For{$i \in \{2, \dots, n - 1\}$}
        \State $S_i \coloneqq \{t \in [s_{i - 1}, 2] \mid \lVert p_i - p_1p_n(t)\rVert \leq \varepsilon\}$
        \If{$S_i = \emptyset$}
            \State\Return\False
        \EndIf
        \State $s_i \coloneqq \min S_i$
    \EndFor
    \State\Return\True
\EndFunction
\end{algorithmic}
\end{algorithm}

\paragraph*{Indecisive points.}
The idea is that in the precise case we can always align greedily as we move along the line segment.
In this case, we also need to find the realisation for each indecisive point that makes for the
`worst' greedy choice.

\begin{algorithm}[tbp]
\caption{Testing a shortcut on an indecisive curve with the Fr\'echet distance.}
\label{alg:indecisive_frechet}
\begin{algorithmic}[1]
\Require{$\U = \langle U_1, \dots, U_n\rangle$, $n, k \in \N^{> 0}$,
$\forall i \in [n]: U_i = \{p_i^1, \dots, p_i^k\}$, $\forall i \in [n], j \in [k]: p_i^j \in \Rt$,
$\varepsilon \in \R^{> 0}$, $p_1 \in U_1$, $p_n \in U_n$}
\Function{CheckFr\'echetInd}{$\U, p_1, p_n, n, k, \varepsilon$}
    \State $s_1 \coloneqq 1$
    \For{$i \in \{2, \dots, n - 1\}$}
        \State $T_i \coloneqq \emptyset$
        \For{$j \in [k]$}
            \State $S_i^j \coloneqq \{t \in [s_{i - 1}, 2] \mid \lVert p_i^j - p_1p_n(t)\rVert \leq \varepsilon\}$
            \If{$S_i^j = \emptyset$}
                \State\Return\False
            \EndIf
            \State $T_i \coloneqq T_i \cup \min S_i^j$
        \EndFor
        \State $s_i \coloneqq \max T_i$
    \EndFor
    \State\Return\True
\EndFunction
\end{algorithmic}
\end{algorithm}

\begin{lemma}\label{lem:frechet_indecisive}
Given \(n, k \in \N^{> 0}\) and \(\varepsilon \in \R^{> 0}\), for any indecisive curve
\(\U = \langle U_1, \dots, U_n\rangle\) with \(U_i = \{p_i^1, \dots, p_i^k\}\) for all \(i \in [n]\)
and \(p_i^j \in \Rt\) for all \(i \in [n]\), \(j \in [k]\), and given some \(p_1 \in U_1\) and
\(p_n \in U_n\), we have, using \cref{alg:indecisive_frechet},
\[\max_{\pi \Subset \U, \pi(1) \equiv p_1, \pi(n) \equiv p_n} \fr(\pi, p_1p_n) \leq \varepsilon \iff
\textsl{\textsc{CheckFr\'echetInd}}(\U, p_1, p_n, n, k, \varepsilon) = \True\,.\]
\end{lemma}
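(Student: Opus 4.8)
The plan is to prove the two directions of the equivalence separately, with the harder work lying in the ``$\Leftarrow$'' direction, where we must show that a successful run of the algorithm certifies that \emph{every} realisation is within Fr\'echet distance $\varepsilon$ of its shortcut. The key structural claim, which I would isolate as the heart of the argument, is a \emph{greedy optimality} statement about the values $s_i$ computed by \textsc{CheckFr\'echetInd}: after processing index $i$, the value $s_i$ is the smallest possible value of the last matched parameter over all valid alignments of all prefixes $\langle p_1, p_2^{j_2}, \dots, p_i^{j_i}, p_n\rangle$ of realisations, in the sense that (a) there \emph{is} a realisation and a valid alignment achieving $s_i$ exactly (namely, take the index $j$ maximising $\min S_i^j$ at each step), and (b) for \emph{any} realisation, there is a valid alignment of the prefix whose last matched parameter is at most $s_i$. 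Property (b) is what lets the greedy choice be ``worst-case safe''.

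First I would recall, from the reparametrisation-to-sequence correspondence established in \cref{sec:prelims}, that $\fr(\pi, p_1p_n) \le \varepsilon$ is equivalent to the existence of a non-decreasing sequence $\langle s_1 = 1, s_2, \dots, s_{n-1}, s_n = 2\rangle$ with $s_i \in [1,2]$ and $\lVert p_i - p_1p_n(s_i)\rVert \le \varepsilon$ for all $i$; here I should double-check the boundary conditions, i.e.\ that $\lVert p_1 - p_1p_n(1)\rVert = 0$ and $\lVert p_n - p_1p_n(2)\rVert = 0$, which hold since $p_1p_n(1) = p_1$ and $p_1p_n(2) = p_n$ and since $p_1, p_n$ are the fixed start and end realisations. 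So a valid alignment for a realisation exists iff one can pick such an increasing sequence of parameters, one per intermediate vertex. Note each $S_i^j$ is a (possibly empty) closed sub-interval of $[s_{i-1}, 2]$, being the intersection of the ball of radius $\varepsilon$ around $p_i^j$ with the segment, further restricted to parameters $\ge s_{i-1}$; if nonempty it has a well-defined minimum, so $\min S_i^j$ and $s_i = \max_j \min S_i^j$ are well-defined whenever the algorithm does not return \False.

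For ``$\Rightarrow$'' (contrapositive: the algorithm returns \False{} $\Rightarrow$ some realisation exceeds $\varepsilon$): if the algorithm returns \False{} at step $i$ because $S_i^j = \emptyset$ for some $j$, I would build the realisation that selects option $j$ at index $i$ and, at each earlier index $i' < i$, selects the option attaining $s_{i'} = \max_{j'} \min S_{i'}^{j'}$. By property (b) above (the ``no alignment can do better than $s_{i'}$'' half), \emph{any} valid alignment of this realisation's prefix up to $i-1$ must reach a last parameter $\ge s_{i-1}$; combined with $S_i^j = \emptyset$, meaning no parameter in $[s_{i-1}, 2]$ is within $\varepsilon$ of $p_i^j$, there is no way to extend, so this realisation has Fr\'echet distance $> \varepsilon$ to its shortcut. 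For ``$\Leftarrow$'': if the algorithm returns \True, I would take an arbitrary realisation $\pi = \langle p_1, p_2^{j_2}, \dots, p_{n-1}^{j_{n-1}}, p_n\rangle$ and exhibit an explicit valid alignment for it, namely $s'_1 = 1$, $s'_i = \max\{s'_{i-1}, \min S_i^{j_i}\}$ where $S_i^{j_i}$ is taken along the \emph{same} $s_{i-1}$-values the algorithm used --- here I need the monotonicity fact that $s'_i \le s_i$ for all $i$ (proved by induction, using that $s'_{i-1} \le s_{i-1}$ forces $S_i^{j_i}$ computed with threshold $s'_{i-1}$ to be a superset of the one with threshold $s_{i-1}$, hence its min is no larger), and that $s'_i \ge \min S_i^{j_i}$ ensures $\lVert p_i^{j_i} - p_1p_n(s'_i)\rVert \le \varepsilon$ provided $s'_i$ still lies in $S_i^{j_i}$; this last point needs care because $s'_i = \max\{s'_{i-1}, \min S_i^{j_i}\}$ could in principle overshoot the interval $S_i^{j_i}$, but since $S_i^{j_i}$ is an interval extending up to $2$ at its top only if $\varepsilon$ is large, one must argue $\min S_i^{j_i} \le s'_i \le \max S_i^{j_i}$ using $s'_i \le s_i \le \max S_i^{j_i}$, the latter because $s_i = \max_j \min S_i^j$ and the algorithm did not fail. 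The main obstacle is exactly this interval-containment bookkeeping in the ``$\Leftarrow$'' direction: making precise why the greedily advanced parameter never jumps past the feasible window of the current vertex, which rests on the fact that each $S_i^j$ is a single interval and on the inequality $s'_i \le s_i \le \max S_i^{j_i}$ chaining the algorithm's worst-case choice to the per-realisation choice.

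Finally I would note termination and well-definedness are immediate (finitely many steps, each manipulating intervals on $[1,2]$), and remark that the running time is $\bO(nk)$ per shortcut of length $n$, matching the indecisive entry of \cref{tab:runningtime} once summed over all $\bO(n^2)$ candidate shortcuts with the additional $\bO(nk^3)$ handling of the endpoint realisations deferred to \cref{sec:shortcut}.
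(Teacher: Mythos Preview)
Your approach is essentially the paper's: build the worst-case realisation from the arg-max choices for ``$\Rightarrow$'' (you argue contrapositively, the paper directly, via the same induction $s_i \le t_i$), and for ``$\Leftarrow$'' exhibit the alignment $t_i = \min S_i^{j_i}$ for an arbitrary realisation and sandwich it as $s_{i-1} \le t_i \le s_i$. Two points need cleaning up.

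First, in ``$\Leftarrow$'' the ``main obstacle'' you flag is illusory, and the fix you propose for it is actually false. Since you take $S_i^{j_i}$ with the algorithm's lower threshold $s_{i-1}$, you have $\min S_i^{j_i} \ge s_{i-1} \ge s'_{i-1}$ by your own inductive hypothesis, so $s'_i = \max\{s'_{i-1}, \min S_i^{j_i}\}$ always collapses to $\min S_i^{j_i} \in S_i^{j_i}$ and overshooting never occurs. This matters because the inequality you offer as a remedy, $s_i \le \max S_i^{j_i}$, does \emph{not} hold in general: with $s_{i-1}=1$, $S_i^{j_i} = [1, 1.2]$ and $S_i^{j'} = [1.5, 1.8]$ for some other option $j'$, one gets $s_i = 1.5 > 1.2 = \max S_i^{j_i}$. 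If overshooting really could happen, your argument would break here. The paper avoids the detour entirely by setting $t_i \coloneqq \min S_i^{j_i}$ and reading off $t_i \in [s_{i-1}, s_i]$, which immediately gives closeness and monotonicity via $t_i \le s_i \le t_{i+1}$. Second, in the ``$\Rightarrow$'' contrapositive you invoke your property~(b) for the claim that \emph{every} valid alignment of the worst-case prefix ends at parameter $\ge s_{i-1}$; but (b) as you stated it is an existence upper bound (``some alignment ends $\le s_i$''), not a universal lower bound. The statement you actually need---for the specific arg-max realisation, no alignment can end before $s_i$---is a separate fact, and it is exactly what the paper isolates as the induction $s_i \le t_i$.
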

\begin{proof}
First, assume that
\(\max_{\pi \Subset \U, \pi(1) \equiv p_1, \pi(n) \equiv p_n} \fr(\pi, p_1p_n) \leq \varepsilon\).
In the \lcnamecref{alg:indecisive_frechet}, we compute some set \(S_i^j\) for each \(p_i^j\) and
then pick one value from it and add it to \(T_i\); from \(T_i\) we then pick a single value as
\(s_i\).
So, \(s_i \in S_i^j\) for some \(j_i \in [k]\), on every iteration \(i \in \{2, \dots, n - 1\}\).
Consider a realisation \(\pi \Subset \U\) with \(\pi(1) \equiv p_1\), \(\pi(n) \equiv p_n\), and
\(\pi(i) \equiv p_i^{j_i}\) for every \(i \in \{2, \dots, n - 1\}\), where \(j_i\) is chosen as the
value corresponding to \(s_i\).
Then we know \(\fr(\pi, p_1p_n) \leq \varepsilon\).
So, there is an alignment that can be given as a sequence of \(n\) positions, \(t_i \in [1, 2]\),
such that \(\lVert \pi(i) - p_1p_n(t_i)\rVert \leq \varepsilon\) and \(t_i \leq t_{i + 1}\) for
all \(i\).
The alignment is established by interpolating between the consecutive points on the curves, as
discussed in \cref{sec:prelims}.

We now show by induction that \(s_i \leq t_i\) for all \(i\).
For \(i = 2\), we get, for the chosen \(j_2\),
\(s_2 \coloneqq \min \{t \in [1, 2] \mid \lVert p_2^{j_2} - p_1p_n(t)\rVert \leq \varepsilon\}\).
As we have \(t_2 \in \{t \in [1, 2] \mid \lVert p_2^{j_i} - p_1p_n(t)\rVert \leq \varepsilon\}\),
we get \(s_2 \leq t_2\).
Now assume the statement holds for some \(i\), then for \(i + 1\) we get
\(s_{i + 1} \coloneqq \min \{t \in [s_i, 2] \mid \lVert p_{i + 1}^{j_{i + 1}} - p_1p_n(t)\rVert
\leq \varepsilon\}\); we can rephrase this so that
\[s_{i + 1} \eqdef \min \big(\{t \in [1, 2] \mid \lVert p_{i + 1}^{j_{i + 1}} - p_1p_n(t)\rVert
\leq \varepsilon\} \cap [s_i, 2]\big)\,.\]
So, there are two options.
\begin{itemize}
\item \(s_{i + 1} = s_i\).
Then we know \(s_{i + 1} = s_i \leq t_i \leq t_{i + 1}\).
\item \(s_{i + 1} > s_i\).
Then we can use the same argument as for \(i = 2\) to find that
\(s_{i + 1} \leq t_{i + 1}\).
\end{itemize}

Now we know that for every \(i\), \(t_i \in S_i^{j_i}\) for the choice of \(j_i\) described above.
Therefore, for any \(p_{i + 1}^{j_{i + 1}}\) there is always a realisation prefix such that any
valid alignment has \(t_{i + 1} \geq s_i\); as we know that there is a valid alignment for every
realisation, we conclude that every \(S_i^j\) is non-empty.
Thus, the \lcnamecref{alg:indecisive_frechet} returns \True.

Now assume that the \lcnamecref{alg:indecisive_frechet} returns \True.
Consider any realisation \(\pi \Subset \U\).
We claim that there is a valid alignment, described with a sequence of \(t_i \in [1, 2]\) for
\(i \in \{2, \dots, n - 1\}\), such that \(s_{i - 1} \leq t_i \leq s_i\) and
\(\lVert p_1p_n(t_i) - \pi(i)\rVert \leq \varepsilon\).
Denote the realisation
\(\pi \eqdef \langle p_1, p_2^{j_2}, p_3^{j_3}, \dots, p_{n - 1}^{j_{n - 1}}, p_n\rangle\),
so the sequence \(\langle j_2, \dots, j_{n - 1}\rangle\) describes the choices of the realisation.
Consider the set \(S_i^{j_i}\) for any \(i \in \{2, \dots, n - 1\}\).
We know that it is non-empty, otherwise the \lcnamecref{alg:indecisive_frechet} would have returned
\False.
We claim that we can pick \(t_i = \min S_i^{j_i}\) for every \(i\).
By definition, \(S_i^{j_i} \subseteq [1, 2]\) and
\(\lVert p_1p_n(t_i) - \pi(i)\rVert \leq \varepsilon\).
We also trivially get that \(s_{i - 1} \leq t_i\).
Finally, note that \(t_i \in T_i\), and \(s_i \coloneqq \max T_i\), so \(t_i \leq s_i\).

This argument shows that \(t_i \leq t_{i + 1}\) for every \(i\), and that
\(\lVert p_1p_n(t_i) - \pi(i)\rVert \leq \varepsilon\).
Therefore, \(\fr(\pi, p_1p_n) \leq \varepsilon\).
As this works for any realisation with \(\pi(1) \equiv p_1\) and \(\pi(n) \equiv p_n\), we conclude
\(\max_{\pi \Subset \U, \pi(1) \equiv p_1, \pi(n) \equiv p_n} \fr(\pi, p_1p_n) \leq \varepsilon\).
\end{proof}

\paragraph*{Disks.}
To show the generalisation to disks, it is helpful to reframe the problem as that of disk stabbing
for appropriate disks.
We demonstrate some useful facts first.

\begin{lemma}\label{lem:aux_disk_point}
Given a disk \(D_1 \coloneqq D(c, r)\) with \(c \in \Rt\), \(r \in \R^{\geq 0}\), a threshold
\(\varepsilon \in \R^{> 0}\), and a point \(p \in \Rt\), define
\(D_2 \coloneqq D(c, \varepsilon - r)\).
We have
\[\max_{p' \in D_1} \lVert p - p'\rVert \leq \varepsilon \iff p \in D_2\,.\]
\end{lemma}
\begin{proof}
First, assume \(p \in D_2 \eqdef \{s \in \Rt \mid \lVert s - c\rVert \leq \varepsilon - r\}\);
thus, we know \(\lVert p - c\rVert \leq \varepsilon - r\).
Take \(q \coloneqq \argmax_{p' \in D_1} \lVert p - p'\rVert\).
Then \(q \in D_1 \eqdef \{s \in \Rt \mid \lVert s - c\rVert \leq r\}\), so
\(\lVert q - c\rVert \leq r\).
Then by the triangle inequality,
\[\lVert p - q\rVert \leq \lVert p - c\rVert + \lVert q - c\rVert
\leq \varepsilon - r + r = \varepsilon\,.\]

Now assume that \(p \notin D_2 \eqdef \{s \in \Rt \mid \lVert s - c\rVert \leq \varepsilon - r\}\).
Then \(\lVert p - c\rVert > \varepsilon - r\).
Consider a point \(q\) on the line \(pc\) on the boundary of \(D_1\), so that \(c\) is between \(p\)
and \(q\) on the line.
Note that \(q \in D_1\), so
\[\max_{p' \in D_1} \lVert p - p'\rVert \geq \lVert p - q\rVert
= \lVert p - c\rVert + \lVert q - c\rVert > \varepsilon - r + r = \varepsilon\,,\]
completing the proof.
\end{proof}

We can now generalise the previous statement to talk about distance to line segments.
\begin{lemma}\label{lem:aux_disk_segm}
Given a disk \(D_1 \coloneqq D(c, r)\) with \(c \in \Rt\), \(r \in \R^{\geq 0}\), a threshold
\(\varepsilon \in \R^{> 0}\), and a line segment \(pq\) with \(p, q \in \Rt\), define
\(D_2 \coloneqq D(c, \varepsilon - r)\).
We have
\[\max_{p' \in D_1} d(p', pq) \leq \varepsilon \iff pq \cap D_2 \neq \emptyset\,.\]
\end{lemma}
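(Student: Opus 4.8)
The plan is to reduce the statement about a disk $D_1 = D(c,r)$ to the point case already handled in \cref{lem:aux_disk_point}, by rewriting both sides in terms of the quantity $\max_{p' \in D_1} d(p', pq)$ and the set $D_2 = D(c, \varepsilon - r)$. The key observation is that for a fixed target point $t \in pq$, the quantity $\max_{p' \in D_1} \lVert p' - t \rVert$ equals $\lVert c - t \rVert + r$ (this is the standard fact that the farthest point of a disk from an external or internal point lies diametrically opposite, and it is exactly the content of the proof of \cref{lem:aux_disk_point}). So the first step is to establish the identity
\[\max_{p' \in D_1} d(p', pq) = d(c, pq) + r\,,\]
which I would prove in both directions just as in the proof of \cref{lem:hausdorff_disks}: the ``$\leq$'' direction via \cref{cor:triangle_segm} together with $\lVert p' - c\rVert \leq r$, and the ``$\geq$'' direction by explicitly choosing $p'$ on the ray from the foot of the perpendicular $q' \coloneqq \argmin_{q'' \in pq}\lVert q'' - c\rVert$ through $c$, at distance $r$ on the far side of $c$.

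Given that identity, the left-hand side of the lemma becomes $d(c, pq) + r \leq \varepsilon$, i.e. $d(c, pq) \leq \varepsilon - r$. The second step is to observe that this is precisely the condition $pq \cap D_2 \neq \emptyset$: indeed $d(c, pq) \leq \varepsilon - r$ means there exists a point of $pq$ within distance $\varepsilon - r$ of $c$, which by definition of $D_2 = D(c, \varepsilon - r)$ is a point of $pq$ lying in $D_2$; conversely any point in $pq \cap D_2$ witnesses $d(c, pq) \leq \varepsilon - r$. (A minor edge case: if $\varepsilon - r < 0$ then $D_2 = \emptyset$ and $d(c,pq) + r > \varepsilon$, so both sides are false; if $\varepsilon - r = 0$ then $D_2 = \{c\}$ and the equivalence reads $d(c,pq) \le 0 \iff c \in pq$, which holds. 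These can be folded into the general argument without separate treatment if one is slightly careful with the conventions, since $d(c,pq)\ge 0$ always.)

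Alternatively, and perhaps more in the spirit of the preceding lemma, I could argue via \cref{lem:aux_disk_point} directly: $\max_{p' \in D_1} d(p', pq) \leq \varepsilon$ holds iff there is some $t \in pq$ with $\max_{p' \in D_1}\lVert p' - t\rVert \leq \varepsilon$ — the ``if'' is immediate, and the ``only if'' uses that the optimal $t$ can be taken to be $q'$, the foot of the perpendicular from $c$, since enlarging a disk around $t$ only relaxes the constraint as $t$ moves toward $c$'s projection. Then \cref{lem:aux_disk_point} converts ``$\max_{p'\in D_1}\lVert p' - t\rVert \le \varepsilon$'' into ``$t \in D_2$'', and ranging over $t \in pq$ gives exactly $pq \cap D_2 \neq \emptyset$.

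The main obstacle is the quantifier swap: $\max_{p' \in D_1} d(p', pq) \leq \varepsilon$ is a statement of the form ``for all $p'$, there exists $t$\ldots'', whereas the reformulation wants ``there exists $t$, for all $p'$\ldots'', and these are not equivalent in general — they coincide here only because of the special geometry (the common witness $t = q'$ works simultaneously for the worst $p'$). So the cleanest route is to avoid the swap entirely by going through the closed-form identity $\max_{p'\in D_1} d(p',pq) = d(c,pq) + r$, which sidesteps the issue, and I expect the bulk of the write-up to be that identity's proof, essentially a copy of the disk argument in \cref{lem:hausdorff_disks}. Everything after that is a one-line rewriting.
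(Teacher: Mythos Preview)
Your proposal is correct. The underlying geometry is the same as the paper's, but the organisation differs slightly: the paper proves the two directions of the equivalence directly, using \cref{lem:aux_disk_point} for the implication \(pq \cap D_2 \neq \emptyset \Rightarrow \max_{p' \in D_1} d(p', pq) \leq \varepsilon\), and for the converse taking the worst point \(s \in D_1\), its closest point \(t \in pq\), and arguing that \(c\) lies on \(st\) so that \(\lVert t - c\rVert \leq \varepsilon - r\). Your primary route instead first isolates the closed-form identity \(\max_{p' \in D_1} d(p', pq) = d(c, pq) + r\) (exactly the computation already carried out in \cref{lem:hausdorff_disks}) and then reads off the equivalence in one line; your alternative route via \cref{lem:aux_disk_point} is essentially the paper's forward direction. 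Your version is arguably cleaner because it reuses the identity rather than redoing the geometric argument, and it makes the edge case \(\varepsilon < r\) explicit, which the paper leaves implicit; the paper's version is slightly more self-contained in that it does not silently rely on the proof of \cref{lem:hausdorff_disks}.
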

\begin{proof}
First, assume \(pq \cap D_2 \neq \emptyset\).
Take \(t \in pq \cap D_2\).
Consider an arbitrary point \(s \in D_1\).
By \cref{lem:aux_disk_point}, we know that \(\lVert t - s\rVert \leq \varepsilon\); so also
\(d(s, pq) \eqdef \min_{q' \in pq} \lVert q' - s\rVert \leq \lVert t - s\rVert \leq \varepsilon\).
As this holds for arbitrary \(s \in D_1\), we conclude
\(\max_{p' \in D_1} \min_{q' \in pq} \lVert p' - q'\rVert \leq \varepsilon\).

Now assume that \(\max_{p' \in D_1} d(p', pq) \leq \varepsilon\).
Take \(s \coloneqq \argmax_{p' \in D_1} \min_{q' \in pq} \lVert p' - q'\rVert\) and
\(t \coloneqq \argmin_{q' \in pq} \lVert s - q'\rVert\).
In disks it is easy to see that the furthest point of a disk from a line segment is positioned in a
way that the centre of the disk is on the line through the point of the disk and the closest point
of the line segment, so in our case \(c \in st\).
Then \(\lVert t - c\rVert = \lVert t - s\rVert - \lVert s - c\rVert \leq \varepsilon - r\), so
indeed \(t \in D_2\), and \(pq \cap D_2 \neq \emptyset\).
\end{proof}

\begin{algorithm}[tbp]
\caption{Testing a shortcut on an imprecise curve modelled with disks with the Fr\'echet distance.}
\label{alg:disks_frechet}
\begin{algorithmic}[1]
\Require{$\U = \langle U_1, \dots, U_n\rangle$, $n \in \N^{> 0}$,
$\forall i \in [n]: U_i = D(c_i, r_i)$, $\forall i \in [n]: c_i \in \Rt, r_i \in \R^{\geq 0}$,
$\varepsilon \in \R^{> 0}$, $p_1 \in U_1$, $p_n \in U_n$}
\Function{CheckFr\'echetDisks}{$\U, p_1, p_n, n, \varepsilon$}
    \State $s_1 \coloneqq 1$
    \For{$i \in \{2, \dots, n - 1\}$}
        \State $S_i \coloneqq \{t \in [s_{i - 1}, 2] \mid \lVert c_i - p_1p_n(t)\rVert \leq \varepsilon - r_i\}$
        \If{$S_i = \emptyset$}
            \State\Return\False
        \EndIf
        \State $s_i \coloneqq \min S_i$
    \EndFor
    \State\Return\True
\EndFunction
\end{algorithmic}
\end{algorithm}

\begin{lemma}\label{lem:frechet_disks}
Given \(n \in \N^{> 0}\) and \(\varepsilon \in \R^{> 0}\), for any imprecise curve modelled
with disks \(\U = \langle U_1, \dots, U_n\rangle\) with \(U_i = D(c_i, r_i)\) for all \(i \in [n]\)
and \(c_i \in \Rt\), \(r_i \in \R^{\geq 0}\) for all \(i \in [n]\), and given some \(p_1 \in U_1\)
and \(p_n \in U_n\), we have, using \cref{alg:disks_frechet},
\[\max_{\pi \Subset \U, \pi(1) \equiv p_1, \pi(n) \equiv p_n} \fr(\pi, p_1p_n) \leq \varepsilon \iff
\textsl{\textsc{CheckFr\'echetDisks}}(\U, p_1, p_n, n, \varepsilon) = \True\,.\]
\end{lemma}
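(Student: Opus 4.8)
The plan is to mirror the structure of the proof of \cref{lem:frechet_indecisive}, but replace the finitely many witness points of an indecisive region by a disk, using \cref{lem:aux_disk_point} and \cref{lem:aux_disk_segm} to convert ``every point of the disk is within \(\varepsilon\) of \(p_1p_n(t)\)'' into ``\(p_1p_n(t)\) lies in the shrunk disk \(D(c_i, \varepsilon - r_i)\)''. The first thing I would observe is that the set \(S_i\) computed by \textsc{CheckFr\'echetDisks} is exactly the set of parameters \(t \in [s_{i-1}, 2]\) for which the segment point \(p_1p_n(t)\) is within \(\varepsilon\) of \emph{all} realisations of \(U_i\); this is the same greedy object as in the indecisive case, only now it captures all \(k \to \infty\) realisations at once. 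Note also that because \(p_1p_n(t)\) traces a segment and \(D(c_i, \varepsilon - r_i)\) is convex, \(S_i\) is a (possibly empty) subinterval of \([s_{i-1}, 2]\), which makes ``\(s_i \coloneqq \min S_i\)'' well-defined.

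For the forward direction, assume the maximum Fr\'echet distance over all realisations with fixed endpoints is at most \(\varepsilon\), and suppose for contradiction that the algorithm returns \False, i.e.\@ some \(S_i = \emptyset\). Exactly as in \cref{lem:frechet_indecisive}, I would build the ``worst'' realisation: by induction on \(i\), the value \(s_i = \min S_i\) (when it exists) is a lower bound on the parameter \(t_i\) of any valid alignment of any realisation prefix ending at the relevant disk, because moving the matched point earlier than \(s_{i-1}\) is forbidden and \(s_{i-1} \le s_i\) by construction. Then when \(S_i = \emptyset\), choosing the realisation point \(p' \coloneqq \argmax_{p \in U_i} d(p, p_1p_n[s_{i-1}:2]\text{-image})\) — more precisely the point of \(U_i\) realising the failure of the shrunk-disk intersection, obtained from the ``only if'' part of \cref{lem:aux_disk_segm} applied to the sub-segment \(p_1p_n([s_{i-1}, 2])\) — gives a realisation for which no valid alignment parameter \(t_i \ge s_{i-1}\) exists, so \(\fr(\pi, p_1p_n) > \varepsilon\), contradicting the hypothesis.

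For the reverse direction, assume the algorithm returns \True, so every \(S_i\) is non-empty, and fix an arbitrary realisation \(\pi \Subset \U\) with \(\pi(1) \equiv p_1\), \(\pi(n) \equiv p_n\). I would exhibit an explicit alignment: set \(t_1 \coloneqq 1\), \(t_n \coloneqq 2\), and for \(i \in \{2, \dots, n-1\}\) take \(t_i \coloneqq \min S_i = s_i\). We need three things: \(t_i \le t_{i+1}\), which holds because \(s_i \le s_{i+1}\) by the nested-interval construction (\(S_{i+1} \subseteq [s_i, 2]\)); \(\lVert p_1p_n(t_i) - \pi(i)\rVert \le \varepsilon\), which follows because \(t_i \in S_i\) means \(p_1p_n(t_i) \in D(c_i, \varepsilon - r_i)\), whence by \cref{lem:aux_disk_point} every point of \(U_i\) — in particular \(\pi(i)\) — is within \(\varepsilon\) of \(p_1p_n(t_i)\); and the endpoint conditions, which are immediate since \(p_1 \in U_1\), \(p_n \in U_n\) and \(p_1p_n(1) = p_1\), wait — that is not quite right, so I would instead note only that \(p_1\) and \(p_n\) are the segment endpoints by definition of \(p_1p_n\), and matching \(\pi(1)\) to \(s=1\) and \(\pi(n)\) to \(s=2\) costs \(0\). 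Interpolating as in \cref{sec:prelims} turns this finite sequence into genuine reparametrisations, giving \(\fr(\pi, p_1p_n) \le \varepsilon\); since \(\pi\) was arbitrary, the maximum is at most \(\varepsilon\).

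I expect the main obstacle to be the careful handling of the ``worst realisation'' in the forward direction: unlike the indecisive case, where the failure is witnessed by one of \(k\) explicit points, here I must invoke \cref{lem:aux_disk_segm} with respect to the \emph{sub-segment} \(p_1p_n([s_{i-1}, 2])\) rather than the whole segment \(p_1p_n\), and argue that the point of \(U_i\) it produces is simultaneously the obstruction for the \emph{whole-curve} alignment (i.e.\@ that no earlier matching was available due to the inductively established lower bound \(s_{i-1}\)). The rest is a routine transcription of the indecisive argument, with \cref{lem:aux_disk_point,lem:aux_disk_segm} doing the work that the finite maximum over \(j \in [k]\) did before; in particular the convexity of the shrunk disk replaces the \(\max T_i\) step, since the single interval \(S_i\) already accounts for all realisations of \(U_i\).
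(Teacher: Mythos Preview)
Your approach is sound and close to the paper's, but the forward direction is organised differently. The paper first reformulates via \cref{lem:aux_disk_point} (it cites \cref{lem:aux_disk_segm}, though the pointwise version is what is really used): \(\max_\pi \fr(\pi,p_1p_n)\le\varepsilon\) iff there is an ordered stabbing of the fixed intervals \(\{t: p_1p_n(t)\in D(c_i,\varepsilon-r_i)\}\). After that reformulation, realisations disappear entirely and what remains is the standard greedy argument on intervals. Your reverse direction coincides with the paper's: the \(s_i\) themselves stab the shrunk disks, so by \cref{lem:aux_disk_point} they give a valid alignment for every realisation simultaneously.

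For the forward direction you instead try to build a worst realisation directly, as in \cref{lem:frechet_indecisive}. This can be made to work, but your central claim is mis-worded: it is \emph{not} true that \(s_i\) lower-bounds \(t_i\) for ``any valid alignment of any realisation prefix''---a benign realisation can be aligned strictly left of \(s_i\). What you need is that for each \(j<i\) there is a \emph{specific} \(q_j\in U_j\) (a boundary point of \(U_j\) for which the first entry of \(D(q_j,\varepsilon)\) into \(p_1p_n([s_{j-1},2])\) occurs exactly at parameter \(s_j\); such a point exists by compactness) so that for \emph{that} realisation prefix every valid alignment has \(t_j\ge s_j\). Combined with the point \(p'\in U_i\) you extract from \cref{lem:aux_disk_segm} on the sub-segment, this yields the contradiction. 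You correctly flag this as the obstacle but never actually supply the choice of \(q_j\); that is the step to fill in. The paper's stabbing reformulation sidesteps naming any \(q_j\) at all---though it in turn asserts the passage from ``\(\forall\pi\,\exists\) alignment'' to ``\(\exists\) alignment \(\forall\pi\)'' without spelling it out, and your worst-realisation construction is exactly how one would justify that passage.
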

\begin{proof}
It is convenient to use \cref{lem:aux_disk_segm} to change the problem: rather than establishing an
alignment that comes in the correct order and satisfies the distance constraints, we can do disk
stabbing and pick the stabbing points in the correct order.
So, we have
\(\max_{\pi \Subset \U, \pi(1) \equiv p_1, \pi(n) \equiv p_n} \fr(\pi, p_1p_n) \leq \varepsilon\)
if and only if there exists a sequence of points \(p'_i \in p_1p_n \cap D(c_i, \varepsilon - r_i)\)
for all \(i \in \{2, \dots, n - 1\}\) such that \(p'_i \preccurlyeq p'_{i + 1}\) along \(p_1p_n\)
for all \(i \in \{2, \dots, n - 2\}\).
It remains to show that this is exactly what \cref{alg:disks_frechet} computes.

Assume the \lcnamecref{alg:disks_frechet} returns \True.
We claim that in this case the alignment obtained by \(p'_i \coloneqq p_1p_n(s_i)\) satisfies the
conditions.
First, by definition, \(s_i \in S_i \eqdef
\{t \in [s_{i - 1}, 2] \mid \lVert c_i - p_1p_n(t)\rVert \leq \varepsilon - r_i\}\), so we have
\(\lVert c_i - p'_i\rVert \leq \varepsilon - r_i\), so indeed
\(p'_i \in p_1p_n \cap D(c_i, \varepsilon - r_i)\).
Furthermore, by construction, \(s_i \in [s_{i - 1}, 2]\), so \(s_{i - 1} \leq s_i\), and hence
\(p'_{i - 1} \preccurlyeq p'_i\).

Now assume that the conditions hold, so there is some valid alignment, represented by a sequence
of points \(p'_i\).
We claim that for all \(i \in \{2, \dots, n - 1\}\), we have \(p_1p_n(s_i) \preccurlyeq p'_i\).
For \(i = 2\), this clearly holds, as \(p_1p_n(s_2)\) is the first point that falls into
\(p_1p_n \cap D(c_2, \varepsilon - r_2)\).
Now assume this holds for some \(i\), and we will show that it holds for iteration \(i + 1\).
On iteration \(i + 1\), there are two possibilities:
\begin{itemize}
\item \(s_i > s_{i - 1}\); then we are in the same situation as for \(i = 2\), so
\(p_1p_n(s_i) \preccurlyeq p'_i\).
\item \(s_i = s_{i - 1}\); then we immediately get the same result, as also
\(p'_i \preccurlyeq p'_{i + 1}\).
\end{itemize}
Therefore, we can conclude that the \lcnamecref{alg:disks_frechet} finds an alignment if one exists,
as all \(t_i\) such that \(p_1p_n(t_i) \equiv p'_i\) fall inside \(S_i\), so all \(S_i\) are
non-empty, and the algorithm returns \True.
\end{proof}

\begin{algorithm}[tbp]
\caption{Testing a shortcut on an imprecise curve modelled with PCCSs with the Fr\'echet
distance.}
\label{alg:pccs_frechet}
\begin{algorithmic}[1]
\Require{$\U = \langle U_1, \dots, U_n\rangle$, $n, k \in \N^{> 0}$,
$\forall i \in [n]: U_i \text{ is a PCCS}, V(U_i) = \{p_i^1, \dots, p_i^k\}$,
$\forall i \in [n], j \in [k]: p_i^j \in \Rt$, $\varepsilon \in \R^{> 0}$,
$p_1 \in U_1$, $p_n \in U_n$}
\Function{CheckFr\'echetPCCS}{$\U, p_1, p_n, n, k, \varepsilon$}
    \State $s_1 \coloneqq 1$
    \For{$i \in \{2, \dots, n - 1\}$}
        \State $T_i \coloneqq \emptyset$
        \For{$j \in [k]$}
            \State $S_i^j \coloneqq \{t \in [s_{i - 1}, 2] \mid \lVert p_i^j - p_1p_n(t)\rVert \leq \varepsilon\}$
            \If{$S_i^j = \emptyset$}
                \State\Return\False
            \EndIf
            \State $T_i \coloneqq T_i \cup \min S_i^j$
        \EndFor
        \State $s_i \coloneqq \max T_i$
    \EndFor
    \State\Return\True
\EndFunction
\end{algorithmic}
\end{algorithm}

\paragraph*{Polygonal closed convex sets.}
\begin{lemma}\label{lem:frechet_pccs}
Given \(n, k \in \N^{> 0}\) and \(\varepsilon \in \R^{> 0}\), for any imprecise curve modelled
with PCCSs \(\U = \langle U_1, \dots, U_n\rangle\) with \(U_i \subset \Rt\) and
\(V(U_i) = \{p_i^1, \dots, p_i^k\}\) for all \(i \in [n]\), and given some \(p_1 \in U_1\) and
\(p_n \in U_n\), we have, using \cref{alg:pccs_frechet},
\[\max_{\pi \Subset \U, \pi(1) \equiv p_1, \pi(n) \equiv p_n} \fr(\pi, p_1p_n) \leq \varepsilon \iff
\textsl{\textsc{CheckFr\'echetPCCS}}(\U, p_1, p_n, n, k, \varepsilon) = \True\,.\]
\end{lemma}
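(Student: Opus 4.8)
The plan is to reduce the continuous problem to the indecisive one by showing that a worst realisation can always be taken at the vertices of the PCCSs, and then to re-run the argument of \cref{lem:frechet_indecisive} almost verbatim. The starting point is that \textsc{CheckFr\'echetPCCS} is syntactically identical to \textsc{CheckFr\'echetInd} applied to the indecisive curve \(\langle V(U_1), \dots, V(U_n)\rangle\): at iteration \(i\) it sets \(s_i \coloneqq \max_{j \in [k]} \min S_i^j\), where the \(p_i^j\) now range over the vertices of \(U_i\). So everything comes down to understanding how \(\min S_i^j\) behaves as the corresponding vertex moves over the convex set \(U_i\).

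To isolate this, fix \(a \in [1, 2]\) (which will play the role of \(s_{i - 1}\)), and for \(p \in \Rt\) let \(\phi_a(p) \coloneqq \min\{t \in [a, 2] \mid \lVert p - p_1p_n(t)\rVert \leq \varepsilon\}\), with \(\phi_a(p) \coloneqq +\infty\) if this set is empty; thus \(\min S_i^j \eqdef \phi_{s_{i - 1}}(p_i^j)\). The key claim, in the spirit of \cref{lem:hausdorff_pccs}, is that for any PCCS \(U\),
\[\max_{p \in U} \phi_a(p) = \max_{v \in V(U)} \phi_a(v)\,.\]
For the proof, observe that for every \(\tau \in \R\) the sublevel set \(\{p \in \Rt \mid \phi_a(p) \leq \tau\}\) equals \(\bigcup_{t \in [a, \min\{\tau, 2\}]} D(p_1p_n(t), \varepsilon)\), which is the Minkowski sum of the sub-segment \(\{p_1p_n(t) \mid t \in [a, \min\{\tau, 2\}]\}\) of \(p_1p_n\) with a disk of radius \(\varepsilon\), hence convex (and empty, hence convex, when \(\tau < a\)). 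If \(\max_{v \in V(U)} \phi_a(v) = +\infty\) the claim is trivial since \(V(U) \subseteq U\); otherwise, taking \(\tau \coloneqq \max_{v \in V(U)} \phi_a(v) < \infty\), we get \(V(U) \subseteq \{\phi_a \leq \tau\}\), and since that set is convex and \(U = \mathrm{conv}(V(U))\), also \(U \subseteq \{\phi_a \leq \tau\}\), so \(\max_{p \in U} \phi_a(p) \leq \tau\), which gives the claim. In particular, the \lcnamecref{alg:pccs_frechet} returns \False\ at iteration \(i\) exactly when \(\phi_{s_{i - 1}}(v) = +\infty\) for some vertex \(v\) of \(U_i\) — equivalently, by the claim, for some point \(p \in U_i\) — and otherwise \(s_i = \max_{v \in V(U_i)} \phi_{s_{i - 1}}(v) = \max_{p \in U_i} \phi_{s_{i - 1}}(p)\).

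With this identification in hand, both directions of the \lcnamecref{lem:frechet_pccs} follow exactly as in the proof of \cref{lem:frechet_indecisive}. For ``\(\Leftarrow\)'': if \textsc{CheckFr\'echetPCCS} returns \True, then for any realisation \(\pi \Subset \U\) with \(\pi(1) \equiv p_1\), \(\pi(n) \equiv p_n\), set \(t_i \coloneqq \phi_{s_{i - 1}}(\pi(i))\) for \(i \in \{2, \dots, n - 1\}\); by the claim each \(t_i\) is finite, and by construction \(s_{i - 1} \leq t_i \leq s_i\) (the upper bound since \(\pi(i) \in U_i\)) and \(\lVert \pi(i) - p_1p_n(t_i)\rVert \leq \varepsilon\). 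Hence \(\langle t_2, \dots, t_{n - 1}\rangle\) is a monotone alignment, so \(\fr(\pi, p_1p_n) \leq \varepsilon\), and since \(\pi\) was arbitrary the maximum is at most \(\varepsilon\). For ``\(\Rightarrow\)'': assume the maximum over fixed-endpoint realisations is at most \(\varepsilon\) but, for contradiction, that \textsc{CheckFr\'echetPCCS} returns \False, first at iteration \(i\), for some vertex \(v = p_i^j\) of \(U_i\). Consider the realisation that at each index below \(i\) takes the vertex of the corresponding \(U_{i'}\) attaining \(s_{i'}\), takes \(v\) at index \(i\), and is arbitrary afterwards; by assumption it admits a valid alignment \(\langle t_2, \dots, t_{n - 1}\rangle\), and the same induction as in \cref{lem:frechet_indecisive} gives \(s_{i - 1} \leq t_{i - 1} \leq t_i\). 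Then \(t_i \in \{t \in [s_{i - 1}, 2] \mid \lVert v - p_1p_n(t)\rVert \leq \varepsilon\}\), i.e.\@ the set \(S_i^j\) is non-empty, contradicting the assumption that the \lcnamecref{alg:pccs_frechet} failed there. Hence it returns \True.

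The main obstacle is the key claim — recognising that the right quantity to push through the greedy sweep is the leftmost feasible alignment position \(\phi_a(\cdot)\), and that its sublevel sets are stadium-shaped Minkowski sums, hence convex. It is precisely this convexity (plus the trivial bookkeeping of the infeasible \(+\infty\) case) that lets the per-vertex reasoning of the indecisive setting survive the passage to a continuum of realisations.
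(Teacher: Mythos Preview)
Your proof is correct and follows the same high-level plan as the paper --- reduce to vertices, then invoke \cref{lem:frechet_indecisive} --- but your argument for the key reduction is genuinely different and cleaner. The paper parametrises an arbitrary boundary edge \(uv\) of \(U_i\), writes the leftmost feasible alignment position explicitly as \(x = (y'-b)/k - \sqrt{\varepsilon^2 - y'^2}\) in suitable coordinates, differentiates, and checks that the unique critical point is a local minimum, so the maximum over the edge is attained at an endpoint; iterating over edges then gives that the maximum over the boundary (hence over \(U_i\)) is attained at a vertex. Your sublevel-set argument replaces this calculus entirely: you observe that \(\{p \mid \phi_a(p) \le \tau\}\) is the Minkowski sum of a sub-segment of \(p_1p_n\) with an \(\varepsilon\)-disk, hence convex, so containing \(V(U_i)\) forces containing its convex hull \(U_i\). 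This is shorter, coordinate-free, handles the infeasible \(+\infty\) case uniformly, and would generalise immediately to arbitrary compact convex uncertainty regions; the paper's computation is more hands-on but makes the geometry of the feasible region along each edge explicit.
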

\begin{proof}
As we have shown in \cref{lem:hausdorff_pccs}, it suffices to test the vertices of a PCCS to
establish that the distance from every point to the line segment is below the threshold.
It remains to show that the extreme alignment (in terms of ordering) for the Fr\'echet distance is
also achieved at a vertex.
This case then becomes identical to the indecisive points case, so we can use
\cref{lem:frechet_indecisive} to show correctness.

\begin{figure}
\centering
\begin{tikzpicture}[scale=2]
\tkzDefPoints{0/0/p,2/0/q,1/0/x',0/.5/u,2/1/v,.5/0/x,1/.75/t}
\tkzDrawSegments(u,v p,q)
\tkzDrawSegments[dotted](t,x' t,x)
\tkzDrawPoints(x,x',t)
\tkzLabelPoints[below](x',u,v,p,q)
\tkzLabelPoint[below](x){\(x\phantom{'}\)}
\tkzLabelSegment[right](x',t){\(y'\)}
\tkzLabelSegment[left](x,t){\(\varepsilon\)}
\end{tikzpicture}
\caption{Illustration for the computation in \cref{lem:frechet_pccs}.}
\label{fig:computation}
\end{figure}

Consider an arbitrary point \(t \in U_i\) and let \(s\) be the earliest point in the
\(\varepsilon\)-disk around \(t\) that is on \(pq\).
Clearly, if \(t\) is in the interior of \(U_i\), then we can take any \(t'\) on the line through
\(t\) parallel to \(pq\) and get the corresponding \(s'\) with \(s \prec s'\).
So, assume \(t\) is on the boundary of \(U_i\).
Suppose that \(t \in uv\) with \(u, v \in V(U_i)\).
Rotate and translate the coordinate plane so that \(pq\) lies on the \(x\)-axis.
Derive the equation for the line containing \(uv\), say, \(y' = kx' + b\).
First consider \(k = 0\), so the line containing \(uv\) is parallel to the line containing \(pq\).
In this case, clearly, moving along \(uv\) in the direction coinciding with the direction from \(p\)
to \(q\) increases the \(x\)-coordinate of point of interest, so moving to a vertex is optimal.
Now assume \(k > 0\).
If \(k < 0\), reflect the coordinate plane about \(y = 0\).
Geometrically, it is easy to see (\cref{fig:computation}) that the coordinate of interest can be
expressed as
\[x = x' - \sqrt{\varepsilon^2 - y'^2} = \frac{y' - b}{k} - \sqrt{\varepsilon^2 - y'^2}\,.\]
We want to maximise \(x\) by picking the appropriate \(y'\).
We take the derivative:
\[\frac{\mathrm{d}x}{\mathrm{d}y'} = \frac{1}{k} + \frac{y'}{\sqrt{\varepsilon^2 - y'^2}}\,.\]
We can equate it to \(0\) to find the critical point of the function.
Simplifying, we find
\[y'_0 = -\frac{\varepsilon}{\sqrt{k^2 + 1}}\,.\]
We can check that for \(y' < y'_0\), the value of the derivative is negative, and for \(y' > y'_0\)
it is positive, so at \(y' = y'_0\) we achieve a local minimum.
There are no other critical points.
Therefore, to maximise \(x\), we want to move as far as possible in either direction, away from
the local minimum.
Since we are limited to the line segment \(uv\), the maximum is clearly achieved at one of the
segment endpoints.
\end{proof}

\section{Shortcut Testing: All Points}\label{sec:shortcut}
In the previous \lcnamecref{sec:intermediate}, we have covered testing a shortcut, given that the
first and the last points are fixed.
Here we remove that restriction.
\begin{problem}\label{prob:shortcut}
Given an uncertain curve \(\U = \langle U_1, \dots, U_n\rangle\) on \(n \in \N\), \(n \geq 3\)
uncertain points in \(\Rt\), check if the largest Hausdorff or Fr\'echet distance between \(\U\) and
its one-segment simplification is below a threshold \(\varepsilon \in \R^{> 0}\) for any
realisation, i.e.\@ for \(\delta \coloneqq \hs\) or \(\delta \coloneqq \fr\), verify
\(\max_{\pi \Subset \U} \delta(\pi, p_1p_n) \leq \varepsilon\).
\end{problem}

We first show how this can be done for indecisive points, both for \(\delta \coloneqq \hs\) and
\(\delta \coloneqq \fr\).
\begin{lemma}\label{lem:indecisive_main}
Given \(n, k \in \N^{> 0}\), \(n \geq 3\), and \(\delta \coloneqq \hs\) or \(\delta \coloneqq \fr\),
for any indecisive curve \(\U = \langle U_1, \dots, U_n\rangle\) with
\(U_i = \{p_i^1, \dots, p_i^k\}\) for all \(i \in [n]\) and \(p_i^j \in \Rt\) for all \(i \in [n]\),
\(j \in [k]\), we have
\[\max_{\pi \Subset \U} \delta(\pi, \langle \pi(1), \pi(n)\rangle) =
\multiadjustlimits{\max_{a \in [k]}, \max_{b \in [k]}, \max_{\sigma \Subset \U,
\sigma(1) \equiv p_1^a, \sigma(n) \equiv p_n^b}} \delta(\sigma, p_1^ap_n^b)\,.\]
\end{lemma}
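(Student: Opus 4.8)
The plan is to prove the identity by a direct partition of the realisation space, after which the statement reduces to book-keeping. A realisation $\pi \Subset \U$ is determined by one choice from each $U_i$; in particular it fixes an index $a \in [k]$ with $\pi(1) \equiv p_1^a$ and an index $b \in [k]$ with $\pi(n) \equiv p_n^b$. So first I would partition the set of realisations into the $k^2$ classes $R_{a,b} \coloneqq \{\sigma \Subset \U \mid \sigma(1) \equiv p_1^a,\ \sigma(n) \equiv p_n^b\}$ for $(a, b) \in [k] \times [k]$. Each class is non-empty (pick $p_1^a \in U_1$, $p_n^b \in U_n$, and anything for the remaining points), the classes jointly cover every realisation, and every set in sight is finite, so all maxima below are attained.

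The one observation that makes everything work is that inside a single class $R_{a,b}$ the one-segment simplification does not depend on the realisation: for every $\pi \in R_{a,b}$ we have $\langle \pi(1), \pi(n)\rangle = p_1^a p_n^b$, hence $\delta(\pi, \langle \pi(1), \pi(n)\rangle) = \delta(\pi, p_1^a p_n^b)$. Taking the maximum of this equality over $\pi \in R_{a,b}$ gives $\max_{\pi \in R_{a,b}} \delta(\pi, \langle \pi(1), \pi(n)\rangle) = \max_{\sigma \Subset \U,\ \sigma(1) \equiv p_1^a,\ \sigma(n) \equiv p_n^b} \delta(\sigma, p_1^a p_n^b)$, and then maximising over all pairs $(a, b)$ — which is legitimate precisely because the $R_{a,b}$ partition the realisations — turns the left-hand side into $\max_{\pi \Subset \U} \delta(\pi, \langle \pi(1), \pi(n)\rangle)$ and the right-hand side into the triple maximum in the statement. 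Both inequalities between the two sides come out of this single computation, so no separate $\leq$ and $\geq$ arguments are needed.

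There is no real obstacle here; the role of this \lcnamecref{lem:indecisive_main} is to reduce \cref{prob:shortcut} in the indecisive setting to \cref{prob:intermediate}, which we already solved with \cref{lem:hausdorff_ind} for $\delta \coloneqq \hs$ and with \cref{lem:frechet_indecisive} (via \textsc{CheckFr\'echetInd}) for $\delta \coloneqq \fr$. Concretely, the algorithm loops over the $k^2$ endpoint pairs and runs the corresponding fixed-endpoint test for each, so one shortcut of length $n$ is tested in $\bO(n k \cdot k^2) = \bO(n k^3)$ time. The only point I would take care to state explicitly in the write-up is that all the sets over which we maximise are finite and non-empty, so the maxima exist; the genuinely harder analogues of this reduction are the ones needed later for disks and convex polygons, where the set of endpoint realisations is no longer finite and one has to argue that testing finitely many \enquote{critical} endpoint pairs suffices — but that belongs to the remainder of \cref{sec:shortcut}, not to this lemma.
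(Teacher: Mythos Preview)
Your proposal is correct and takes essentially the same approach as the paper: both arguments simply separate the maximisation over realisations into an outer maximisation over the endpoint choices \((a,b)\) and an inner maximisation over realisations with those endpoints fixed, using that the simplification segment \(\langle\pi(1),\pi(n)\rangle\) depends only on \((a,b)\). The paper writes this as a chain of equalities reordering the nested maxima, while you phrase it as partitioning the realisation space into the classes \(R_{a,b}\), but the content is identical.
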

\begin{proof}
We can derive
\begin{align*}
&\max_{\pi \Subset \U} \delta(\pi, \langle \pi(1), \pi(n)\rangle)\\
&\reason{Def.~\(\Subset\)}\\
&= \max_{p_1 \in U_1, \dots, p_n \in U_n} \delta(\langle p_1, \dots, p_n\rangle, p_1p_n)\\
&= \max_{p_1 \in U_1} \max_{p_n \in U_n} \max_{p_2 \in U_2, \dots, p_{n - 1} \in U_{n - 1}}
\delta(\langle p_1, \dots, p_n\rangle, p_1p_n)\\
&\reason{Def.~\(\Subset\)}\\
&= \multiadjustlimits{\max_{p_1 \in U_1}, \max_{p_n \in U_n},
\max_{\sigma \Subset \U, \sigma(1) \equiv p_1, \sigma(n) \equiv p_n}} \delta(\sigma, p_1p_n)\\
&= \multiadjustlimits{\max_{a \in [k]}, \max_{b \in [k]},
\max_{\sigma \Subset \U, \sigma(1) \equiv p_1^a, \sigma(n) \equiv p_n^b}}
\delta(\sigma, p_1^ap_n^b)\,,
\end{align*}
as was to be shown.
\end{proof}
That is to say, for either Hausdorff or Fr\'echet distance we can simply test the shortcut using the
corresponding procedure from \cref{lem:hausdorff_ind} or \cref{lem:frechet_indecisive}, and do so
for each combination of the start and end points.
We can then test an indecisive shortcut of length \(n\) overall in time
\(\bO(k^2 \cdot nk) = \bO(nk^3)\).

We now proceed to show the approach for disks and polygonal closed convex sets.
The procedure is the same for the Hausdorff and the Fr\'echet distance, but differs between disks
and PCCSs, since disks have some convenient special properties.

\subsection{Disks}
We start by stating some useful observations.
\begin{observation}\label{obs:disks}
Suppose we are given two non-degenerate disks \(D_1 \coloneqq D(p_1, r_1)\) and
\(D_2 \coloneqq D(p_2, r_2)\) with \(D_1 \nsubseteq D_2\) and \(D_2 \nsubseteq D_1\).
We make the following observations.
(See \cref{fig:disks}.)
\begin{itemize}
\item There are exactly two \emph{outer tangents} to the disks, and the convex hull of
\(D_1 \cup D_2\) consists of an arc from \(D_1\), an arc from \(D_2\), and the outer tangents.
\item Assume the lines of the outer tangents intersect.
When viewed from the intersection point, the order in which the tangents touch the disks is the
same, i.e.\@ either both first touch \(D_1\) and then \(D_2\), or the other way around.
If the lines are parallel, the same statement holds when viewed from points on the tangent lines
at infinity.
\end{itemize}
\end{observation}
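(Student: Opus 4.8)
I would organise everything around the \emph{external centre of similitude} $E$ of the two circles: the centre of the homothety $h$ of ratio $r_2/r_1>0$ that maps $D_1$ onto $D_2$, given by $E=(r_2p_1-r_1p_2)/(r_2-r_1)$ when $r_1\neq r_2$, and interpreted as the point at infinity in direction $p_2-p_1$ with $h$ the translation by $p_2-p_1$ when $r_1=r_2$. A short computation gives $\lVert E-p_1\rVert=r_1\lVert p_1-p_2\rVert/|r_1-r_2|$, which is strictly larger than $r_1$ because the non-nesting hypotheses $D_1\nsubseteq D_2$ and $D_2\nsubseteq D_1$ are equivalent to $\lVert p_1-p_2\rVert>|r_1-r_2|$; hence $E\notin D_1$ and, symmetrically, $E\notin D_2$. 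This is the one fact everything else rests on.

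For the first bullet I would first count the outer tangents by parametrising a candidate line by its unit normal $\mathbf n$: a line $\{x:\langle x,\mathbf n\rangle=c\}$ is tangent to both disks with both on the same side exactly when $\langle p_1,\mathbf n\rangle=c-r_1$ and $\langle p_2,\mathbf n\rangle=c-r_2$, and eliminating $c$ leaves the single linear constraint $\langle p_2-p_1,\mathbf n\rangle=r_1-r_2$ on the unit circle. Since $p_1\neq p_2$ and $\lVert p_1-p_2\rVert>|r_1-r_2|$, this has exactly two solutions $\mathbf n$, and a quick check shows the two induced lines are distinct; so there are exactly two outer tangents. For the convex hull I would use that $\mathrm{conv}(D_1\cup D_2)$ equals the union of all segments $\overline{xy}$ with $x\in D_1$, $y\in D_2$, and then classify boundary points by their supporting line: any supporting line of $\mathrm{conv}(D_1\cup D_2)$ supports both $D_1$ and $D_2$ from the same side, hence is tangent to at least one; where it is tangent to $D_1$ only the boundary is locally an arc of $\partial D_1$ (and symmetrically for $D_2$), and where it is tangent to both it must be one of the two outer tangents and the boundary is the segment joining the two tangency points. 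Traversing the boundary once gives the stated decomposition: an arc of $D_1$, an outer tangent segment, an arc of $D_2$, an outer tangent segment.

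For the second bullet, consider the two lines through $E$ tangent to $D_1$ --- there are exactly two since $E\notin D_1$. Each such line passes through the centre of $h$, so $h$ fixes it while mapping $D_1$ to $D_2$; since $h$ has positive ratio it also preserves the two sides of the line, so both disks end up on a common side and the line is tangent to $D_2$ as well. Thus each tangent from $E$ to $D_1$ is an outer tangent, and by the count above these are all the outer tangents; in particular both outer tangents pass through $E$ and therefore meet exactly at $E$. On a fixed outer tangent $\ell$ with tangency points $a_1\in\partial D_1$ and $a_2\in\partial D_2$, the fact that $h$ fixes $\ell$ and sends $D_1$ to $D_2$ forces $h(a_1)=a_2$, i.e.\ $\vec{Ea_2}=(r_2/r_1)\,\vec{Ea_1}$; the positive ratio puts $a_1,a_2$ on the same ray out of $E$, and $a_1$ is the nearer of the two precisely when $r_1<r_2$ --- a criterion that does not refer to $\ell$. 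Hence the touching order seen from $E$ is the same on both outer tangents. In the parallel case $r_1=r_2$ the constraint $\langle p_2-p_1,\mathbf n\rangle=0$ shows each outer tangent is parallel to $v\coloneqq p_2-p_1$, so the translation $h$ still maps each outer tangent to itself and $a_2=a_1+v$ on both; reading both tangents in the direction of $v$ --- the promised ``view from the points at infinity'' --- places $a_1$ before $a_2$ on each, giving the statement.

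The only genuinely delicate point is the identification of the outer tangents with the tangent lines to $D_1$ through $E$. Everything there hinges on $E\notin D_1$ (so that there really are two tangents from $E$) and on $h$ having a \emph{positive} ratio (so that the common tangents produced this way keep both disks on one side, i.e.\ are the \emph{outer} ones rather than the inner bitangents), and both are immediate consequences of the non-nesting hypotheses. The equal-radius degeneracy costs only the small rewording via the direction vector $v$ above and is not a real obstacle.
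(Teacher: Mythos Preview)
Your proof is correct and considerably more thorough than the paper's own justification. The paper treats this as an observation and offers only a one-line argument for the second bullet: from the intersection point of the two tangent lines, the two tangent segments to a fixed circle have equal length, so the distance from the intersection point to the tangency point on \(D_1\) is the same along both tangents, and likewise for \(D_2\); the order along each tangent is therefore decided by comparing these two common lengths, and the comparison is the same for both tangents. The first bullet is not argued at all in the paper.

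Your route via the external centre of similitude is genuinely different. Instead of the equal-tangent-length fact, you identify the intersection point as the centre \(E\) of the positive-ratio homothety \(h\) sending \(D_1\) to \(D_2\), show \(E\notin D_1\) from the non-nesting hypothesis, and then read off the order from \(h(a_1)=a_2\) with positive ratio. This buys you a cleaner unification of the two cases (the translation for \(r_1=r_2\) is just the limit of the homothety) and, as a by-product, an explicit identification of where the tangents meet. The paper's argument is shorter and more elementary but tacitly assumes the tangents do meet and leaves the parallel case to the reader. Your normal-vector count of the outer tangents and your boundary classification for the convex hull are also additions the paper omits; they are fine as written.
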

To see that the second observation is true, note that the distance from the intersection point to
the tangent points of a disk is the same for both tangent lines.
These observations mean that we can restrict our attention to the area bounded by the outer
tangents and define an ordering in the resulting strip.

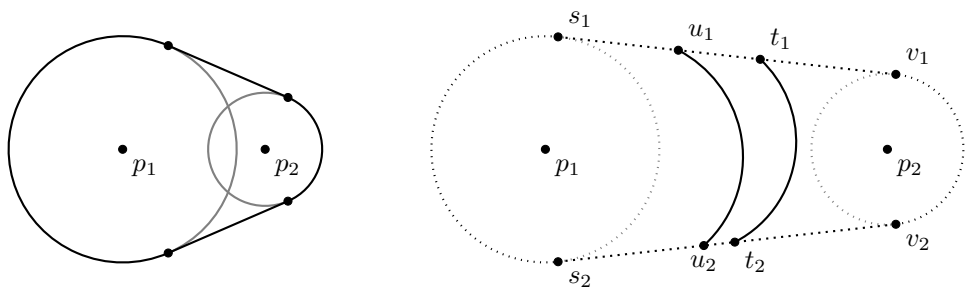
\begin{figure}[tb]
\begin{minipage}{.35\linewidth}
\centering
\begin{tikzpicture}[scale=.75]
\tkzInit[xmin=-2.5,xmax=4,ymin=-2.5,ymax=2.5]
\tkzClip
\tkzDefPoint(0, 0){p_1}
\tkzDefPoint(2.5, 0){p_2}

\tkzExtSimilitudeCenter(p_1, 2)(p_2, 1)
\tkzGetPoint{O}
\tkzDefTangent[from with R=O](p_1, 2 cm)
\tkzGetPoints{A}{B}
\tkzDefTangent[from with R=O](p_2, 1 cm)
\tkzGetPoints{A'}{B'}

\tkzDrawArc[color=black,thick](p_1,A)(B)
\tkzDrawArc[color=black,thick](p_2,B')(A')
\tkzDrawArc[color=gray,thick](p_1,B)(A)
\tkzDrawArc[color=gray,thick](p_2,A')(B')
\tkzDrawSegments[thick](A,A' B,B')
\tkzDrawPoints(p_1, p_2, A, B, A', B')
\tkzLabelPoints(p_1, p_2)
\end{tikzpicture}
\end{minipage}%
\begin{minipage}{.65\linewidth}
\centering
\begin{tikzpicture}[scale=1]
\tkzInit[xmin=-2,xmax=6,ymin=-2,ymax=2]
\tkzClip
\tkzDefPoint(0, 0){p_1}
\tkzDefPoint(4.5, 0){p_2}
\tkzDefPoint(1.8, .1){p_i}
\tkzDefPoint(1, -.1){p_j}

\tkzExtSimilitudeCenter(p_1, 1.5)(p_2, 1)
\tkzGetPoint{O}
\tkzDefTangent[from with R=O](p_1, 1.5 cm)
\tkzGetPoints{s_1}{s_2}
\tkzDefTangent[from with R=O](p_2, 1 cm)
\tkzGetPoints{v_1}{v_2}

\tkzInterLC[R](s_1,v_1)(p_i,1.5cm)
\tkzGetPoints{dummy1}{t_1}
\tkzInterLC[R](s_2,v_2)(p_i,1.5cm)
\tkzGetPoints{t_2}{dummy2}

\tkzInterLC[R](s_1,v_1)(p_j,1.6cm)
\tkzGetPoints{dummy1}{u_1}
\tkzInterLC[R](s_2,v_2)(p_j,1.6cm)
\tkzGetPoints{u_2}{dummy2}

\tkzDrawArc[color=black,thick](p_i,t_2)(t_1)
\tkzDrawArc[color=black,thick](p_j,u_2)(u_1)
\tkzDrawArc[color=black,dotted,thick](p_1,s_1)(s_2)
\tkzDrawArc[color=black,dotted,thick](p_2,v_2)(v_1)
\tkzDrawArc[color=gray,dotted,thick](p_1,s_2)(s_1)
\tkzDrawArc[color=gray,dotted,thick](p_2,v_1)(v_2)
\tkzDrawSegments[thick,dotted](s_1,v_1 s_2,v_2)
\tkzDrawPoints(p_1, p_2, s_1, s_2, v_1, v_2, t_1, t_2, u_1, u_2)
\tkzLabelPoints(p_1,p_2,s_2,t_2,v_2)
\tkzLabelPoints[above right](s_1,u_1,t_1,v_1)
\tkzLabelPoints[below](u_2)
\end{tikzpicture}
\end{minipage}
\caption{Left: Illustration for \cref{obs:disks}.
The convex hull of the disks is highlighted in black.
The order in which the outer tangents touch the disks is the same.
Right: Illustration for \cref{def:order}.
Here \(O_1\) (\(t_1\) to \(t_2\)) is to the right of \(O_2\) (\(u_1\) to \(u_2\)).}
\label{fig:disks}
\end{figure}

\begin{definition}\label{def:order}
Given two distinct non-degenerate disks \(D_1 \coloneqq D(p_1, r_1)\) and
\(D_2 \coloneqq D(p_2, r_2)\), consider a strip defined by the lines that form the outer tangents to
the disks.
Assume we have two circular arcs \(O_1, O_2\) that intersect both tangents and lie inside the strip.
Define \(s_1\) and \(v_1\) to be the points where one of the tangents touches \(D_1\) and \(D_2\),
respectively, and let \(t_1\) and \(u_1\) be the points where \(O_1\) and \(O_2\) intersect that
tangent, respectively.
Define the order on the tangents from \(D_1\) to \(D_2\), so \(s_1 \prec v_1\).
Define points \(s_2\), \(t_2\), \(u_2\), \(v_2\) similarly for the other tangent.
We say that \(O_2\) is \emph{to the right} of \(O_1\) if either \(t_i = u_i\) for \(i \in \{1, 2\}\)
and the radius of \(O_1\) is larger than that of \(O_2\); or if otherwise \(t_i \preccurlyeq u_i\)
for \(i \in \{1, 2\}\) and \(O_1\) and \(O_2\) do not properly intersect.
We say that \(O_2\) is \emph{to the left} of \(O_1\) if either \(t_i = u_i\) for \(i \in \{1, 2\}\)
and the radius of \(O_1\) is smaller than that of \(O_2\); or if otherwise \(u_i \preccurlyeq t_i\)
for \(i \in \{1, 2\}\) and \(O_1\) and \(O_2\) do not properly intersect.
(See \cref{fig:disks} for a visual interpretation.)
\end{definition}

We are now ready to state the main result for the Hausdorff distance.
\begin{lemma}\label{lem:hausdorff_disks_main}
Given \(n \in \N^{> 0}\), \(n \geq 3\), for any imprecise curve modelled with disks
\(\U = \langle U_1, \dots, U_n\rangle\) with \(U_i = D(c_i, r_i)\) for all \(i \in [n]\) and
\(c_i \in \Rt\), \(r_i \in \R^{\geq 0}\) for all \(i \in [n]\), and assuming \(U_1 \neq U_n\), we
have
\[\max_{\pi \Subset \U} \hs(\pi, \langle\pi(1), \pi(n)\rangle) \leq \varepsilon\]
if and only if both of the following are true:
\vspace{\abovedisplayskip}
\begin{itemize}
\item \(\displaystyle\max\Big\{\max_{\pi \Subset \U, \pi(1) \equiv s, \pi(n) \equiv t} \hs(\pi, st),
\max_{\pi \Subset \U, \pi(1) \equiv u, \pi(n) \equiv v} \hs(\pi, uv)\Big\}
\leq \varepsilon\,,\)\hfil\\[\belowdisplayskip]
where \(s, u \in U_1\), \(t, v \in U_n\), and \(st\) and \(uv\) are the outer tangents to
\(U_1 \cup U_n\); and
\item for each \(i \in \{2, \dots, n - 1\}\), the right arc of the disk
\(D(c_i, \varepsilon - r_i)\) bounded by the intersection points with the tangent lines is to the
right of the right arc of \(U_1\) and the left arc of the disk \(D(c_i, \varepsilon - r_i)\) is to
the left of the left arc of \(U_n\).
\end{itemize}
\end{lemma}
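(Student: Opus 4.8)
The plan is to strip off the two end vertices using \cref{lem:hausdorff_disks}, reformulate the remaining condition as a stabbing statement about the disks \(D_i' \coloneqq D(c_i, \varepsilon - r_i)\), and then prove that statement by reasoning inside the strip bounded by the two outer tangents of \(U_1 \cup U_n\). Concretely, splitting the maximum over the endpoint realisations exactly as in \cref{lem:indecisive_main} and applying \cref{lem:hausdorff_disks} to the innermost maximum gives
\[
\max_{\pi \Subset \U} \hs(\pi, \langle\pi(1),\pi(n)\rangle)
= \multiadjustlimits{\max_{p_1 \in U_1}, \max_{p_n \in U_n},
\max_{\pi \Subset \U, \pi(1) \equiv p_1, \pi(n) \equiv p_n}} \hs(\pi, p_1p_n)
= \max_{p_1 \in U_1} \max_{p_n \in U_n} \max_{i \in \{2,\dots,n-1\}} \big(d(c_i, p_1p_n) + r_i\big) .
\]
Hence the left-hand side is at most \(\varepsilon\) iff \(d(c_i, p_1p_n) \le \varepsilon - r_i\) for every intermediate \(i\) and every \(p_1 \in U_1\), \(p_n \in U_n\). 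If \(\varepsilon < r_i\) for some intermediate \(i\), both this and the first bulleted condition fail (already the tangent realisations give \(d(c_i,\cdot) + r_i \ge r_i > \varepsilon\)), so I may assume \(\varepsilon \ge r_i\) and set \(D_i' \coloneqq D(c_i, \varepsilon - r_i)\); by \cref{lem:aux_disk_segm} (with a radius-zero disk at \(c_i\)), \(d(c_i, p_1p_n) \le \varepsilon - r_i\) is equivalent to \(p_1p_n \cap D_i' \neq \emptyset\). The lemma thus reduces to the geometric claim: \emph{every segment with one endpoint in \(U_1\) and one in \(U_n\) meets \(D_i'\) iff both outer tangent segments of \(U_1 \cup U_n\) meet \(D_i'\) and the arc condition of the second bullet holds for \(i\).} Taking the conjunction over all \(i\) and applying \cref{lem:hausdorff_disks} once more to the two tangent realisations \(s,t\) and \(u,v\) turns ``both tangent segments meet all \(D_i'\)'' into the first bulleted condition, recovering the statement.

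For the claim I would work inside the closed region \(\Sigma\) bounded by the two outer tangent lines of \(U_1 \cup U_n\) (a strip if these lines are parallel, a wedge otherwise), which exists by \cref{obs:disks}; since \(U_1, U_n \subseteq \Sigma\), every segment between them lies in \(\Sigma\), and I will use the order along \(\Sigma\) from \(U_1\) to \(U_n\) from \cref{def:order}. For the ``only if'' direction: the two outer tangent segments are themselves segments from \(U_1\) to \(U_n\), so they meet \(D_i'\); and if, say, the right arc of \(D_i'\) is not to the right of the right arc of \(U_1\), then (since meeting both tangent segments already fixes where \(D_i'\) cuts the tangent lines) the two front arcs properly cross, so some point \(q\) of \(U_1\)'s boundary lies outside \(D_i'\) and ``beyond'' it towards \(U_n\) along \(\Sigma\); using the equal-order property of \cref{obs:disks} to route consistently, the segment from \(q\) to a suitable point of \(U_n\) that hugs the appropriate tangent then avoids \(D_i'\). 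The left-arc case is symmetric.

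For the ``if'' direction: meeting both tangent segments forces \(\partial D_i'\) to cross both bounding lines of \(\Sigma\), so \(D_i' \cap \Sigma\) is a convex band that separates \(\Sigma\) into a ``near'' component and a ``far'' component; the two arc conditions say precisely that, along the order of \(\Sigma\), \(D_i'\) reaches past the right arc of \(U_1\) and before the left arc of \(U_n\), which places \(U_1 \setminus D_i'\) in the near component and \(U_n \setminus D_i'\) in the far one. Any segment from \(U_1\) to \(U_n\) is then a connected subset of \(\Sigma\) joining the two components, hence meets the band \(D_i' \cap \Sigma \subseteq D_i'\).

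The hard part will be making this connectivity argument rigorous: one must verify that \(D_i' \cap \Sigma\) genuinely separates \(\Sigma\) and, in particular, that a segment cannot slip through the vanishing gap between \(D_i'\) and a tangent line near a point of tangency; that the ``to the right''/``to the left'' relations of \cref{def:order} translate exactly into ``\(U_1\) and \(U_n\) lie in the two different components of \(\Sigma \setminus D_i'\)''; and that the witness segments built for the ``only if'' direction really stay in \(\Sigma\) and really miss \(D_i'\). I would also dispatch the degenerate cases at the start: \(r_1 = 0\) or \(r_n = 0\); one of \(U_1, U_n\) containing or touching the other (where the outer tangents or \(\Sigma\) degenerate, and, if \(U_1 \cap U_n \neq \emptyset\), the ``segment'' can collapse to a point that must then lie in every \(D_i'\)); and \(\varepsilon = r_i\) (where \(D_i'\) is a single point) — handling each directly rather than through the strip argument.
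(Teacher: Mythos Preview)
Your proposal is correct and follows essentially the same route as the paper: both first use \cref{lem:hausdorff_disks} to reduce the question to whether every segment from \(U_1\) to \(U_n\) stabs each \(D(c_i,\varepsilon-r_i)\), then argue inside the strip bounded by the outer tangents that the two tangent checks plus the arc-ordering conditions are equivalent to this universal stabbing property, handling the intersecting/degenerate endpoint cases separately. Your write-up is in fact a bit more systematic than the paper's (you state the per-\(i\) stabbing claim explicitly and flag the separation and degenerate-case verifications as the places needing care), but the underlying geometric ideas are identical.
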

\begin{proof}
Assume the right side of the \lcnamecref{lem:hausdorff_disks_main} statement holds.
First of all, as we have
\(\max_{\pi \Subset \U, \pi(1) \equiv s, \pi(n) \equiv t} \hs(\pi, st) \leq \varepsilon\),
\cref{lem:hausdorff_disks} shows that for all \(i \in \{2, \dots, n - 1\}\), we have
\(d(c_i, st) + r_i \leq \varepsilon\), or \(d(c_i, st) \leq \varepsilon - r_i\), so
\(st\) stabs each disk \(D(c_i, \varepsilon - r_i)\).
We can draw a similar conclusion for \(uv\).
Therefore, each disk \(D(c_i, \varepsilon - r_i)\) crosses the entire strip bounded by the tangent
lines, with the intersection points splitting it into the left and the right circular arcs.
We can thus apply \cref{def:order} to these arcs, as stated in the lemma.

First suppose that the disks \(U_1\) and \(U_n\) do not intersect.
Then for any line segment from \(U_1\) to \(U_n\) and any disk
\(D' \coloneqq D(c_i, \varepsilon - r_i)\), we exit \(D'\) after exiting \(U_1\) and enter \(D'\)
before entering \(U_n\).
Hence, for any line \(pq\) with \(p \in U_1\) and \(q \in U_n\) and any
\(i \in \{2, \dots, n - 1\}\), we can find a point \(w \in pq \cap D'\); this means, as stated in
\cref{lem:aux_disk_segm}, that indeed \(\max_{w' \in U_i} d(w', pq) \leq \varepsilon\).
As this holds for all disks and any choice of \(p\) and \(q\), we conclude that
\(\max_{\pi \Subset \U} \hs(\pi, \langle\pi(1), \pi(n)\rangle) \leq \varepsilon\).

Now assume that the disks \(U_1\) and \(U_n\) intersect.
If we consider the line segments \(pq\) with \(p \in U_1\), \(q \in U_n\), we end up in the previous
case if either \(p \notin U_1 \cap U_n\) or \(q \notin U_1 \cap U_n\).
So assume that the segment \(pq\) lies entirely in the intersection \(U_1 \cap U_n\).
However, it can be seen that for each disk \(D' \coloneqq D(c_i, \varepsilon - r_i)\), the left
boundary of the intersection is to the right of the left boundary of the disk, and the right
boundary of the intersection is to the left of the right boundary of the disk; hence,
\(pq \subset U_1 \cap U_n \subseteq D'\).
Therefore, we have \(\max_{w' \in U_i} d(w', pq) \leq \varepsilon\), and so also in this case
\(\max_{\pi \Subset \U} \hs(\pi, \langle\pi(1), \pi(n)\rangle) \leq \varepsilon\).

We now assume that the right side of the lemma statement is false and show that then
\(\max_{\pi \Subset \U} \hs(\pi, \langle\pi(1), \pi(n)\rangle) > \varepsilon\).
If \(\max_{\pi \Subset \U, \pi(1) \equiv s, \pi(n) \equiv t} \hs(\pi, st) > \varepsilon\), then
immediately \(\max_{\pi \Subset \U} \hs(\pi, \langle\pi(1), \pi(n)\rangle) > \varepsilon\).
Same holds for \(uv\).
So, assume those statements hold; then it must be that for at least one intermediate disk the arcs
do not lie to the left or to the right of the arcs of the respective disks.
Assume this is disk \(i\), so the disk \(D' \coloneqq D(c_i, \varepsilon - r_i)\).
W.l.o.g.\@ assume that the right arc of the disk does not lie entirely to the right of the right arc
of \(U_1\).
The argument for the left arc w.r.t.\@ \(U_n\) is symmetric.

There must be at least one point \(p'\) on the right arc of \(U_1\) that lies outside of \(D'\).
Assume for now that \(U_1\) and \(U_n\) are disjoint.
Then a line segment \(p'q\) for any \(q \in U_n\) does not stab \(D'\), so
\(\max_{w' \in U_i} d(w', pq) > \varepsilon\), and so
\(\max_{\pi \Subset \U} \hs(\pi, \langle\pi(1), \pi(n)\rangle) > \varepsilon\).
If \(U_1\) and \(U_n\) intersect, then either \(p'\) is outside of the intersection
and of \(D'\) and there is a point \(q \in U_n\) such that \(p'q\) does not stab \(D'\); or we can
pick the degenerate line segment \(p'p'\), as \(p' \in U_1 \cap U_n\), and so \(p'p'\) also does not
stab \(D'\).
In either case, we conclude that
\(\max_{\pi \Subset \U} \hs(\pi, \langle\pi(1), \pi(n)\rangle) > \varepsilon\).
\end{proof}
It is also worth noting that the case of \(U_1 = U_n\) is similar to how we treat the intersection
\(U_1 \cap U_n\) above; however, our \lcnamecref{def:order} for the ordering between two disks does
not apply.
So, if \(U_1 = U_n\), then \(\max_{\pi \Subset \U} \hs(\pi, \langle\pi(1), \pi(n)\rangle) \leq \varepsilon\)
if and only if \(U_1 \subseteq D(c_i, \varepsilon - r_i)\) for all \(i \in \{2, \dots, n - 1\}\).

Similarly, we state the following for the Fr\'echet distance.
\begin{lemma}\label{lem:frechet_disks_main}
Given \(n \in \N^{> 0}\), \(n \geq 3\), for any imprecise curve modelled with disks
\(\U = \langle U_1, \dots, U_n\rangle\) with \(U_i = D(c_i, r_i)\) for all \(i \in [n]\) and
\(c_i \in \Rt\), \(r_i \in \R^{\geq 0}\) for all \(i \in [n]\), and assuming \(U_1 \neq U_n\), we
have
\[\max_{\pi \Subset \U} \fr(\pi, \langle\pi(1), \pi(n)\rangle) \leq \varepsilon\]
if and only if both of the following are true:
\vspace{\abovedisplayskip}
\begin{itemize}
\item \(\displaystyle\max\Big\{\max_{\pi \Subset \U, \pi(1) \equiv s, \pi(n) \equiv t} \fr(\pi, st),
\max_{\pi \Subset \U, \pi(1) \equiv u, \pi(n) \equiv v} \fr(\pi, uv)\Big\}
\leq \varepsilon\,,\)\hfil\\[\belowdisplayskip]
where \(s, u \in U_1\), \(t, v \in U_n\), and \(st\) and \(uv\) are the outer tangents to
\(U_1 \cup U_n\); and
\item for each \(i \in \{2, \dots, n - 1\}\), the right arc of the disk
\(D(c_i, \varepsilon - r_i)\) bounded by the intersection points with the tangent lines is to the
right of the right arc of \(U_1\) and the left arc of the disk \(D(c_i, \varepsilon - r_i)\) is to
the left of the left arc of \(U_n\).
\end{itemize}
\end{lemma}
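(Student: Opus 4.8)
The plan is to follow the proof of \cref{lem:hausdorff_disks_main} closely, since the geometric setup is identical; the only genuinely new ingredient is that, for the Fr\'echet distance, the disks $D(c_i, \varepsilon - r_i)$ must be stabbed by a sequence of points that appear \emph{in the correct order} along the realisation segment, not merely stabbed at all. First I would record the setup. By \cref{lem:frechet_disks} (more precisely, the disk-stabbing reformulation used inside its proof), for a realisation with $\pi(1) = p \in U_1$ and $\pi(n) = q \in U_n$ we have $\fr(\pi, pq) \le \varepsilon$ for all intermediate choices iff there are points $p'_i \in pq \cap D(c_i, \varepsilon - r_i)$ for $i \in \{2, \dots, n-1\}$ with $p'_i \preccurlyeq p'_{i+1}$ along $pq$. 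Since $\hs \le \fr$, the first bullet of the statement forces, via \cref{lem:hausdorff_disks}, that $d(c_i, st) \le \varepsilon - r_i$ and $d(c_i, uv) \le \varepsilon - r_i$ for every intermediate $i$; hence each disk $D(c_i, \varepsilon - r_i)$ crosses the entire strip bounded by the two outer tangents, so \cref{def:order} applies to the arcs mentioned in the second bullet, exactly as in \cref{lem:hausdorff_disks_main}. Moreover, with the Fr\'echet distance the first bullet additionally yields, via \cref{lem:frechet_disks}, ordered stabbings of these disks along $st$ and along $uv$.

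For the ``if'' direction I would fix an arbitrary realisation $p \in U_1$, $q \in U_n$ and proceed in two steps. Step one: every disk $D(c_i, \varepsilon - r_i)$ is stabbed by $pq$. This is word-for-word the argument of \cref{lem:hausdorff_disks_main}: if $U_1 \cap U_n = \emptyset$, then along $pq$ we leave $U_1$ before we could leave $D(c_i, \varepsilon - r_i)$ and enter $U_n$ after we could have entered $D(c_i, \varepsilon - r_i)$, so $pq$ meets $D(c_i, \varepsilon - r_i)$ in the gap between $U_1$ and $U_n$; if $U_1 \cap U_n \ne \emptyset$, then either the realisation reduces to the previous case or $pq \subseteq U_1 \cap U_n \subseteq D(c_i, \varepsilon - r_i)$, where the last inclusion is exactly what the second bullet asserts about the left and right arcs. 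Step two: the stabbing points can be taken in order. When $U_1 \cap U_n \ne \emptyset$ and $pq \subseteq U_1 \cap U_n$, any single point of $pq$ stabs all intermediate disks, so the constant sequence works. Otherwise $pq$ crosses the ``gap region'' $R$ bounded by the two tangents, the right arc of $U_1$, and the left arc of $U_n$; here I would run the greedy procedure of \cref{alg:disks_frechet} along $pq$ and argue it never fails by a sandwich/monotonicity argument: the ``earliest feasible stabbing position of $D(c_i, \varepsilon - r_i)$ after the previous disk'', read off in the natural coordinate along the strip, depends monotonically on the realisation segment as it sweeps from $st$ to $uv$ inside the wedge of \cref{obs:disks}; since greedy succeeds on both $st$ and $uv$ (first bullet), and the arcs of the $D(c_i, \varepsilon - r_i)$ do not cross the arcs of $U_1$ and $U_n$ in a way that reorders them (second bullet together with \cref{obs:disks} and \cref{def:order}), greedy succeeds on $pq$. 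Hence $\fr(\pi, pq) \le \varepsilon$, and as $pq$ was arbitrary, $\max_{\pi \Subset \U} \fr(\pi, \langle \pi(1), \pi(n)\rangle) \le \varepsilon$.

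For the ``only if'' direction I would argue the contrapositive, again mirroring \cref{lem:hausdorff_disks_main}: if the first bullet fails, one of the two tangent realisations already has $\fr > \varepsilon$; if the second bullet fails, then for some intermediate $i$ the right arc of $D(c_i, \varepsilon - r_i)$ is not entirely to the right of the right arc of $U_1$ (or the symmetric failure for $U_n$), and I would exhibit a realisation segment $pq$ that does not stab $D(c_i, \varepsilon - r_i)$ at all — taking $p$ on the offending part of $U_1$'s right arc and $q$ arbitrary in $U_n$, or the degenerate segment inside $U_1 \cap U_n$ when the disks intersect — so even $\hs(\pi, pq) > \varepsilon$, and therefore $\fr(\pi, pq) > \varepsilon$ as well.

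The step I expect to be the main obstacle is step two of the ``if'' direction: upgrading ``each disk is stabbed'' to ``the disks can be stabbed in order''. The subtlety is that the index order of the intermediate disks need not match the order in which a realisation segment meets them, so one cannot simply reuse a per-disk interval-stabbing argument; the whole reason the first bullet must use the Fr\'echet (ordered) version rather than the weaker Hausdorff version is that it certifies greedy advances correctly, and transferring this certificate from the two extreme tangent segments to an arbitrary realisation is precisely where \cref{obs:disks} and \cref{def:order} — the non-crossing of the relevant arcs inside the strip — do the real work.
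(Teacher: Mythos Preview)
Your ``only if'' direction and the setup of the ``if'' direction match the paper. The divergence is in Step~2 of the ``if'' direction, and there your proposal has a real gap.

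The paper does \emph{not} rerun greedy on the arbitrary segment \(pq\) and argue a sandwich/monotonicity principle. Instead it constructs the alignment on \(pq\) explicitly from the two extreme alignments. Take the ordered Fr\'echet stabbing points \(a_2,\dots,a_{n-1}\) on \(st\) and \(b_2,\dots,b_{n-1}\) on \(uv\) supplied by the first bullet, and connect each \(a_i\) to \(b_i\) by the geodesic shortest path inside the region \(R\) bounded by the two tangents and the \emph{inner} arcs of \(U_1\) and \(U_n\). The point \(p'_i\) is then the intersection of \(pq\) with the \(i\)-th geodesic. Two facts finish the argument: (i) the geodesics do not cross (since \(a_i \preccurlyeq a_{i+1}\), \(b_i \preccurlyeq b_{i+1}\), and geodesics in a simply connected region with ordered endpoints cannot cross), so the \(p'_i\) are ordered along \(pq\); (ii) each geodesic stays inside \(D(c_i,\varepsilon-r_i)\) --- the straight portion does by convexity, and wherever the geodesic hugs an inner arc of \(U_1\) or \(U_n\), the second bullet is precisely the statement that this arc lies inside \(D(c_i,\varepsilon-r_i)\). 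When \(U_1\cap U_n\neq\emptyset\) the region \(R\) breaks into two pieces and the paper splices the two geodesics by the chord between the two intersection points of the inner arcs.

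Your monotonicity claim, as stated, does not go through without essentially rediscovering this construction. The greedy position \(s_i\) lives on \(pq\), and there is no canonical ``strip coordinate'' in which positions on \(st\), \(uv\), and \(pq\) are simultaneously comparable (the segments need not be parallel, and the family of realisation segments is two-parameter, not a one-parameter sweep). Even granting some coordinate, the quantity you want monotone is a coupled recursion --- each \(s_i\) depends both on \(s_{i-1}\) and on where \(pq\) first enters \(D(c_i,\varepsilon-r_i)\) --- and non-crossing of the arcs of \(U_1\), \(U_n\) with those of \(D(c_i,\varepsilon-r_i)\) does not by itself control the interaction between \(D(c_i,\varepsilon-r_i)\) and \(D(c_j,\varepsilon-r_j)\). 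The geodesic device is exactly what converts ``ordered on \(st\)'' and ``ordered on \(uv\)'' into ``ordered on every \(pq\)'' without having to formalise a monotone-sweep statement; you correctly flagged this step as the obstacle, but the mechanism you propose for it is not yet a proof.
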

\begin{proof}
First assume that \(\max_{\pi \Subset \U} \fr(\pi, \langle\pi(1), \pi(n)\rangle) \leq \varepsilon\).
As \(\fr(\pi, \sigma) \leq \hs(\pi, \sigma)\) for any curves \(\pi\), \(\sigma\), this also means
that \(\max_{\pi \Subset \U} \hs(\pi, \langle\pi(1), \pi(n)\rangle) \leq \varepsilon\).
Furthermore, immediately we get that
\(\max_{\pi \Subset \U, \pi(1) \equiv s, \pi(n) \equiv t} \fr(\pi, st) \leq \varepsilon\), and the
same for \(uv\).
Together with \cref{lem:hausdorff_disks_main}, this yields the right side of the lemma.

Now assume that the right side holds.
As in \cref{lem:hausdorff_disks_main}, we know that the disks cross the entire strip and that
\cref{def:order} applies.
It remains to show that for any line segment \(pq\) with \(p \in U_1\), \(q \in U_n\), there is a
valid alignment that maintains the correct ordering and bottleneck distance, assuming it exists for
every realisation for \(st\) and \(uv\).
Consider a valid alignment established for \(st\) and \(uv\), so the sequence of points \(a_i\) on
\(st\) and \(b_i\) on \(uv\) that are mapped to \(U_i\).
As we showed in \cref{lem:aux_disk_point}, we can always find such points for each individual
\(U_i\), and as we know that the Fr\'echet distance is below the threshold for \(st\) and \(uv\),
there is such a valid alignment, i.e.\@ we know that \(a_i \preccurlyeq a_{i + 1}\) and
\(b_i \preccurlyeq b_{i + 1}\) for all \(i \in [n - 1]\).

First suppose that the disks \(U_1\) and \(U_n\) do not intersect.
Consider the region \(R\) bounded by the outer tangents and the disk arcs that are not part of the
convex hull of \(U_1 \cup U_n\).
We connect, for each \(i \in \{2, \dots, n - 1\}\), \(a_i\) to \(b_i\) with a geodesic shortest path
in \(R\).
We claim that for any line segment \(pq\) defined above, the intersection points of the shortest
paths with the segment give a valid alignment, yielding
\(\max_{\pi \Subset \U, \pi(1) \equiv p, \pi(n) \equiv q} \fr(\pi, pq) \leq \varepsilon\).
As the choice of \(pq\) was arbitrary, this will complete the proof.

To show that the alignment is valid, we need to show that the order is correct and that the distances
fall below the threshold.
First consider the case where the geodesic shortest path for point \(i\) does not touch the boundary
formed by arcs of region \(R\).
In this case, it is simply a line segment \(a_ib_i\).
Note that by definition \(a_i, b_i \in D(c_i, \varepsilon - r_i)\); as disks are convex, also
\(a_ib_i \subset D(c_i, \varepsilon - r_i)\); thus, the intersection point \(p'_i\) of \(pq\) with
\(a_ib_i\) is in \(D(c_i, \varepsilon - r_i)\), so by \cref{lem:aux_disk_point},
\(\max_{w \in U_i} \lVert p'_i - w\rVert \leq \varepsilon\).
Furthermore, note that \(a_i \preccurlyeq a_{i + 1}\) and \(b_i \preccurlyeq b_{i + 1}\);
thus, the line segments \(a_ib_i\) and \(a_{i + 1}b_{i + 1}\) cannot cross, so also
\(p'_i \preccurlyeq p'_{i + 1}\).

Now w.l.o.g.\@ consider the case where the geodesic shortest path for point \(i\) touches the arc of
\(U_1\).
The geodesic shortest paths do not cross: on the path from \(a_i\) (or \(b_i\)) to the arc they
form a tangent to the arc, thus for \(a_i \preccurlyeq a_{i + 1}\) the tangent point for \(a_i\)
comes before that of \(a_{i + 1}\) when going along the arc from \(s\) to \(u\).
So, just as in the previous case, these line segments cannot cross.
Having reached the arc, both shortest paths will follow it, as otherwise the path would not be
a shortest path; thus, the arcs do not cross, either.
Finally, a path from the previous case does not touch any path that touches the arc boundary of
\(R\) by definition.
Finally, note that the condition that we have established on the right arcs of disks being to the
right of the right arc of \(U_1\) (and symmetric for the left arcs and \(U_n\)) means that the
geodesic shortest paths that touch the arc boundary of \(R\) stay within the respective disks
\(D(c_i, \varepsilon - r_i)\).
Thus, we have established that for all \(i\) we have \(p'_i \preccurlyeq p'_{i + 1}\) and
\(\max_{w \in U_i} \lVert p'_i - w\rVert \leq \varepsilon\), concluding the proof for disjoint
\(U_1\) and \(U_n\).

Finally, consider the case where \(U_1\) intersects \(U_n\).
Above we used geodesic paths within the region \(R\).
However, when \(U_1\) intersects \(U_n\), \(R\) consists of two disconnected regions.
Observe that one region contains \(a_i\) and the other contains \(b_i\).
To connect \(a_i\) with \(b_i\) we use the geodesic from \(a_i\) to the intersection point of the
two inner boundaries of \(U_1\) and \(U_n\) that is in the same region of \(R\), the geodesic from
\(b_i\) to the other intersection point of the inner boundaries, and join these two by a line
segment between the intersection points.
Any line segment from a point in \(U_1\) to a point in \(U_n\) crosses these paths in order, just
like in the previous case.
If the line segment goes through the intersection, note that any point in the intersection is close
enough to all the intermediate objects, as the intersection is the subset of each disk.
So, any point in the intersection can be chosen to establish the trivially in-order alignment to all
the intermediate objects.
\end{proof}
Again, in the case that \(U_1 = U_n\), we can see that
\(\max_{\pi \Subset \U} \fr(\pi, \langle\pi(1), \pi(n)\rangle) \leq \varepsilon\)
if and only if \(U_1 \subseteq D(c_i, \varepsilon - r_i)\) for all \(i \in \{2, \dots, n - 1\}\).

\subsection{Non-intersecting PCCSs}\label{sec:non_int_pccs}
Suppose the regions are modelled by convex polygons.
Consider first the case where the interiors of \(U_1\) and \(U_n\) do not intersect, so at most
they share a boundary segment.
\begin{observation}\label{obs:pccs}
Given an uncertain curve modelled by convex polygons \(\U = \langle U_1, \dots, U_n\rangle\) with
the interiors of \(U_1\) and \(U_n\) not intersecting, note:
\begin{itemize}
\item There are two \emph{outer tangents} to the polygons \(U_1\) and \(U_n\), and the convex hull
of \(U_1 \cup U_2\) consists of a convex chain from \(U_1\), a convex chain from \(U_n\), and the
outer tangents.
\item Let \(C_i\) be the convex chain from \(U_i\) that is not part of the convex hull for
\(i \in \{1, n\}\).
Then for \(\delta \coloneqq \hs\) or \(\delta \coloneqq \fr\),
\[\max_{\pi \Subset \U} \delta(\pi, \langle\pi(1), \pi(n)\rangle) \leq \varepsilon \iff
\max_{\pi \Subset \U, \pi(1) \in C_1, \pi(n) \in C_n}
\delta(\pi, \langle\pi(1), \pi(n)\rangle) \leq \varepsilon\,.\]
\end{itemize}
\end{observation}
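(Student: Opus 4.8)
The plan is to establish the two items separately. The first is a standard fact about the convex hull of two convex polygons with disjoint interiors: $\partial\,\mathrm{conv}(U_1\cup U_n)$ splits into a connected sub-chain $O_1$ of $\partial U_1$, a connected sub-chain $O_n$ of $\partial U_n$, and the two \emph{bridge} segments joining an endpoint of $O_1$ to an endpoint of $O_n$; these bridges are the outer tangents. Writing $C_i$ for the complementary chain $\overline{\partial U_i\setminus O_i}$ — which shares its two endpoints (the tangent vertices) with $O_i$ and, together with $O_i$, covers all of $\partial U_i$ — gives exactly the claimed decomposition. I would note that the only degeneracy, already flagged in the statement, is that $U_1$ and $U_n$ may touch at a single point or along a boundary segment, in which case the bridges collapse; the argument below is unaffected, so I would describe the generic case and dispose of degenerate configurations (touching or lower-dimensional regions, coincident tangent vertices) by direct inspection.

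For the second item, the direction ``$\Leftarrow$'' in the form ``left side $\le\varepsilon\Rightarrow$ right side $\le\varepsilon$'' is immediate, since $C_1\subseteq U_1$ and $C_n\subseteq U_n$ make every realisation with $\pi(1)\in C_1,\pi(n)\in C_n$ a realisation, so the right-hand maximum is at most the left-hand one. For the converse the core is this geometric lemma: \emph{for every $p_1\in U_1$ and $p_n\in U_n$, the segment $p_1p_n$ contains a sub-segment $q_1q_n$ with $q_1\in C_1$, $q_n\in C_n$, $q_1$ no later than $q_n$ along $p_1p_n$, and $p_1p_n\subseteq U_1\cup q_1q_n\cup U_n$.} I would prove it by letting $q_1$ be the last point of $p_1p_n$ lying in $U_1$ and $q_n$ the first point lying in $U_n$ (well defined by convexity of $U_1,U_n$; $q_1$ precedes $q_n$ because the interiors are disjoint, with $q_1=q_n$ only in the touching case). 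To see $q_1\in C_1$: the whole segment lies in $H\coloneqq\mathrm{conv}(U_1\cup U_n)$, while at any relative-interior point of $O_1$ the curves $\partial U_1$ and $\partial H$ coincide locally, so $U_1$ and $H$ coincide locally there; hence $p_1p_n$ cannot leave $U_1$ at such a point without leaving $H$. Thus $q_1\in C_1$ (possibly a tangent vertex), and symmetrically $q_n\in C_n$.

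Given the lemma, fix any realisation $\pi=\langle p_1,\dots,p_n\rangle$ and set $\pi'\coloneqq\langle q_1,p_2,\dots,p_{n-1},q_n\rangle$, a realisation of $\U$ with $\pi'(1)\in C_1$, $\pi'(n)\in C_n$, so the hypothesis gives $\delta(\pi',q_1q_n)\le\varepsilon$. For $\delta=\hs$, \cref{lem:hausdorff_precise} yields $\hs(\pi,p_1p_n)=\max_{i\in\{2,\dots,n-1\}}d(p_i,p_1p_n)$ and $\hs(\pi',q_1q_n)=\max_{i\in\{2,\dots,n-1\}}d(p_i,q_1q_n)$; since $q_1q_n\subseteq p_1p_n$ we get $d(p_i,p_1p_n)\le d(p_i,q_1q_n)\le\varepsilon$ for each intermediate $i$, so $\hs(\pi,p_1p_n)\le\varepsilon$. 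For $\delta=\fr$, take a valid alignment of $\pi'$ onto $q_1q_n$ of width at most $\varepsilon$, described as in \cref{sec:prelims} by positions $s_2\le\dots\le s_{n-1}$ on $q_1q_n$ with $\lVert p_i-q_1q_n(s_i)\rVert\le\varepsilon$; because $q_1q_n$ is a sub-segment of $p_1p_n$ traversed in the same direction, these are points of $p_1p_n$ occurring in the same order, and matching $p_1,p_n$ to the endpoints of $p_1p_n$ extends this to a valid alignment of $\pi$ onto $p_1p_n$ of width at most $\varepsilon$, so $\fr(\pi,p_1p_n)\le\varepsilon$. As $\pi$ was arbitrary, the left-hand maximum is at most $\varepsilon$.

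I expect the main obstacle to be the geometric lemma — in particular, arguing cleanly that $p_1p_n$ exits $U_1$ through $C_1$ rather than through $O_1$, and making sure every degenerate configuration (regions sharing a boundary segment, so that an entire realisation segment may lie inside $C_1\cap C_n$; coincident tangent vertices; segments that meet $\partial U_1$ or $\partial U_n$ in only a single point) is covered, since these are exactly the situations the clean statement of the first item glosses over.
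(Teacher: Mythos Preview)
Your proposal is correct and follows essentially the same approach as the paper: the paper's entire justification of the second item is the three-sentence remark that any segment \(pq\) with \(p\in U_1\), \(q\in U_n\) crosses both \(C_1\) and \(C_n\), and that the valid alignment for the resulting sub-segment \(p'q'\) can be reused for \(pq\). You supply the details the paper omits---in particular, the argument that the exit point from \(U_1\) must lie on \(C_1\) rather than on the outer chain, and the explicit verification for each of \(\hs\) and \(\fr\)---but the underlying idea is identical.
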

To see that the second observation is true, note that one direction is trivial.
In the other direction, note that any line segment \(pq\) with \(p \in U_1\), \(q \in U_n\) crosses
both \(C_1\) and \(C_n\), say, at \(p' \in C_1\) and \(q' \in C_n\).
We know that there is a valid alignment for \(p'q'\), both for the Hausdorff and the Fr\'echet
distance; we can then use this alignment for \(pq\).
See \cref{fig:polygons}.

\begin{figure}[tb]
\begin{minipage}{.5\linewidth}
\centering
\begin{tikzpicture}[scale=2]
\tkzDefPoints{0/0/p,2/0/q}
\tkzDefPoints{-.5/.1/a_1,.1/.5/a_2,.4/-.3/a_3,.2/-.5/a_4,-.3/-.2/a_5}
\tkzDefPoints{1.5/-.2/b_1,1.5/.2/b_2,1.9/.5/b_3,2.4/-.3/b_4,2.1/-.5/b_5,1.6/-.3/b_6}

\tkzInterLL(p,q)(a_2,a_3)
\tkzGetPoint{p'}
\tkzInterLL(p,q)(b_1,b_2)
\tkzGetPoint{q'}

\tkzDrawPolySeg[color=gray](a_4,a_5,a_1,a_2)
\tkzDrawPolySeg[color=gray](b_3,b_4,b_5)
\tkzDrawSegments[color=gray](a_2,b_3 a_4,b_5)
\tkzDrawPolySeg[dashed](a_2,a_3,a_4)
\tkzDrawPolySeg[dashed](b_5,b_6,b_1,b_2,b_3)
\tkzDrawSegment(p,q)

\tkzDrawPoints(p,q,p',q')
\tkzLabelPoints[above right](p')
\tkzLabelPoints[above left](q')
\tkzLabelPoints[below](p,q)
\tkzLabelPoint[right](a_3){\(C_1\)}
\tkzLabelPoint[left](b_6){\(C_n\)}
\tkzLabelSegment[xshift=2pt](a_1,a_2){\(U_1\)}
\tkzLabelSegment[xshift=-2pt](b_3,b_4){\(U_n\)}
\end{tikzpicture}
\end{minipage}%
\begin{minipage}{.5\linewidth}
\centering
\begin{tikzpicture}[scale=2]
\tkzDefPoints{-.5/.1/a_1,.1/.5/a_2,.4/-.3/a_3,.2/-.5/a_4,-.3/-.2/a_5}
\tkzDefPoints{1.5/-.2/b_1,1.5/.2/b_2,1.9/.5/b_3,2.4/-.3/b_4,2.1/-.5/b_5,1.6/-.3/b_6}
\tkzDefPoint(.5, 0){R}

\tkzDrawPolySeg[dotted](a_4,a_5,a_1,a_2)
\tkzDrawPolySeg[dotted](b_3,b_4,b_5)
\tkzDrawPolygon(a_2,a_3,a_4,b_5,b_6,b_1,b_2,b_3)
\tkzDrawSegments[dashed](a_2,b_2 a_3,b_2 a_3,b_1 a_3,b_6 a_4,b_6)
\tkzLabelPoints[above right](R)
\end{tikzpicture}
\end{minipage}
\caption{Left: Illustration for \cref{obs:pccs}.
The convex hull of the disks is shown in grey.
The dotted chains are \(C_1\) and \(C_n\).
Any line segment \(pq\) with \(p \in U_1\) and \(q \in U_n\) crosses \(C_1\) and \(C_n\).
Right: Illustration for the procedure.
The region \(R\) is triangulated.}
\label{fig:polygons}
\end{figure}
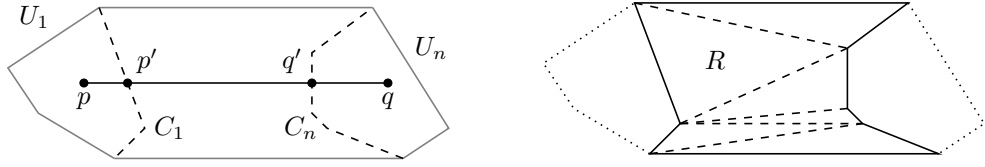

We claim that we can use the following procedure to check
\(\max_{\pi \Subset \U} \hs(\pi, \langle\pi(1), \pi(n)\rangle) \leq \varepsilon\).
\begin{enumerate}
\item Triangulate the region \(R\) bounded by two convex chains \(C_1\) and \(C_n\) and the outer
tangents.
\item For each line segment \(st\) of the triangulation with \(s \in C_1\), \(t \in C_n\), and for
either \(\delta \coloneqq \hs\) or \(\delta \coloneqq \fr\), check that
\(\max_{\pi \Subset \U, \pi(1) \equiv s, \pi(n) \equiv t} \delta(\pi, st) \leq \varepsilon\).
\end{enumerate}

First of all, observe that we can compute a triangulation, and that every triangle has two points
from one convex chain and one point from the other chain (see \cref{fig:polygons}).
If all three points were from the same chain, then the triangle would lie outside of \(R\).
Now consider some line segment \(pq\) with \(p \in C_1\), \(q \in C_n\).
To complete the argument, it remains to show that the checks in step~2 mean that also
\(\max_{\pi \Subset \U, \pi(1) \equiv p, \pi(n) \equiv q} \delta(\pi, pq) \leq \varepsilon\).
Observe that the triangles span across the region \(R\), so when going from one tangent to the other
within \(R\) we cross all the triangles.
Therefore, we can order them, in the order of occurrence on such a path, from \(1\) to \(k\).
Denote the alignment established on line \(j \in [k]\) with the sequence of \(a_i^j\), for
\(i \in [n]\); this alignment can be established both for \(\delta \coloneqq \hs\) and
\(\delta \coloneqq \fr\).
We can then establish polygonal curves \(A_i \coloneqq \langle a_i^1, \dots, a_i^k\rangle\);
clearly, they all stay within \(R\).
We claim that for any line segment \(pq\) defined above, it is possible to establish a valid
alignment from intersection points of \(pq\) and \(A_i\).
We do this separately for the Fr\'echet and the Hausdorff distance.

\begin{lemma}\label{lem:hausdorff_pccs_main}
Given a set of curves \(A \coloneqq \{A_2, \dots, A_{n - 1}\}\) in \(R\) described above for
\(\delta \coloneqq \hs\) and a line segment \(pq\) with \(p \in C_1\), \(q \in C_n\), we have
\(\max_{\pi \Subset \U, \pi(1) \equiv p, \pi(n) \equiv q} \hs(\pi, pq) \leq \varepsilon\).
\end{lemma}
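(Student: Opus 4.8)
The plan is to reduce the statement to the precise-curve formula \cref{lem:hausdorff_precise} and then, for each intermediate vertex, to produce a point of \(pq\) that lies within \(\varepsilon\) of it by using the curve \(A_i\) as a connected path that is forced to cross \(pq\). So, first I would fix any realisation \(\pi \Subset \U\) with \(\pi(1) \equiv p\) and \(\pi(n) \equiv q\), write \(p_i \coloneqq \pi(i)\), and invoke \cref{lem:hausdorff_precise}: \(\hs(\pi, pq) = \max_{i \in [n]} d(p_i, pq) = \max_{i \in \{2, \dots, n - 1\}} d(p_i, pq)\), the last equality because \(d(p, pq) = d(q, pq) = 0\). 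As \(\pi\) ranges over all realisations with the prescribed endpoints, it thus suffices to show \(d(p_i, pq) \leq \varepsilon\) for every intermediate \(i\).

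Next I would build the connecting path. Recall that the crossing edges \(\ell_1, \dots, \ell_k\) of the triangulation each join \(C_1\) to \(C_n\), are ordered as one sweeps \(R\) from one outer tangent to the other, and that \(\ell_1\) and \(\ell_k\) are precisely the two outer tangents of \(U_1\) and \(U_n\). Step~2 of the procedure together with \cref{lem:hausdorff_pccs} gives \(\max_{w \in U_i} d(w, \ell_j) \leq \varepsilon\) for every intermediate \(i\) and every \(j \in [k]\); in particular, since \(p_i \in U_i\), the closest point \(a_i^j \coloneqq \argmin_{r \in \ell_j} \lVert r - p_i\rVert\) of \(\ell_j\) to \(p_i\) satisfies \(\lVert p_i - a_i^j\rVert \leq \varepsilon\). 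The polygonal curve \(A_i = \langle a_i^1, \dots, a_i^k\rangle\) lies inside \(R\) (as observed just before the lemma), and since each \(a_i^j\) lies in the disk \(D(p_i, \varepsilon)\) and this disk is convex, the whole of \(A_i\) lies in \(D(p_i, \varepsilon)\); hence every point of \(A_i\) is within \(\varepsilon\) of \(p_i\).

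Finally I would force the crossing. The region \(R\) is a topological disk whose boundary, traversed once, visits the chain \(C_1\), the outer tangent \(\ell_k\), the chain \(C_n\), and the outer tangent \(\ell_1\) in this cyclic order. The curve \(A_i\) runs inside \(R\) from \(a_i^1 \in \ell_1\) to \(a_i^k \in \ell_k\), while the segment \(pq\) runs inside \(R\) from \(p \in C_1\) to \(q \in C_n\) — the latter being exactly the observation following \cref{obs:pccs}. Thus the endpoints of \(A_i\) and the endpoints of \(pq\) lie on the two complementary pairs of boundary arcs of \(R\), so the four points alternate around \(\partial R\); by the Jordan curve theorem \(A_i\) and \(pq\) must intersect. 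Choosing \(x \in A_i \cap pq\) yields \(d(p_i, pq) \leq \lVert p_i - x\rVert \leq \varepsilon\), and together with the reduction of the first paragraph this proves the lemma.

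I expect the last step to be the main obstacle: one has to pin down the cyclic structure of \(\partial R\) and the order of the crossing edges carefully enough that \(A_i\) is genuinely a path from one outer tangent to the other lying in \(R\), and one has to handle the degenerate configurations — \(U_1\) or \(U_n\) with empty interior, consecutive crossing edges sharing an endpoint, or \(p\), \(q\), or an endpoint of \(A_i\) coinciding with a corner of \(R\) — where the alternation of the four boundary points must be read with some care, or where the two curves already trivially share a point. Everything else is routine bookkeeping via \cref{lem:hausdorff_precise} and \cref{lem:hausdorff_pccs}; note in particular that, unlike the Fr\'echet case, no monotonicity of the alignment along \(A_i\) is needed here, since connectivity of \(A_i\) and a single crossing point suffice.
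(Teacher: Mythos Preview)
Your proof is correct and shares the paper's core idea --- use a crossing of \(A_i\) with \(pq\) to produce the required close point --- but the bookkeeping is organised differently. You fix a realisation first, take \(a_i^j\) to be the closest point of \(\ell_j\) to the particular \(p_i\), note that the resulting polyline lies in the convex disk \(D(p_i,\varepsilon)\), and conclude via \cref{lem:hausdorff_precise}. The paper instead treats the curves \(A_i\) as realisation-independent, uses \cref{lem:distance_segm} to argue that any point on a segment \(a_i^j a_i^{j+1}\) is within \(\varepsilon\) of \emph{every} point of \(U_i\), and closes with \cref{lem:hausdorff_pccs}. Your route is a touch more elementary (no \cref{lem:distance_segm}) and sidesteps the question of what exactly the Hausdorff-alignment point \(a_i^j\) should be for an uncertain region; the paper's route yields a single alignment point \(p'_i\in pq\) serving all of \(U_i\) at once, which is unnecessary here but parallels the Fr\'echet argument of \cref{lem:frechet_pccs_main}. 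One small remark: the observation ``\(A_i\) lies inside \(R\)'' that you cite was stated for the paper's curves, not your realisation-dependent ones, but the same reasoning (each segment \(a_i^j a_i^{j+1}\) lies in a triangle of the triangulation) applies verbatim. Your Jordan-curve justification for the crossing is also more explicit than the paper's one-line assertion.
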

\begin{proof}
Note that \(pq\) crosses each \(A_i\) at least once.
We can take any one crossing for each \(i\) and establish the alignment.
Consider such a crossing point \(p'_i\).
It falls in some triangle bounded by a segment from either \(C_1\) or \(C_n\) and two line segments
that contain points \(a_i^j\) and \(a_i^{j + 1}\) for some \(j \in [k]\).
We know, using \cref{lem:hausdorff_pccs}, that
\(\max_{w \in U_i} \lVert a_i^j - w\rVert \leq \varepsilon\) and
\(\max_{w \in U_i} \lVert a_i^{j + 1} - w\rVert \leq \varepsilon\).
Consider any point \(w' \in U_i\).
Then, using \cref{lem:distance_segm} with \(c \coloneqq d \coloneqq w'\), we find that
\(\lVert w' - p'_i\rVert \leq \varepsilon\).
Therefore, also \(\max_{w \in U_i} \lVert p'_i - w\rVert \leq \varepsilon\); using
\cref{lem:hausdorff_pccs}, we conclude that indeed
\(\max_{\pi \Subset \U, \pi(1) \equiv p, \pi(n) \equiv q} \hs(\pi, pq) \leq \varepsilon\).
\end{proof}

For the Fr\'echet distance, we can use the same argument to show closeness; however, we need more
care to establish the correct order for the alignment to be valid.
\begin{lemma}\label{lem:frechet_pccs_main}
Given a set of curves \(A \coloneqq \{A_2, \dots, A_{n - 1}\}\) in \(R\) described above
for \(\delta \coloneqq \fr\) and a line segment \(pq\) with \(p \in C_1\), \(q \in C_n\), we have
\(\max_{\pi \Subset \U, \pi(1) \equiv p, \pi(n) \equiv q} \fr(\pi, pq) \leq \varepsilon\).
\end{lemma}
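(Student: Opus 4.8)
The plan is to inherit the distance estimate from \cref{lem:hausdorff_pccs_main} and supply the one extra ingredient the Fr\'echet distance needs: that the crossing points of $pq$ with the curves $A_i$ can be selected \emph{in order} along $pq$. Fix an arbitrary realisation $\pi \Subset \U$ with $\pi(1) \equiv p$ and $\pi(n) \equiv q$, and write $w_i \coloneqq \pi(i) \in U_i$ for $i \in \{2, \dots, n - 1\}$. By \cref{obs:pccs}, the segment $pq$ runs inside $R$ from $C_1$ to $C_n$. Each $A_i$ is a simple polygonal curve from one outer tangent to the other: its pieces lie in successive triangles of the triangulation and it meets each diagonal only at its matched point. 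Hence $A_i$ is a crosscut of the polygon $R$ separating $C_1$ from $C_n$, so $pq$ meets every $A_i$.

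The crucial step is to show that $A_2, \dots, A_{n - 1}$ are pairwise non-crossing and occur in this order from $C_1$ to $C_n$. Each diagonal of the triangulation carries its own monotone alignment, so on every diagonal the matched points appear in the order $a_2, a_3, \dots, a_{n - 1}$ read from the $C_1$-endpoint towards the $C_n$-endpoint. Consider a triangle $T$, bounded by two diagonals (which share one vertex, lying on $C_1$ or on $C_n$) and by one chain edge; inside $T$ the portion of $A_i$ is the segment $a_i^j a_i^{j + 1}$ joining its matched points on the two bounding diagonals, and likewise for $A_{i + 1}$. Since $a_i$ precedes $a_{i + 1}$ on \emph{both} of these diagonals in the $C_1 \to C_n$ order, the chord of $T$ carrying the $A_{i + 1}$-portion splits $T$ into two convex pieces, and both endpoints $a_i^j$, $a_i^{j + 1}$ lie in the piece on the $C_1$-side; hence the whole $A_i$-portion lies in that piece and does not properly cross the $A_{i + 1}$-portion. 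As this holds in every triangle, and the two curves meet a given diagonal only at their matched points, $A_i$ and $A_{i + 1}$ never properly cross and $A_i$ stays on the $C_1$-side of $A_{i + 1}$. Iterating, $C_1, A_2, \dots, A_{n - 1}, C_n$ are nested in this order, so the closed region of $R$ between $C_1$ and $A_i$ grows with $i$.

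Now, for each $i \in \{2, \dots, n - 1\}$, let $s_i \in [1, 2]$ be the largest parameter for which $pq(s_i)$ still lies in the closed region of $R$ between $C_1$ and $A_i$; this is well defined since $pq(1) = p \in C_1$ while $pq(2) = q$ lies on the $C_n$-side of $A_i$, and $p'_i \coloneqq pq(s_i) \in pq \cap A_i$. By the nesting, $s_2 \leq s_3 \leq \dots \leq s_{n - 1}$. The point $p'_i$ lies on the segment $a_i^j a_i^{j + 1}$ forming the portion of $A_i$ in the triangle containing $p'_i$, so, exactly as in the proof of \cref{lem:hausdorff_pccs_main} (via \cref{lem:hausdorff_pccs} and \cref{lem:distance_segm} applied with $c \coloneqq d \coloneqq w_i$), we get $\lVert w_i - p'_i \rVert \leq \varepsilon$. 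Setting $s_1 \coloneqq 1$ and $s_n \coloneqq 2$, the sequence $\langle s_2, \dots, s_{n - 1}\rangle$ is a non-decreasing Fr\'echet alignment of $\pi$ with $pq$ in the sense of \cref{sec:prelims} whose bottleneck distance is at most $\varepsilon$; hence $\fr(\pi, pq) \leq \varepsilon$. Since $\pi$ was an arbitrary realisation with the prescribed endpoints, $\max_{\pi \Subset \U, \pi(1) \equiv p, \pi(n) \equiv q} \fr(\pi, pq) \leq \varepsilon$.

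I expect the non-crossing/ordering claim for the curves $A_i$ to be the main obstacle: one must rule out, triangle by triangle, that two of these curves swap order, using that consecutive triangulation diagonals share a vertex and that each diagonal's alignment is monotone. A few degenerate situations (two curves touching on a diagonal, $pq$ grazing one of the chains, or $A_{n - 1}$ touching $C_n$) need a short separate remark but do not affect the argument.
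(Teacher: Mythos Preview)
Your proof is correct and follows essentially the same approach as the paper: both show that the curves \(A_i\) do not cross triangle by triangle (using that \(a_i^j \preccurlyeq a_{i+1}^j\) on every diagonal), deduce a nesting of the regions between \(C_1\) and the \(A_i\), take the last exit point of \(pq\) from each such region to obtain an ordered alignment, and invoke the Hausdorff argument for the distance bound. Your write-up is slightly more explicit about the triangle geometry and the crosscut property of the \(A_i\), but the structure and key ideas are identical.
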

\begin{proof}
Compared to \cref{lem:hausdorff_pccs_main}, instead of taking any intersection point of \(pq\) with
each \(A_i\), we take the \emph{last} intersection point.

We need to show, first of all, that curves \(A_i\) and \(A_{i + 1}\) do not cross for any
\(i \in [n - 1]\).
Note that each curve \(A_i\) crosses each triangle once, so it suffices to show that a segment
\(a_i^ja_i^{j + 1}\) does not cross \(a_{i + 1}^ja_{i + 1}^{j + 1}\).
Indeed, as \(a_i^j \preccurlyeq a_{i + 1}^j\) and \(a_i^{j + 1} \preccurlyeq a_{i + 1}^{j + 1}\),
these line segments cannot cross.

Now consider, for each \(i \in \{2, \dots, n - 1\}\). the polygon \(P_i\) bounded by \(C_1\),
\(A_i\), and the corresponding segments of the outer tangents.
With the previous statement, it is easy to see that
\(P_2 \subseteq P_3 \subseteq \dots \subseteq P_{n - 1}\).
Assume this is not the case, so some \(P_i \not\subseteq P_{i + 1}\).
Then there is a point \(z \in P_i\), but \(z \notin P_{i + 1}\).
The point \(z\) falls into some triangle with lines \(j\) and \(j + 1\).
In this triangle, it means that \(z\) is between \(C_1\) and \(a_i^ja_i^{j + 1}\), but not between
\(C_1\) and \(a_{i + 1}^ja_{i + 1}^{j + 1}\).
However, as these segments do not cross, this would imply that \(a_{i + 1}^j \prec a_i^j\), but then
the check in step~2 would not pass for line \(j\).

Consider the points at which the line segment \(pq\) leaves the polygons \(P_i\) for the last time.
From the definition it is obvious that \(p \in P_i\) for all \(i \in \{2, \dots, n - 1\}\), so this
is well-defined.
Clearly, due to the subset relationship, the order of such points \(p'_i\) is correct, i.e.\@
\(p'_i \preccurlyeq p'_{i + 1}\).
Furthermore, each such \(p'_i \in A_i\), so using the arguments of \cref{lem:hausdorff_pccs_main}
we can show that also the distances are below \(\varepsilon\).
Thus, we conclude that indeed
\(\max_{\pi \Subset \U, \pi(1) \equiv p, \pi(n) \equiv q} \fr(\pi, pq) \leq \varepsilon\).
\end{proof}

The proofs of \cref{lem:hausdorff_pccs_main,lem:frechet_pccs_main} show us how to solve the problem
for two convex polygons with non-intersecting interiors.
We can also use them directly for the case of line segments that do not intersect except at
endpoints.
\begin{corollary}\label{cor:line_segments}
Given \(n \in \N^{> 0}\), \(n \geq 3\), for any imprecise curve modelled with line segments
\(\U = \langle U_1, \dots, U_n\rangle\) with \(U_i = p_i^1p_i^2 \subset \Rt\) for all \(i \in [n]\),
given a threshold \(\varepsilon \in \R^{> 0}\), and given that
\(U_1 \cap U_n \subset \{p_1^1, p_1^2\}\), and assuming that the triangles \(p_1^1 p_n^1 p_1^2\) and
\(p_1^2 p_n^1 p_n^2\) form a triangulation of the convex hull of \(U_1 \cup U_n\), we have
\[\max_{\pi \Subset \U} \delta(\pi, \langle\pi(1), \pi(n)\rangle) \leq \varepsilon\]
if and only if
\begin{align*}
\max\big\{
&\max_{\pi \Subset \U, \pi(1) \equiv p_1^1, \pi(n) \equiv p_n^1} \delta(\pi, p_1^1p_n^1)\,,\\
&\max_{\pi \Subset \U, \pi(1) \equiv p_1^2, \pi(n) \equiv p_n^1} \delta(\pi, p_1^2p_n^1)\,,\\
&\max_{\pi \Subset \U, \pi(1) \equiv p_1^2, \pi(n) \equiv p_n^2} \delta(\pi, p_1^2p_n^2)\big\}
\leq \varepsilon\,.
\end{align*}
\end{corollary}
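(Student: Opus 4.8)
The plan is to apply the machinery of \cref{obs:pccs} and \cref{lem:hausdorff_pccs_main,lem:frechet_pccs_main} directly to the special case where each uncertainty region is a line segment. First I would observe that the hypotheses of the corollary are exactly the line-segment specialisation of the setup in \cref{sec:non_int_pccs}: the assumption \(U_1 \cap U_n \subset \{p_1^1, p_1^2\}\) guarantees that the interiors of \(U_1\) and \(U_n\) do not intersect, so \cref{obs:pccs} applies. The convex chains \(C_1\) and \(C_n\) that are not part of the convex hull of \(U_1 \cup U_n\) are, in the line-segment case, simply the whole segments \(U_1\) and \(U_n\) themselves (or a sub-segment of each, depending on which endpoints are tangent); in any case \(C_1\) and \(C_n\) consist of at most one edge each, and the two outer tangents are \(p_1^1 p_n^2\) and \(p_1^2 p_n^1\) (after relabelling endpoints if needed so that the stated triangulation \(p_1^1 p_n^1 p_1^2\), \(p_1^2 p_n^1 p_n^2\) is the correct one for the convex hull of \(U_1 \cup U_n\)).

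Next I would invoke the procedure described just before \cref{lem:hausdorff_pccs_main}: by \cref{obs:pccs}, it suffices to check \(\max_{\pi \Subset \U, \pi(1) \equiv p, \pi(n) \equiv q} \delta(\pi, pq) \leq \varepsilon\) for all \(p \in C_1\), \(q \in C_n\), and by the triangulation argument it suffices to do this only for the segments \(st\) of the triangulation of the region \(R\) bounded by \(C_1\), \(C_n\), and the two outer tangents, with \(s \in C_1\) and \(t \in C_n\). In the line-segment case this region \(R\) is itself the convex hull of \(U_1 \cup U_n\) (a quadrilateral, or a triangle if a pair of endpoints coincides), and the triangulation given in the corollary statement, \(p_1^1 p_n^1 p_1^2\) and \(p_1^2 p_n^1 p_n^2\), has exactly two triangles, sharing the diagonal \(p_1^2 p_n^1\). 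The "spanning" segments of this triangulation that go from \(C_1\) to \(C_n\) are precisely the three edges listed in the statement: the two outer tangents \(p_1^1 p_n^1\) — wait, I should be careful here — the three segments \(p_1^1 p_n^1\), \(p_1^2 p_n^1\), \(p_1^2 p_n^2\) are exactly the edges of the two triangles that have one endpoint on \(U_1\) and one on \(U_n\) (the remaining edges \(p_1^1 p_1^2\) and \(p_n^1 p_n^2\) lie within a single \(U_i\)). So checking \(\delta\)-closeness for precisely these three aligned pairs is what the procedure of \cref{lem:hausdorff_pccs_main,lem:frechet_pccs_main} prescribes.

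Then I would close the argument in both directions. For the forward direction (the inequality on the left implies the three conditions on the right), I note that each of the three displayed segments is realisable as \(\langle \pi(1), \pi(n)\rangle\) for some realisation \(\pi \Subset \U\), since \(p_1^1, p_1^2 \in U_1\) and \(p_n^1, p_n^2 \in U_n\); hence the maximum over all realisations dominates each of the three individual quantities. For the converse, I apply \cref{lem:hausdorff_pccs_main} (for \(\delta \coloneqq \hs\)) or \cref{lem:frechet_pccs_main} (for \(\delta \coloneqq \fr\)): from the alignments established on the three triangulation segments one builds the curves \(A_i\) inside \(R\), and for an arbitrary line segment \(pq\) with \(p \in U_1\), \(q \in U_n\), \cref{obs:pccs} reduces to \(p \in C_1\), \(q \in C_n\), and then the cited lemmas produce a valid alignment from the intersections of \(pq\) with the \(A_i\), giving \(\delta(\pi, pq) \leq \varepsilon\) for every realisation \(\pi\).

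The main obstacle — really the only non-routine point — is the bookkeeping about labels and degenerate configurations: one must argue that the endpoints can always be named so that \(\{p_1^1 p_n^1 p_1^2,\ p_1^2 p_n^1 p_n^2\}\) is genuinely the triangulation of \(\operatorname{conv}(U_1 \cup U_n)\) induced by the two outer tangents, and that the degenerate cases (coincident endpoints, collinear segments, or \(U_1\) and \(U_n\) sharing an endpoint) still fit the framework of \cref{sec:non_int_pccs} — in particular that \(R\) degenerating to a single triangle, or to a segment, does not break the triangulation-spanning argument. Since the corollary explicitly assumes the triangulation is as stated, this reduces to checking that the assumption is consistent and that \cref{obs:pccs} and \cref{lem:hausdorff_pccs_main,lem:frechet_pccs_main} were stated generally enough to cover PCCSs that happen to be one-dimensional, which they were (the interior-based case analysis in those proofs simply skips the "interior" branch for line segments). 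Everything else is a direct specialisation.
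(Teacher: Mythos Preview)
Your proposal is correct and follows exactly the route the paper intends: the corollary is stated immediately after \cref{lem:hausdorff_pccs_main,lem:frechet_pccs_main} as a direct specialisation of the non-intersecting PCCS procedure to line segments, with the three listed segments being precisely the triangulation edges from \(C_1\) to \(C_n\). The paper in fact gives no separate proof beyond the remark that the previous lemmas ``can also be used directly for the case of line segments that do not intersect except at endpoints,'' so your write-up is if anything more detailed than what the paper provides.
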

We should note that in this particular case it is not necessary to use a triangulation, so we can
get rid of the second term; also in the previous proofs a convex partition could work instead, but a
triangulation is easier to define.

\subsection{Intersecting PCCSs}
We proceed to discuss the situation where the interiors of \(U_1\) and \(U_n\) intersect, or
where line segments \(U_1\) and \(U_n\) cross.
The argument is the same for both \(\delta \coloneqq \hs\) and \(\delta \coloneqq \fr\), but it is
easier to treat line segments and convex polygons separately.

\paragraph*{Line segments.}
Assume line segments \(U_1 \eqdef p_1^1p_1^2\) and \(U_n \eqdef p_n^1p_n^2\) cross; call their
intersection point \(s\).
Then we can use \cref{cor:line_segments} separately on pairs of
\(\{p_1^1s, sp_1^2\} \times \{p_n^1s, sp_n^2\}\).
Clearly, together this will cover the entire set of realisations of \(pq\) with \(p \in U_1\),
\(q \in U_n\), thus completing the checks.
\begin{lemma}\label{lem:line_main}
Given \(n \in \N^{> 0}\), \(n \geq 3\), for any imprecise curve modelled with line segments
\(\U = \langle U_1, \dots, U_n\rangle\) with \(U_i = p_i^1p_i^2 \subset \Rt\) for all \(i \in [n]\),
given a threshold \(\varepsilon \in \R^{> 0}\), we can check for both \(\delta \coloneqq \hs\) and
\(\delta \coloneqq \fr\), using procedures above, that
\[\max_{\pi \Subset \U} \delta(\pi, \langle\pi(1), \pi(n)\rangle) \leq \varepsilon\,.\]
\end{lemma}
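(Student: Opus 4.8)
The plan is to reduce the general line-segment case to the two situations that have already been settled: non-crossing segments (\cref{cor:line_segments}) and the simple structure created by a single crossing point. First I would do a case distinction on how \(U_1 \eqdef p_1^1p_1^2\) and \(U_n \eqdef p_n^1p_n^2\) relate. If their interiors are disjoint and they meet in at most an endpoint, then \(U_1 \cap U_n \subset \{p_1^1, p_1^2\}\) holds (after possibly relabelling the endpoints of \(U_1\)), and \cref{cor:line_segments} applies directly: we triangulate the convex hull of \(U_1 \cup U_n\) by the two triangles \(p_1^1 p_n^1 p_1^2\) and \(p_1^2 p_n^1 p_n^2\) and test the three resulting shortcut segments with the fixed-endpoint procedures of \cref{sec:intermediate} (\cref{lem:hausdorff_ind} for \(\delta \coloneqq \hs\), \cref{lem:frechet_indecisive} for \(\delta \coloneqq \fr\), treating each line segment \(U_i\) as a PCCS with \(k = 2\) vertices, or equivalently invoking \cref{lem:hausdorff_pccs}/\cref{lem:frechet_pccs}).

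Second, if \(U_1\) and \(U_n\) cross transversally, let \(s\) be their intersection point. I would split each of \(U_1\) and \(U_n\) at \(s\), obtaining \(U_1 = p_1^1 s \cup s p_1^2\) and \(U_n = p_n^1 s \cup s p_n^2\). For each of the four pairs in \(\{p_1^1 s, s p_1^2\} \times \{p_n^1 s, s p_n^2\}\), the two sub-segments share only the endpoint \(s\), so each pair satisfies the hypothesis of \cref{cor:line_segments} (with \(s\) playing the role of the shared endpoint), and we apply that corollary to each pair. The key observation is that every realisation \(pq\) with \(p \in U_1\), \(q \in U_n\) has \(p\) in one of the two halves of \(U_1\) and \(q\) in one of the two halves of \(U_n\); hence \(pq\) is a realisation covered by exactly one of the four invocations. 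Therefore \(\max_{\pi \Subset \U} \delta(\pi, \langle\pi(1), \pi(n)\rangle) \leq \varepsilon\) if and only if all four invocations report success, and each invocation is a finite set of fixed-endpoint shortcut tests. The degenerate subcases — where \(U_1\) and \(U_n\) overlap along a common sub-segment, or where one contains the other — are handled by the same splitting idea: split both segments at all shared endpoints and at the endpoints of the overlap, which partitions each into at most three collinear pieces, and note that on any piece contained in \(U_1 \cap U_n\) every point is within distance \(\varepsilon\) of all intermediate regions precisely when that piece, viewed as a degenerate shortcut \(p'p'\), passes the fixed-endpoint test; pieces not in the overlap again fall under \cref{cor:line_segments}.

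The main obstacle I anticipate is purely bookkeeping rather than geometric: making sure that the finitely many crossing/overlap endpoints partition \(U_1\) and \(U_n\) into collinear pieces such that (i) every pair of pieces either shares at most one endpoint or is nested inside the overlap region, so that \cref{cor:line_segments} or the degenerate-segment argument applies, and (ii) the union of the realisation sets over all piece-pairs is exactly \(\{pq : p \in U_1, q \in U_n\}\). Both are straightforward: a segment crossing or overlapping another segment is cut into at most three pieces, giving at most \(3 \times 3 = 9\) pairs, and collinearity guarantees that any sub-segment of \(U_1 \cap U_n\) is contained in each intermediate \(\varepsilon\)-neighbourhood exactly when its (single) point passes the degenerate test — which is already implied by the corresponding \cref{cor:line_segments} check on an adjacent non-degenerate piece, since the fixed-endpoint procedures are monotone in the start/end points along a segment. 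Assembling these observations yields the claimed checkable characterisation for both \(\delta \coloneqq \hs\) and \(\delta \coloneqq \fr\).
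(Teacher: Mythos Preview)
Your proposal is correct and follows essentially the same approach as the paper: split \(U_1\) and \(U_n\) at their crossing point \(s\) and invoke \cref{cor:line_segments} on each of the four resulting pairs, observing that every realisation \(pq\) falls into one of them. The paper's argument is terser and does not spell out the collinear-overlap degenerate subcases you discuss, but your extra bookkeeping is sound and does not depart from the paper's strategy.
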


\paragraph*{Convex polygons.}
Convex polygons whose interiors intersect can be partitioned along the intersection lines, so into
a convex polygon \(R \coloneqq U_1 \cap U_n\) and two sets of polygons
\(\Pm_1 \coloneqq \{P_1^1, \dots, P_1^k\}\) and
\(\Pm_n \coloneqq \{P_n^1, \dots, P_n^\ell\}\) for some \(k, \ell \in \N^{> 0}\).
Just as for line segments, we can look at pairs from \(\Pm_1 \times \Pm_n\) separately.
The pairs where \(R\) is involved are treated later.
Consider some \((P, Q) \in \Pm_1 \times \Pm_n\).
Note that \(P\) and \(Q\) are convex polygons with a convex cut-out, so the boundary forms a convex
chain, followed by a concave chain.
We need to compute some convex polygons \(P'\) and \(Q'\) with non-intersecting interiors that are
equivalent to \(P\) and \(Q\), so that we can apply the approaches from \cref{sec:non_int_pccs}.

We claim that we can simply take the convex hull of \(P\) and \(Q\) to obtain \(P'\) and \(Q'\).
Clearly, the resulting polygons will be convex.
Also, the concave chains of \(P\) are bounded by points \(s\) and \(t\) and are replaced with the
line segment \(st\); same happens for \(Q\) with point \(u\) and \(v\).
The points \(s, t, u, v\) are points of intersection of original polygons \(U_1\) and \(U_n\), so
they lie on the boundary of \(R\), and their order along that boundary can only be \(s, t, u, v\)
or \(s, t, v, u\).
Thus, it cannot happen that \(st\) crosses \(uv\), and it cannot be that \(uv\) is in the interior
of the convex hull of \(P\), as otherwise \(R\) would not be convex.
Hence, the interiors of \(P'\) and \(Q'\) cannot intersect, so they satisfy the necessary
conditions.

Finally, we need to show that the solution for \((P', Q')\) is equivalent to that for \((P, Q)\).
One direction is trivial, as \(P \subseteq P'\) and \(Q \subseteq Q'\); for the other
direction, consider any line segment that leaves \(P\) through the concave chain.
In our approach, we test the lines starting in \(s\) and \(t\); the established alignments
are connected into paths.
The paths \(A_i\) do not cross \(st\).
So, any alignment in the region of
\(\mathrm{CH}(P \cup Q) \setminus (P \cup Q)\) can also be made in the region
\(\mathrm{CH}(P' \cup Q') \setminus (P' \cup Q')\).
So, this approach yields valid solutions for all pairs not involving \(R\).

Now consider the pair \((R, R)\).
A curve may now consist of a single point, so the approach for the Fr\'echet and the Hausdorff
distance is the same: all the points of \(U_i\) need to be close enough to all the points of \(R\).
To check that, observe that the pair of points \(p \in U_i\) and \(q \in R\) that has maximal
distance has the property that \(p\) is an extreme point of \(U_i\) in direction \(qp\) and \(q\) is
an extreme point of \(R\) in direction \(pq\).
So, it suffices, starting at the rightmost point of \(U_i\) and leftmost point of \(R\) in some
coordinate system, to then rotate clockwise around both regions keeping track of the distance
between tangent points.
Note that only vertices need to be considered, as the extremal point cannot lie on an edge.
Finally, any other pair that involves \(R\) is covered by the stronger case of \((R, R)\): for any
line we can align every intermediate object to any point in \(R\).
\begin{lemma}\label{lem:polygon_main}
Given \(n \in \N^{> 0}\), \(n \geq 3\), for any imprecise curve modelled with convex polygons
\(\U = \langle U_1, \dots, U_n\rangle\) with \(U_i \subset \Rt\) for all \(i \in [n]\) and
\(V(U_i) = \{p_i^1, \dots, p_i^k\}\) for all \(i \in [n]\), \(k \in \N^{> 0}\), given a threshold
\(\varepsilon \in \R^{> 0}\), we can check for both \(\delta \coloneqq \hs\) and
\(\delta \coloneqq \fr\), using procedures above, that
\[\max_{\pi \Subset \U} \delta(\pi, \langle\pi(1), \pi(n)\rangle) \leq \varepsilon\,.\]
\end{lemma}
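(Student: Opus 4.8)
The plan is to assemble the pieces developed in this \lcnamecref{sec:shortcut} into a single case analysis driven by the relative position of \(U_1\) and \(U_n\). First I would check whether the interiors of \(U_1\) and \(U_n\) are disjoint (for line segments this is the earlier, separately handled case, so here I assume genuine convex polygons).

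In the disjoint case, I would invoke \cref{obs:pccs} to reduce the realisations of the endpoints to the two convex chains \(C_1, C_n\) that are not on the convex hull of \(U_1 \cup U_n\). I would then triangulate the region \(R\) between \(C_1\), \(C_n\), and the two outer tangents, and for every triangulation edge \(st\) with \(s \in C_1\), \(t \in C_n\) run the fixed-endpoint test of \cref{lem:hausdorff_pccs} for \(\delta \coloneqq \hs\), respectively \cref{lem:frechet_pccs} for \(\delta \coloneqq \fr\). If all these tests pass, \cref{lem:hausdorff_pccs_main} (resp.\ \cref{lem:frechet_pccs_main}) shows every segment \(pq\) with \(p \in C_1\), \(q \in C_n\) is also within \(\varepsilon\), and together with \cref{obs:pccs} this yields \(\max_{\pi \Subset \U} \delta(\pi, \langle\pi(1),\pi(n)\rangle) \leq \varepsilon\); conversely, if some fixed-endpoint test fails, that realisation already witnesses a distance exceeding \(\varepsilon\).

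In the intersecting case, I would cut \(U_1\) and \(U_n\) along the (at most two) lines through their boundary intersection points, obtaining \(R \coloneqq U_1 \cap U_n\) and polygon sets \(\Pm_1, \Pm_n\). Every realisation \(pq\) of the first/last segment lies in some pair \(P \times Q\) with \((P,Q) \in (\Pm_1 \cup \{R\}) \times (\Pm_n \cup \{R\})\), so it suffices to handle each pair. For \((P,Q)\) with \(P \neq R \neq Q\), I would replace \(P, Q\) by their convex hulls \(P', Q'\): the replaced concave chains are chords \(st\) and \(uv\) between intersection points lying on \(\partial R\), whose cyclic order on \(\partial R\) is \(s,t,u,v\) or \(s,t,v,u\), so \(st\) and \(uv\) do not cross and neither lies in the interior of the other hull (otherwise \(R\) would fail to be convex); hence \(P', Q'\) have disjoint interiors and the previous case applies. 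Equivalence of the \((P',Q')\) test and the \((P,Q)\) test follows because the alignment paths \(A_i\) built above do not cross the chord \(st\), so any alignment needed in \(\mathrm{CH}(P \cup Q) \setminus (P \cup Q)\) can be carried out in \(\mathrm{CH}(P' \cup Q') \setminus (P' \cup Q')\). For the pair \((R,R)\), a realisation of the segment may degenerate to a single point of \(R\), so for both distances the condition is exactly that every point of each \(U_i\) lies within \(\varepsilon\) of every point of \(R\); I would verify this by the rotating-tangent argument — the farthest pair \(p \in U_i\), \(q \in R\) has \(p\) extreme in direction \(qp\) and \(q\) extreme in direction \(pq\), so only vertices need be inspected — and note that every remaining pair involving \(R\) is dominated by \((R,R)\), since any intermediate region can be aligned to a single fixed point of \(R\).

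The step I expect to be the main obstacle is the convex-hull replacement in the intersecting case: one must argue carefully that passing to \(P', Q'\) neither discards valid alignments nor introduces spurious ones, which hinges on the claim that the paths \(A_i\) stay on the correct side of the chords \(st, uv\) and on the cyclic ordering of the four intersection points on \(\partial R\) ruling out crossing or nesting of the chords. The \((R,R)\) rotating-tangent step and the observation that it dominates the mixed pairs are, by comparison, routine.
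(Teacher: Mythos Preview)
Your proposal follows the paper's argument essentially verbatim: reduce via \cref{obs:pccs} and the triangulation of \(R\) in the disjoint case, then in the intersecting case partition into \(R = U_1 \cap U_n\) and the outside pieces, replace each pair \((P,Q)\) by convex hulls \((P',Q')\) using the cyclic-order argument on \(\partial R\), handle \((R,R)\) by the rotating-tangent extremal-pair check, and note that mixed pairs are dominated by \((R,R)\). One small inaccuracy: the parenthetical ``(at most two) lines through their boundary intersection points'' is wrong --- two convex \(k\)-gons can have \(\Theta(k)\) boundary intersection points, which is precisely why the paper writes \(\Pm_1 = \{P_1^1,\dots,P_1^k\}\) and \(\Pm_n = \{P_n^1,\dots,P_n^\ell\}\) as genuine sets; your argument does not actually rely on this bound, so the slip is harmless, but you should drop it.
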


\section{Combining Steps}\label{sec:graph}
In \cref{sec:shortcut,sec:intermediate} we have established correctness of the routines that can be
used to check if a shortcut is valid under either the Hausdorff distance or the Fr\'echet distance.
In this \lcnamecref{sec:graph}, we summarise the approach, discuss the
shortcut graph, and analyse the running times.

\begin{lemma}\label{lem:shortcut}
Given \(n \in \N^{> 0}\), for any uncertain curve modelled with
indecisive points, disks, or PCCSs \(\U = \langle U_1, \dots, U_n\rangle\), and given a threshold
\(\varepsilon \in \R^{> 0}\), and fixing either \(\delta \coloneqq \hs\) or \(\delta \coloneqq \fr\),
if we can check in time \(T\) for any pair \(i, j \in [n]\), \(i < j\) that
\[\max_{\pi \Subset \U[i: j]} \delta(\pi, \langle\pi(1), \pi(j - i + 1)\rangle) \leq \varepsilon\,,\]
then in time \(\bO(Tn^2)\) we can find the shortest index subsequence \(I \subseteq [n]\) with
\(\lvert I\rvert = \ell\) such that for all \(j \in [\ell]\),
\[\max_{\pi \Subset \U[I(j): I(j + 1)]} \delta(\pi, \langle\pi(1), \pi(I(j + 1) - I(j) + 1)\rangle)
\leq \varepsilon\,.\]
\end{lemma}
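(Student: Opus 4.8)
The plan is to reduce the problem to a shortest-path computation in the \emph{shortcut graph}, following the outline of \cref{sec:overview}.

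First I would build a directed graph \(G\) on the vertex set \([n]\): for each pair \(i < j\), insert the edge \((i, j)\) exactly when
\[\max_{\pi \Subset \U[i: j]} \delta(\pi, \langle\pi(1), \pi(j - i + 1)\rangle) \leq \varepsilon\,.\]
When \(j = i + 1\) this always holds (the subcurve has two vertices and coincides with the segment joining them), so I add such edges directly; when \(j - i + 1 \geq 3\) I invoke the assumed test, at cost \(T\). Over the \(\binom{n}{2} = \bO(n^2)\) pairs this takes \(\bO(Tn^2)\) time, and \(G\) has \(\bO(n^2)\) edges.

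Next I would establish the correspondence between directed paths from vertex \(1\) to vertex \(n\) in \(G\) and index subsequences \(I\) satisfying the condition of the lemma: an increasing sequence \(1 = v_0 < \dots < v_m = n\) is a path in \(G\) iff every consecutive pair \((v_t, v_{t + 1})\) is an edge, which by construction is precisely the stated shortcut condition applied to \(I(j) = v_{j - 1}\), \(I(j + 1) = v_j\); moreover the number of vertices on the path equals \(\ell = \lvert I\rvert\). The soundness half of this correspondence is where the care is needed, and it is exactly the point flagged in \cref{sec:overview}: each edge is valid for \emph{all} realisations simultaneously, so no coordination across subcurves is required. Spelling this out: for any fixed \(\pi \Subset \U\) and the induced realisation \(\pi'\) of \(I\), each subcurve \(\pi[I(j): I(j + 1)]\) is one realisation of \(\U[I(j): I(j + 1)]\) with endpoints \(\pi(I(j))\) and \(\pi(I(j + 1))\), so the edge condition gives \(\delta(\langle\pi(I(j)), \pi(I(j + 1))\rangle, \pi[I(j): I(j + 1)]) \leq \varepsilon\); since this holds segment by segment, \(\pi'\) is an \(\varepsilon\)-simplification of \(\pi\), and \(\pi\) was arbitrary. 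Conversely, if a consecutive pair of \(I\) is not an edge, a bad realisation of that subcurve, padded with arbitrary points in the remaining regions, yields a realisation \(\pi \Subset \U\) witnessing that \(I\) fails the lemma's condition.

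Finally, since minimising \(\lvert I\rvert\) is the same as minimising the number of vertices — equivalently the number of edges — on a \(1\)-to-\(n\) path in \(G\), I would run breadth-first search from vertex \(1\), obtaining a shortest such path in \(\bO(\lvert V\rvert + \lvert E\rvert) = \bO(n^2)\) time, and read off \(I\) from its vertices. The total cost is \(\bO(Tn^2) + \bO(n^2) = \bO(Tn^2)\). The main obstacle is the soundness step: making precise why shortcuts that are each already worst-case over all realisations may be concatenated freely. As indicated, this follows because an \(\varepsilon\)-simplification is defined by an independent per-segment distance bound while the edge condition quantifies over all realisations of the relevant subcurve; once that is in hand, the remainder is the standard shortcut-graph-plus-BFS argument together with a routine running-time count.
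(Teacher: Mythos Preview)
Your proposal is correct and follows exactly the same approach as the paper: build the shortcut graph on \([n]\), with an edge \((i,j)\) whenever the worst-case shortcut test passes, and then run BFS from \(1\) to \(n\). The paper's own proof is considerably terser and does not spell out the soundness argument you give about concatenating worst-case-valid shortcuts, but the structure and running-time analysis are identical.
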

\begin{proof}
The approach is simple: construct a graph \(G \coloneqq (V, E)\) with
\(V \coloneqq \{v_1, \dots, v_n\}\) and \((v_i, v_j) \in E\) if and only if
\(\max_{\pi \Subset \U[i: j]} \delta(\pi, \langle\pi(1), \pi(j - i + 1)\rangle) \leq \varepsilon\).
Clearly, this takes \(\bO(Tn^2)\) time.
Any path in the graph from \(v_1\) to \(v_n\) gives a subsequence for which the condition in the
statement of the \lcnamecref{lem:shortcut} holds; there are no simplifications that would not
correspond to such a path; thus, finding the shortest path in \(G\) using e.g.\@ BFS in time
\(\bO(n^2)\) indeed yields the answer.
\end{proof}
It is easy to see that the result of the \lcnamecref{lem:shortcut} is exactly the problem we were
trying to solve: obtaining a single simplification such that no matter which realisation of the
curve is chosen, the resulting realisation of the simplification is valid.

We now proceed to recap the methods for checking the shortcuts.
For indecisive points, one can test all combinations for the first and the last point of the
shortcut, as in \cref{lem:indecisive_main}, and for each such combination do the testing either for
the Hausdorff or the Fr\'echet distance, as in \cref{lem:hausdorff_ind,lem:frechet_indecisive}.

For imprecise points modelled with disks, it suffices to test the outer tangents and check some
extra conditions on the intermediate disks, as in
\cref{lem:hausdorff_disks_main,lem:frechet_disks_main}.
For the outer tangents, the testing can be done using the approaches of
\cref{lem:hausdorff_disks,lem:frechet_disks}.

For imprecise points modelled with line segments, one can split the first and the last one into
regions if they cross, as in \cref{lem:line_main}, and apply \cref{cor:line_segments} to each pair.
The testing of the outer tangents can be done using \cref{lem:hausdorff_pccs,lem:frechet_pccs} for
the Hausdorff and the Fr\'echet distance, respectively.

Finally, for imprecise points modelled with convex polygons, we again split the first and the last
one into regions if their interiors intersect, as in \cref{lem:polygon_main}, and apply
\cref{lem:hausdorff_pccs_main,lem:frechet_pccs_main}.
To test each shortcut with the fixed endpoints, we can again use
\cref{lem:hausdorff_pccs,lem:frechet_pccs}.

Having constructed the graph, we can find the shortest path through it from vertex corresponding to
\(U_1\) to that corresponding to \(U_n\), as discussed in \cref{lem:shortcut}.

\begin{theorem}\label{thm:main}
We can solve the problem of finding the shortest vertex-constrained simplification of an
uncertain curve, such that for any realisation the simplification is valid, both for the
Hausdorff and the Fr\'echet distance, and for uncertainty modelled using indecisive points,
disks, line segments, or convex polygons in time shown in \cref{tab:runningtime}.
\end{theorem}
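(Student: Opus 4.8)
The plan is to obtain \cref{thm:main} as an assembly of everything proved in \cref{sec:intermediate,sec:shortcut}, funnelled through \cref{lem:shortcut}. That \lcnamecref{lem:shortcut} already does the graph-theoretic part: if, for every pair \(i<j\), we can decide in time \(T\) whether \(\max_{\pi \Subset \U[i:j]} \delta(\pi, \langle\pi(1), \pi(j - i + 1)\rangle) \leq \varepsilon\), then building the shortcut graph on \(\bO(n^2)\) vertex pairs and running BFS costs \(\bO(Tn^2)\), and the path found is exactly the shortest vertex-constrained simplification valid for all realisations. So the whole proof reduces to pinning down the cost \(T\) of one shortcut test on a subcurve of length at most \(n\), in each of the four uncertainty models and for each of \(\delta \coloneqq \hs\) and \(\delta \coloneqq \fr\); all correctness claims are already discharged, so only the running-time bookkeeping remains (shortcuts of length two are trivially valid, so the graph is always connected).

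For indecisive points, I would use \cref{lem:indecisive_main} to fix the first and last realisation in \(\bO(k^2)\) ways and, for each, run the fixed-endpoint test of \cref{lem:hausdorff_ind} for \(\hs\), or \cref{alg:indecisive_frechet}, whose correctness is \cref{lem:frechet_indecisive}, for \(\fr\); each such test costs \(\bO(nk)\), so \(T = \bO(nk^3)\) and the bound is \(\bO(n^3k^3)\). For disks with \(U_1 \neq U_n\), \cref{lem:hausdorff_disks_main,lem:frechet_disks_main} reduce the test to two fixed-endpoint tests on the outer tangents, each \(\bO(n)\) by \cref{lem:hausdorff_disks,lem:frechet_disks}, together with, for each of the \(\bO(n)\) intermediate disks, forming \(D(c_i, \varepsilon - r_i)\), intersecting it with the two tangent lines, and doing the \(\bO(1)\) arc comparison of \cref{def:order}; the degenerate case \(U_1 = U_n\) needs only \(\bO(n)\) containment checks. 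Hence \(T = \bO(n)\) and the bound is \(\bO(n^3)\).

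For line segments, \cref{lem:line_main} splits \(U_1\) and \(U_n\) into \(\bO(1)\) subsegments when they cross, and \cref{cor:line_segments} reduces each of the \(\bO(1)\) resulting pairs to \(\bO(1)\) fixed-endpoint tests, each \(\bO(n)\) via \cref{lem:hausdorff_pccs,lem:frechet_pccs} with \(k = 2\); so \(T = \bO(n)\) and the bound is \(\bO(n^3)\). For convex polygons, \cref{lem:polygon_main} partitions \(U_1\) and \(U_n\) along their intersection boundary into \(R = U_1 \cap U_n\) and two families \(\Pm_1, \Pm_n\) of \(\bO(k)\) pieces each; for every pair one takes convex hulls, triangulates the strip between the two convex chains as in \cref{sec:non_int_pccs}, and runs \cref{lem:hausdorff_pccs,lem:frechet_pccs} on each triangle in \(\bO(nk)\), while the pair \((R, R)\) is handled by a clockwise rotating-calipers sweep in \(\bO(nk)\) and any other pair involving \(R\) is subsumed by it. To land on \(\bO(n^3k^3)\) rather than a careless \(\bO(n^3k^4)\), I would charge each pair's triangulation to the original convex-chain vertices of its two pieces: writing \(v_a\) (resp.\ \(w_b\)) for those vertex counts, the pieces tile \(U_1\) (resp.\ \(U_n\)) so \(\sum_a v_a = \bO(k)\) and \(\sum_b w_b = \bO(k)\); the strip for \((P_1^a, P_n^b)\) has \(\bO(v_a + w_b)\) triangles, hence \(\sum_{a,b} \bO(v_a + w_b) = \bO(k^2)\) triangles overall, each costing \(\bO(nk)\), giving \(T = \bO(nk^3)\) (all geometric preprocessing --- partitions, hulls, triangulations --- is dominated).

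The proof is mostly routine glue, so the one genuine obstacle is precisely this last bookkeeping step for convex polygons: one must argue that summing triangulation work over all piece-pairs does not cost an extra factor of \(k\), which rests on the pieces of each family having total complexity \(\bO(k)\). Once \(T\) is fixed in all four models, substituting into the \(\bO(Tn^2)\) bound of \cref{lem:shortcut} reproduces every entry of \cref{tab:runningtime}, and since \cref{lem:shortcut} already certifies that the extracted path is the shortest simplification valid for every realisation, this completes the argument.
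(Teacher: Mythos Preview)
Your proposal is correct and follows essentially the same route as the paper: invoke \cref{lem:shortcut} to reduce to bounding the per-shortcut cost \(T\), then read \(T\) off the fixed-endpoint procedures in each model, with the only non-trivial step being the convex-polygon bookkeeping that avoids the extra factor of \(k\) by summing \(\sum_{a,b}\bO(v_a + w_b) = \bO(k^2)\) over piece-pairs whose total complexities are \(\bO(k)\) on each side. The paper's proof carries out the identical calculation; your treatment of the \((R,R)\) pair and the degenerate disk case is slightly more explicit than the paper's, but this is cosmetic.
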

\begin{proof}
Correctness of the approaches has been shown before.
For the running time, observe that we need \(\bO(n^2T)\) time in any setting, due to the shortcut
graph construction.

For indecisive points, when testing a shortcut we do \(\bO(nk)\)-time testing for \(\bO(k^2)\)
combinations of starting and ending points, where \(k\) is the number of options per point.

For disks, we do a linear number of constant-time checks and two linear-time checks, getting
\(T \in \bO(n)\).

For line segments, we also do two (three) linear-time checks per part; two line segments can be
split into at most two parts each, so we repeat the process four times.
Either way, we get \(T \in \bO(n)\).

Finally, for convex polygons, assume the complexity of each polygon is at most \(k\).
Assume the partitioning resulting from two intersecting polygons yields \(\ell_1\) and \(\ell_2\)
parts for the first and the second polygon, respectively.
Denote the two polygons \(P\) and \(Q\) and the resulting parts with \(P_1, \dots, P_{\ell_1}\) and
\(Q_1, \dots, Q_{\ell_2}\), respectively.
Suppose part \(P_i\) has complexity \(k_i\) and part \(Q_j\) has complexity \(k'_j\), so
\(\lvert V(P_i)\rvert = k_i\) and \(\lvert V(Q_j)\rvert = k'_j\) for some \(i \in [\ell_1]\),
\(j \in [\ell_2]\).
We know that every vertex of the original polygons occurs in a constant number of parts, so
\(\sum_{i = 1}^{\ell_1} k_i \in \bO(k)\) and \(\sum_{j = 1}^{\ell_2} k'_j \in \bO(k)\); we also know
\(\ell_1 + \ell_2 \in \bO(k)\).
We consider all pairs from \(P\) and \(Q\), and for each pair we triangulate and do the checks on
the triangulation.
The triangulation can be done in time \(\bO((k_i + k'_j) \cdot \log (k_i + k'_j))\), yielding
\(\bO(k_i + k'_j)\) lines, each of which is tested in time \(\bO(nk)\).
The testing dominates, so we need \(\bO((k_i + k'_j) \cdot nk)\) time.
We are interested in
\[\sum_{i = 1}^{\ell_1} \sum_{j = 1}^{\ell_2} \bO((k_i + k'_j) \cdot nk) =
\bO(nk) \cdot \sum_{i = 1}^{\ell_1} \sum_{j = 1}^{\ell_2} \bO(k_i + k'_j) = \bO(nk^3)\,.\]
So, \(T \in \bO(nk^3)\) both for the Fr\'echet and the Hausdorff distance.
\end{proof}

\bibliographystyle{plainurl}
\bibliography{references}

\begin{thebibliography}{10}

\bibitem{agarwal:2016}
Pankaj~K. Agarwal, Boris Aronov, Sariel Har-Peled, Jeff~M. Phillips, Ke~Yi, and
  Wuzhou Zhang.
\newblock Nearest-neighbor searching under uncertainty {II}.
\newblock {\em {ACM} Transactions on Algorithms ({TALG})}, 13(1):3:1--3:25,
  December 2016.
\newblock \href {https://doi.org/10.1145/2955098} {\path{doi:10.1145/2955098}}.

\bibitem{agarwal:2017}
Pankaj~K. Agarwal, Alon Efrat, Swaminathan Sankararaman, and Wuzhou Zhang.
\newblock Nearest-neighbor searching under uncertainty {I}.
\newblock {\em Discrete \& Computational Geometry}, 58(3):705--745, July 2017.
\newblock \href {https://doi.org/10.1007/s00454-017-9903-x}
  {\path{doi:10.1007/s00454-017-9903-x}}.

\bibitem{agarwal}
Pankaj~K. Agarwal, Sariel Har-Peled, Nabil~H. Mustafa, and Yusu Wang.
\newblock Near-linear time approximation algorithms for curve simplification.
\newblock {\em Algorithmica}, 42(3):203--219, July 2005.
\newblock \href {https://doi.org/10.1007/s00453-005-1165-y}
  {\path{doi:10.1007/s00453-005-1165-y}}.

\bibitem{agarwal:2000}
Pankaj~K. Agarwal and Kasturi~R. Varadarajan.
\newblock Efficient algorithms for approximating polygonal chains.
\newblock {\em Discrete \& Computational Geometry}, 23(2):273--291, 2000.
\newblock \href {https://doi.org/10.1007/PL00009500}
  {\path{doi:10.1007/PL00009500}}.

\bibitem{ahn_imprecise}
Hee-Kap Ahn, Christian Knauer, Marc Scherfenberg, Lena Schlipf, and Antoine
  Vigneron.
\newblock Computing the discrete {F}r\'{e}chet distance with imprecise input.
\newblock {\em International Journal of Computational Geometry \&
  Applications}, 22(01):27--44, 2012.
\newblock \href {https://doi.org/10.1142/S0218195912600023}
  {\path{doi:10.1142/S0218195912600023}}.

\bibitem{bbmm_segment}
Sander P.~A. Alewijnse, Kevin Buchin, Maike Buchin, Stef Sijben, and Michel~A.
  Westenberg.
\newblock Model-based segmentation and classification of trajectories.
\newblock {\em Algorithmica}, 80(8):2422--2452, August 2018.
\newblock \href {https://doi.org/10.1007/s00453-017-0329-x}
  {\path{doi:10.1007/s00453-017-0329-x}}.

\bibitem{barequet}
Gill Barequet, Danny~Z. Chen, Ovidiu Daescu, Michael~T. Goodrich, and Jack~S.
  Snoeyink.
\newblock Efficiently approximating polygonal paths in three and higher
  dimensions.
\newblock {\em Algorithmica}, 33(2):150--167, 2002.
\newblock \href {https://doi.org/10.1007/s00453-001-0096-5}
  {\path{doi:10.1007/s00453-001-0096-5}}.

\bibitem{bringmann_simpl}
Karl Bringmann and Bhaskar~Ray Chaudhury.
\newblock Polyline simplification has cubic complexity.
\newblock In {\em 35th International Symposium on Computational Geometry
  ({SoCG} 2019)}, volume 129 of {\em Leibniz International Proceedings in
  Informatics ({LIPIcs})}, pages 18:1--18:16, Dagstuhl, Germany, 2019. Schloss
  Dagstuhl~--~Leibniz-Zentrum f\"{u}r Informatik.
\newblock \href {https://doi.org/10.4230/LIPIcs.SoCG.2019.18}
  {\path{doi:10.4230/LIPIcs.SoCG.2019.18}}.

\bibitem{uncertaincurves}
Kevin Buchin, Chenglin Fan, Maarten L\"{o}ffler, Aleksandr Popov, Benjamin
  Raichel, and Marcel Roeloffzen.
\newblock {F}r\'{e}chet distance for uncertain curves.
\newblock In {\em 47th International Colloquium on Automata, Languages, and
  Programming ({ICALP} 2020)}, volume 168 of {\em LIPIcs}, pages 20:1--20:20,
  Dagstuhl, Germany, 2020. Schloss Dagstuhl--Leibniz-Zentrum f{\"u}r
  Informatik.
\newblock \href {https://doi.org/10.4230/LIPIcs.ICALP.2020.20}
  {\path{doi:10.4230/LIPIcs.ICALP.2020.20}}.

\bibitem{buchin_progressive}
Kevin Buchin, Maximilian Konzack, and Wim Reddingius.
\newblock Progressive simplification of polygonal curves.
\newblock {\em Computational Geometry}, 88:101620:1--101620:18, 2020.
\newblock \href {https://doi.org/10.1016/j.comgeo.2020.101620}
  {\path{doi:10.1016/j.comgeo.2020.101620}}.

\bibitem{buchin:2011}
Kevin Buchin, Maarten L\"{o}ffler, Pat Morin, and Wolfgang Mulzer.
\newblock Preprocessing imprecise points for {D}elaunay triangulation:
  Simplified and extended.
\newblock {\em Algorithmica}, 61(3):674--693, November 2011.
\newblock \href {https://doi.org/10.1007/s00453-010-9430-0}
  {\path{doi:10.1007/s00453-010-9430-0}}.

\bibitem{bbmm_patterns}
Kevin Buchin, Stef Sijben, T.~Jean~Marie Arseneau, and Erik~P. Willems.
\newblock Detecting movement patterns using {B}rownian bridges.
\newblock In {\em Proceedings of the 20th International Conference on Advances
  in Geographic Information Systems}, {SIGSPATIAL} '12, pages 119--128, New
  York, NY, USA, 2012. ACM.
\newblock \href {https://doi.org/10.1145/2424321.2424338}
  {\path{doi:10.1145/2424321.2424338}}.

\bibitem{sijben_dfd}
Maike Buchin and Stef Sijben.
\newblock Discrete {F}r\'{e}chet distance for uncertain points, 2016.
\newblock Presented at EuroCG 2016, Lugano, Switzerland.
\newblock URL:
  \url{http://www.eurocg2016.usi.ch/sites/default/files/paper_72.pdf} [cited
  2019-07-10].

\bibitem{visibility}
Leizhen Cai and Mark Keil.
\newblock Computing visibility information in an inaccurate simple polygon.
\newblock {\em International Journal of Computational Geometry \&
  Applications}, 7:515--538, December 1997.
\newblock \href {https://doi.org/10.1142/S0218195997000326}
  {\path{doi:10.1142/S0218195997000326}}.

\bibitem{chan_chin}
W.~S. Chan and Francis Chin.
\newblock Approximation of polygonal curves with minimum number of line
  segments or minimum error.
\newblock {\em International Journal of Computational Geometry and
  Applications}, 6(1):59--77, 1996.
\newblock \href {https://doi.org/10.1142/S0218195996000058}
  {\path{doi:10.1142/S0218195996000058}}.

\bibitem{douglas_peucker}
David~H. Douglas and Thomas~K. Peucker.
\newblock Algorithms for the reduction of the number of points required to
  represent a digitized line or its caricature.
\newblock {\em Cartographica: The International Journal for Geographic
  Information and Geovisualization}, 10(2):112--122, 1973.
\newblock \href {https://doi.org/10.3138/FM57-6770-U75U-7727}
  {\path{doi:10.3138/FM57-6770-U75U-7727}}.

\bibitem{driemel:2013}
Anne Driemel, Herman Haverkort, Maarten L\"{o}ffler, and Rodrigo~I. Silveira.
\newblock Flow computations on imprecise terrains.
\newblock {\em Journal of Computational Geometry ({JoCG})}, 4(1):38--78, 2013.
\newblock \href {https://doi.org/10.20382/jocg.v4i1a3}
  {\path{doi:10.20382/jocg.v4i1a3}}.

\bibitem{minimising_ply}
William Evans, David Kirkpatrick, Maarten L\"{o}ffler, and Frank Staals.
\newblock Competitive query strategies for minimising the ply of the potential
  locations of moving points.
\newblock In {\em Proceedings of the Twenty-Ninth Annual Symposium on
  Computational Geometry}, {SoCG} '13, pages 155--164, New York, NY, USA, 2013.
  ACM.
\newblock \href {https://doi.org/10.1145/2462356.2462395}
  {\path{doi:10.1145/2462356.2462395}}.

\bibitem{gray:2012}
Chris Gray, Frank Kammer, Maarten L\"{o}ffler, and Rodrigo~I. Silveira.
\newblock Removing local extrema from imprecise terrains.
\newblock {\em Computational Geometry}, 45(7):334--349, 2012.
\newblock \href {https://doi.org/10.1016/j.comgeo.2012.02.002}
  {\path{doi:10.1016/j.comgeo.2012.02.002}}.

\bibitem{gudmundsson}
Joachim Gudmundsson, Jyrki Katajainen, Damian Merrick, Cahya Ong, and Thomas
  Wolle.
\newblock Compressing spatio-temporal trajectories.
\newblock {\em Computational Geometry}, 42(9):825--841, November 2009.
\newblock \href {https://doi.org/10.1016/j.comgeo.2009.02.002}
  {\path{doi:10.1016/j.comgeo.2009.02.002}}.

\bibitem{guibas}
Leonidas~J. Guibas, John~E. Hershberger, Joseph S.~B. Mitchell, and Jack~S.
  Snoeyink.
\newblock Approximating polygons and subdivisions with minimum-link paths.
\newblock {\em International Journal of Computational Geometry \&
  Applications}, 3(4):383--415, 1993.
\newblock \href {https://doi.org/10.1142/S0218195993000257}
  {\path{doi:10.1142/S0218195993000257}}.

\bibitem{imai_iri}
Hiroshi Imai and Masao Iri.
\newblock Computational-geometric methods for polygonal approximations of a
  curve.
\newblock {\em Computer Vision, Graphics, and Image Processing}, 36(1):31--41,
  1986.
\newblock \href {https://doi.org/10.1016/S0734-189X(86)80027-5}
  {\path{doi:10.1016/S0734-189X(86)80027-5}}.

\bibitem{jorgensen:2011}
Allan J{\o}rgensen, Jeff~M. Phillips, and Maarten L\"{o}ffler.
\newblock Geometric computations on indecisive points.
\newblock In {\em Algorithms and Data Structures ({WADS} 2011)}, volume 6844 of
  {\em Lecture Notes in Computer Science}, pages 536--547, Berlin, Germany,
  2011. Springer Berlin Heidelberg.
\newblock \href {https://doi.org/10.1007/978-3-642-22300-6_45}
  {\path{doi:10.1007/978-3-642-22300-6_45}}.

\bibitem{knauer_hausdorff}
Christian Knauer, Maarten L\"{o}ffler, Marc Scherfenberg, and Thomas Wolle.
\newblock The directed {H}ausdorff distance between imprecise point sets.
\newblock {\em Theoretical Computer Science}, 412(32):4173--4186, 2011.
\newblock \href {https://doi.org/10.1016/j.tcs.2011.01.039}
  {\path{doi:10.1016/j.tcs.2011.01.039}}.

\bibitem{loeffler}
Maarten L\"{o}ffler.
\newblock {\em Data Imprecision in Computational Geometry}.
\newblock PhD thesis, Universiteit Utrecht, October 2009.
\newblock URL:
  \url{https://dspace.library.uu.nl/bitstream/handle/1874/36022/loffler.pdf}
  [cited 2019-06-15].

\bibitem{loeffler:2014}
Maarten L\"{o}ffler and Wolfgang Mulzer.
\newblock Unions of onions: Preprocessing imprecise points for fast onion
  decomposition.
\newblock {\em Journal of Computational Geometry ({JoCG})}, 5(1):1--13, 2014.
\newblock \href {https://doi.org/10.20382/jocg.v5i1a1}
  {\path{doi:10.20382/jocg.v5i1a1}}.

\bibitem{prob_loeffler}
Maarten L\"{o}ffler and Jeff~M. Phillips.
\newblock Shape fitting on point sets with probability distributions: {ESA}
  2009.
\newblock In {\em Algorithms}, number 5757 in LNCS, pages 313--324, Berlin,
  Germany, 2009. Springer Berlin Heidelberg.
\newblock \href {http://arxiv.org/abs/0812.2967v1} {\path{arXiv:0812.2967v1}},
  \href {https://doi.org/10.1007/978-3-642-04128-0_29}
  {\path{doi:10.1007/978-3-642-04128-0_29}}.

\bibitem{loeffler:2010}
Maarten L\"{o}ffler and Jack~S. Snoeyink.
\newblock {D}elaunay triangulations of imprecise points in linear time after
  preprocessing.
\newblock {\em Computational Geometry: Theory and Applications},
  43(3):234--242, 2010.
\newblock \href {https://doi.org/10.1016/j.comgeo.2008.12.007}
  {\path{doi:10.1016/j.comgeo.2008.12.007}}.

\bibitem{loeffler:2006}
Maarten L\"{o}ffler and Marc van Kreveld.
\newblock Largest and smallest tours and convex hulls for imprecise points.
\newblock In {\em Algorithm Theory -- {SWAT} 2006}, volume 4059 of {\em Lecture
  Notes in Computer Science}, pages 375--387, Berlin, Germany, 2006. Springer
  Berlin Heidelberg.
\newblock \href {https://doi.org/10.1007/11785293_35}
  {\path{doi:10.1007/11785293_35}}.

\bibitem{melkman}
Avraham Melkman and Joseph O'Rourke.
\newblock On polygonal chain approximation.
\newblock In Godfried~T. Toussaint, editor, {\em Computational Morphology},
  volume~6 of {\em Machine Intelligence and Pattern Recognition}, pages 87--95.
  Elsevier Science Publishers, 1988.
\newblock \href {https://doi.org/10.1016/B978-0-444-70467-2.50012-6}
  {\path{doi:10.1016/B978-0-444-70467-2.50012-6}}.

\bibitem{pei:2007}
Jian Pei, Bin Jiang, Xuemin Lin, and Yidong Yuan.
\newblock Probabilistic skylines on uncertain data.
\newblock In {\em Proceedings of the 33rd International Conference on Very
  Large Data Bases}, pages 15--26. VLDB Endowment, September 2007.
\newblock \href {https://doi.org/10.5555/1325851.1325858}
  {\path{doi:10.5555/1325851.1325858}}.

\bibitem{popov}
Aleksandr Popov.
\newblock Similarity of uncertain trajectories.
\newblock Master's thesis, Eindhoven University of Technology, November 2019.
\newblock URL:
  \url{https://research.tue.nl/en/studentTheses/similarity-of-uncertain-trajectories}
  [cited 2019-12-18].

\bibitem{ramer}
Urs Ramer.
\newblock An iterative procedure for the polygonal approximation of plane
  curves.
\newblock {\em Computer Graphics and Image Processing}, 1(3):244--256, 1972.
\newblock \href {https://doi.org/10.1016/S0146-664X(72)80017-0}
  {\path{doi:10.1016/S0146-664X(72)80017-0}}.

\bibitem{suri:2013}
Subhash Suri, Kevin Verbeek, and Hakan Y{\i}ld{\i}z.
\newblock On the most likely convex hull of uncertain points.
\newblock In {\em Algorithms -- {ESA} 2013}, volume 8125 of {\em Lecture Notes
  in Computer Science}, pages 791--802, Berlin, Germany, 2013. Springer Berlin
  Heidelberg.
\newblock \href {https://doi.org/10.1007/978-3-642-40450-4_67}
  {\path{doi:10.1007/978-3-642-40450-4_67}}.

\bibitem{kerkhof}
Mees van~de Kerkhof, Irina Kostitsyna, Maarten L\"{o}ffler, Majid Mirzanezhad,
  and Carola Wenk.
\newblock Global curve simplification.
\newblock In {\em 27th Annual European Symposium on Algorithms ({ESA} 2019)},
  volume 144 of {\em Leibniz International Proceedings in Informatics
  ({LIPIcs})}, pages 67:1--67:14, Dagstuhl, Germany, 2019. Schloss Dagstuhl --
  Leibniz-Zentrum f\"{u}r Informatik.
\newblock \href {https://doi.org/10.4230/LIPIcs.ESA.2019.67}
  {\path{doi:10.4230/LIPIcs.ESA.2019.67}}.

\bibitem{kreveld:2010}
Marc van Kreveld, Maarten L\"{o}ffler, and Joseph S.~B. Mitchell.
\newblock Preprocessing imprecise points and splitting triangulations.
\newblock {\em {SIAM} Journal on Computing}, 39(7):2990--3000, May 2010.
\newblock \href {https://doi.org/10.1137/090753620}
  {\path{doi:10.1137/090753620}}.

\end{thebibliography}
\end{document}